\newcommand{\tw}{\operatorname{tw}}
\newcommand{\sier}{\text{Sierp\'inski}}
\newcommand{\kn}{\operatorname{Kn}}
\newcommand{\dsg}{\operatorname{Ds}}
\title{On the treewidth of Hanoi graphs} 
\author{David Eppstein}{University of California, Irvine, United States }{eppstein@uci.edu}{}{}
\author{Daniel Frishberg}{University of California, Irvine, United States }{dfrishbe@uci.edu}{https://orcid.org/0000-0002-1861-5439}{}
\author{William Maxwell}{Oregon State University, United States }{maxwellw@oregonstate.edu}{}{}
\authorrunning{D. Eppstein et al.} 
\keywords{Hanoi graph, Treewidth, Graph separators, Kneser graph, Vertex expansion, Haven, Tensor product} 
\begin{document}
\maketitle

\begin{abstract}
The objective of the well-known \emph{Towers of Hanoi} puzzle is to move a set of disks one at a time from one of a set of pegs to another, while keeping the disks sorted on each peg. We propose an adversarial variation in which the first player forbids a set of states in the puzzle, and the second player must then convert one randomly-selected state to another without passing through forbidden states. Analyzing this version raises the question of the \emph{treewidth} of \emph{Hanoi graphs}. We find this number exactly for three-peg puzzles and provide nearly-tight asymptotic bounds for larger numbers of pegs.
\end{abstract}

\section{Introduction}
The \emph{Towers of Hanoi} puzzle is very well known (for a comprehensive treatment see \cite{mythsmaths}), but it loses its fun once its player learns the strategy. It has some number $n$ of disks of distinct sizes, each with a central hole allowing it to be stacked on any of three pegs. The disks start all stacked on a single peg, sorted from largest at the bottom to smallest at the top. They must be moved one at a time until they are all on another peg, while at all times keeping the disks in sorted order on each peg. The optimal strategy is easy to follow: alternate between moving the smallest disk to a peg that was not its previous location, and moving another disk (the only one that can be moved). Once one learns how to do this, and that the strategy takes $2^n-1$ moves to execute~\cite{3pegalg}, it becomes tedious rather than fun.

The puzzle can be modified in several ways to make it more of an intellectual challenge and less of an exercise in not losing one's place. One of the most commonly studied variations involves using some number $p$ of pegs that may be larger than three. Of course, one can ignore the extra pegs, but using them allows shorter solutions. An optimal solution for four pegs was given by Bousch in 2014~\cite{Bousch2014}, but the best solution for larger numbers of pegs remains open.
The Frame--Stewart algorithm solves these cases, but it is not known if it is optimal~\cite{FrameStewart}. The length of an optimal solution, for starting and ending positions of the disks chosen to make this solution as long as possible, can be modeled graph-theoretically using a graph called the \emph{Hanoi graph}, which we denote $H_p^n$.
This graph is formed by constructing a vertex for each configuration of the game, and connecting two vertices with an edge when their configurations are connected by one legal move. The number of moves between the two farthest-apart positions is then the diameter of this graph. For three pegs, the diameter of $H_3^n$ is $2^n-1$ (the traditional starting and ending positions are the farthest apart) but for $p>3$ the diameter of $H_p^n$ is unknown~\cite{Cartesian}.

In this paper, we consider a different way of making the puzzle more difficult, by making it adversarial. In our version of the game, the first of two players selects a predetermined number of \emph{forbidden positions}, that the second player cannot pass through. Then, the second player must solve a puzzle using the remaining positions. If that were all, then the first player could win by forbidding only a very small number of positions, the $p-1$ positions one move away from the start position. To make the first player work harder, after the first player chooses the forbidden positions, we choose the start and end position randomly from among the positions in the game. We ask: How many positions must the first player forbid, in order to make this a fair game, one where both players have equal chances of being able to win?

We can model this problem graph-theoretically, as asking for the smallest number of vertices to remove from a Hanoi graph in order for the number of pairs of remaining vertices belonging within the same component as each other to be half the total number of pairs of vertices. The answer to the problem lies between the minimum size of a \emph{balanced vertex separator} (\autoref{game-from-separator}) and (up to a constant factor of three) the minimum order of a \emph{recursive balanced vertex separator}; the latter is equivalent, up to constant factors, to asking for the \emph{treewidth} of $H_p^n$.
(Technically, the treewidth can be larger than the recursive separator order by a logarithmic factor when this order is constant, but both are within constant factors of each other when the order is polynomial.)
Treewidth is of interest to computer scientists as many NP-hard graph problems become fixed-parameter tractable on graphs with bounded treewidth \cite{FPTsurvey}.

\subsection{New results and prior work}
We conjecture that the treewidth of $H_p^n$ is $\Theta((p-2)^n)$.
For $p>3$ this bound is exponential, and we make progress towards this conjecture by proving that the treewidth is within a polynomial factor of this bound. More precisely we show an asymptotic upper bound of $O((p-2)^n)$ and an asymptotic lower bound of $\Omega(n^{-(p-1)/2} \cdot (p-2)^n)$.
We increase the lower bound to $\Omega(\frac{2^n}{n})$ when $p = 4$.
Moreover, we find the exact (constant) treewidth of $H_3^n$ and of the closely-related $\sier{}$ graphs. Our results provide an answer to our motivating question on sizes of forbidden sets of positions, up to polynomial factors for four or more pegs and exactly for three pegs.

As a byproduct of our proof techniques, we observe a nearly linear asymptotic lower bound on the treewidth of the \emph{Kneser graph} (Corollary~\ref{cor:kntw}). Harvey and Wood~\cite{Harvey2014} showed a previous exact result for the treewidth of $\kn(n, k)$ when $n$ is at least quadratic in $k$.
Another byproduct of our proof techniques gives a new lower bound on the treewidth of the \emph{tensor product} $G \times H$ of two graphs $G$ and $H$, when $H$ is not bipartite. Eppstein and Havvaei \cite{eppstein_et_al:LIPIcs:2019:10217} gave an upper bound on the treewidth of $G \times H$; Brev\v{s}ar and Spacapan \cite{brevspac} gave an analogous lower bound for edge connectivity; Kozawa et al. \cite{kozawa} gave lower bounds for the treewidth of the strong product and Cartesian product of graphs.

\section{Preliminaries}
\subsection{Hanoi graphs}
Label the $n$ disks of the Towers of Hanoi, in order of increasing size, as  $d_1, \dots, d_n$. If disks $d_i$ and $d_j$ are on the same peg, and $i < j$, then $d_j$ is constrained to be below $d_i$. A legal move in the game consists of moving the top (smallest) disk on some peg $A$ to another peg $B$, while preserving the constraint. At the beginning of the game, all $n$ disks are on the first peg. The objective of the game is to obtain, through some sequence of legal moves, a state in which all $n$ disks are on the last peg. Let $p$ be the number of pegs. Traditionally, $p = 3$.

Formally, a \emph{configuration} of the $p$-peg, $n$-disk Towers of Hanoi game is an $n$-tuple $(p_1, p_2, \dots, p_n)$ where $p_i \in \{1, 2, \dots, p\}$, describing the peg for each disk $d_i$.
We say two configurations $(p_1, p_2, \dots, p_n)$ and $(p'_1, p'_2, \dots, p'_n)$ are \emph{compatible} if a move from one configuration to the other is allowed. This happens exactly when the two configurations differ only in the value of a single coefficient $p_i$, for which $d_i$ is the smallest disk having either of the two differing values.
We call a configuration with each disk on the same peg a \emph{perfect state}.
The \emph{Hanoi graph} $H_p^n$ is a graph whose vertices are the configurations of the $n$-disk, $p$-peg Towers of Hanoi game, with an edge for each compatible pair of configurations. It has $p^n$ vertices and $\frac{1}{2} {p \choose 2} (p^n - (p - 2)^n)$ edges~\cite{CountingHanoi}.

\subsection{Recursive balanced separators, treewidth, and havens}
In this section we give a brief discussion of the concepts of recursive balanced separators, treewidth, and havens. 
Given a graph $G=(V,E)$ a \emph{vertex separator} is a subset $X \subseteq V$ such that $G \setminus V$ consists of two disjoint sets of vertices $A$ and $B$ with $A \cup B = V \setminus X$ and for all $a \in A$, $b \in B$ there is no edge $(a, b)$ in the graph $G \setminus X$.
Further, given a constant $c$ with $\frac{1}{2} \leq c < 1$, we call $X$ a \emph{balanced vertex separator} if $(1-c)|V| \leq |A| \leq \frac{|V|}{2}$ and $\frac{|V|}{2} \leq |B| \leq c|V|$. When this holds we call $X$ a \emph{c-separator}.
We say that $G$ has a \emph{recursive balanced separator} of order $s$, where $s:\mathbb{N}\rightarrow\mathbb{N}$ is a nondecreasing function, whenever either $|V| \leq 1$, or we can find a balanced separator of size $s(|V|)$ for $G$, and the resulting subgraphs $A$ and $B$ have recursive balanced separators of order $s$ respectively. We abuse notation and refer to $s(|V|)$ as $s(G)$. 

A \emph{tree decomposition} of a graph $G$ is a tree $T$ whose nodes are sets of vertices in $G$ called \emph{bags}, such that the following conditions hold.
\begin{itemize}
\item If two vertices are adjacent, then they share at least one bag.
\item If a vertex $v$ is in two bags $A$ and $B$, then $v$ is in every bag on the path from $A$ to $B$ in $T$.
\item Every vertex in $V(G)$ is in some bag.
\end{itemize}
The \emph{width} of a tree decomposition $T$ is one less than the maximum size of a bag in $T$. The \emph{treewidth} of a graph $G$, denoted $\tw(G)$, is the minimum width over all tree decompositions of $G$.
The bags in the tree decomposition $T$ induce vertex separators in $G$. Moreover, we can use the tree decomposition to find a recursive balanced separator for $G$.
Hence, the treewidth of $G$ is a measure of the minimum order of a recursive balanced separator for $G$.
The following folklore lemma relates the order of a recursive balanced separator to the treewidth of a graph; see \cite{erickson} and \cite[Lemma 6.6]{NesOss-S-12}.

\begin{lemma}\label{lem:twsep}
Let $G$ be an $N$-vertex graph. If $t = \tw(G)$, then with respect to every constant $\frac{1}{2} \leq c < 1$, $G$ has a recursive balanced separator of order $s(N') = t + 1$ for all $1 \leq N' \leq N$. On the other hand, if $G$ has recursive balanced separator of order $t$, where $t = \Omega(N^d)$ for some constant $d > 0$, then $G$ has treewidth $O(t)$.
\end{lemma}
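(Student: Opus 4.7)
The plan is to show that any tree decomposition of width $t$ yields a bag usable as a balanced separator, and then to recurse. Given a decomposition $(T,\{B_x\})$ of width $t$, root $T$ arbitrarily and consider, for each node $x$, the partition of $V(G) \setminus B_x$ induced by the subtrees of $T$ hanging off $x$. A standard centroid-style descent---starting at the root and always moving into the child-subtree whose piece has more than $N/2$ vertices, halting when no such child exists---reaches a node $x^\ast$ for which every such piece has at most $N/2$ vertices; equivalently, every component of $G \setminus B_{x^\ast}$ has size at most $N/2$. A bin-packing of these components into two groups then produces a $c$-separator of size $|B_{x^\ast}| \leq t+1$ for any fixed $c \in [\frac{1}{2},1)$. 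The restriction of $(T,\{B_x\})$ to either side is a tree decomposition of that side of width at most $t$, so the recursion proceeds without any loss.

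\textbf{Reverse direction: recursive balanced separator to treewidth.}
Here I would build a tree decomposition directly from the separator recursion. At the top, let $S$ be the separator of size at most $s(N)$ dividing $V(G)$ into $A$ and $B$. Recursively form tree decompositions $T_A$ and $T_B$ of $G[A]$ and $G[B]$, attach them as children of a new root whose bag is $S$, and then augment every bag in $T_A$ and $T_B$ by adjoining all of $S$. The three tree-decomposition axioms then follow inductively: edges inside $A$ or $B$ are already covered within $T_A$ or $T_B$; edges within $S$ lie in the root bag; for an edge $(a,s)$ with $a \in A$ and $s \in S$, any bag of $T_A$ containing $a$ now also contains $s$; and each $s \in S$ occupies the connected subtree consisting of the root together with both augmented child subtrees.

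\textbf{Width bound and main obstacle.}
Along each root-to-leaf path in this tree decomposition, the active subproblem shrinks by a factor of at least $c < 1$ per level, so the separator attached at depth $i$ has size at most $s(c^i N)$. A leaf bag therefore has size at most $\sum_{i \geq 0} s(c^i N)$, and the crux is to show that this sum is $O(s(N)) = O(t)$. The polynomial lower bound $t = \Omega(N^d)$ enters exactly here: combined with the polynomial growth of $s$ on the subproblem sizes arising in the recursion, it yields $s(c^i N) \leq c^{id} \cdot O(s(N))$, turning the sum into a convergent geometric series with total $O(s(N))$. This summation is the step I expect to require the most care, since without the polynomial lower bound one would only be able to use the trivial $s(c^i N) \leq s(N)$, giving treewidth $O(t \log N)$ instead of $O(t)$---precisely the logarithmic slack flagged in the parenthetical remark just above the lemma.
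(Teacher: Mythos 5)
The paper never proves this lemma: it is stated as folklore and delegated to the cited references, so there is no in-paper argument to compare yours against. Your write-up is the standard folklore proof, and its overall structure is sound: centroid descent in a width-$t$ decomposition to find a bag whose complementary components all have at most $N/2$ vertices, grouping those components into two sides, and recursing using the fact that treewidth is monotone under subgraphs; in the other direction, the separator-recursion tree with all ancestor separators adjoined to each bag, with the width bounded by $\sum_{i\ge 0} s(c^iN)$ and the polynomial hypothesis used to make this a geometric series. You also correctly identify where the $\Omega(N^d)$ hypothesis is consumed and why dropping it costs a $\log N$ factor, which matches the paper's parenthetical remark.

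Two caveats, both of which are really imprecisions inherited from the lemma statement rather than errors you introduced, but which you assert more strongly than your argument supports. First, packing components each of size at most $N/2$ into two groups does \emph{not} give a $c$-separator ``for any fixed $c\in[\frac12,1)$'': three components of size $(N-|X|)/3$ force the larger side up to roughly $\frac{2}{3}N$, so the grouping argument only directly yields $c\ge \frac23$. (This suffices for every use of the forward direction in the paper, e.g.\ $c=1/\sqrt2$ in Lemma~\ref{game-from-separator}, but you should not claim the full range.) Second, the inequality $s(c^iN)\le c^{id}\cdot O(s(N))$ requires $s$ to be polynomially bounded \emph{above} as well as below on the subproblem sizes --- i.e.\ $s(m)=\Theta(m^d)$, or at least $s(cm)\le\gamma\, s(m)$ for some $\gamma<1$ --- whereas the stated hypothesis supplies only the lower bound $t=\Omega(N^d)$ (a constant function $s\equiv N^d$ satisfies the hypothesis literally but defeats the summation). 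You smuggle the upper bound in via the phrase ``polynomial growth of $s$''; this is the intended reading, and it is what holds in the paper's application in Theorem~\ref{thm:ubound} where $s(m)=\Theta(m^{\log_p(p-2)})$, but it should be made an explicit assumption. With those two adjustments your proof is correct.
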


Returning to our motivating game, in which one player forbids the use of a designated set of states in the state space of a puzzle and the other player attempts to connect two randomly chosen states by a path, we see that a fair number of states to forbid is controlled by the size of a recursive balanced separator.
We formalize this in the following lemma:

\begin{lemma}
\label{game-from-separator}
Given a graph $G$, let $f(G)$ be the minimum number of vertices that can be removed from $f$ so
that, if two random vertices of $G$ are chosen, the probability that they are not removed and have a path between them is at most $1/2$. Let $c = 1/\sqrt{2}$, and let $r(G)$ be the minimum size of a $c$-separator (not necessarily recursive) for $G$. Let $s$ be the minimum order of a recursive $c$-separator for $G$. Then $r(G)\le f(G)\le 3s(G)$.
\end{lemma}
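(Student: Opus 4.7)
The plan is to prove the two inequalities separately; both exploit the identity $c^2 = (1/\sqrt{2})^2 = 1/2$, which converts the sum-of-squares formulation of $f(G)$ into the largest-component characterization of a $c$-separator. Throughout, let $N = |V(G)|$ and, for any removed set $X$, write $c_1, c_2, \ldots$ for the sizes of the components of $G \setminus X$, so that the probability in the definition of $f(G)$ equals $\sum_i c_i^2 / N^2$.

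For the upper bound $f(G) \le 3 s(G)$, I would apply the recursive balanced $c$-separator twice. The first application yields a separator $X_0$ with $|X_0| \le s(G)$, partitioning $V \setminus X_0$ into subgraphs $A_0, B_0$ of size at most $cN$. By the definition of a recursive separator of order $s$, each of $A_0, B_0$ inherits its own recursive balanced separator of order $s$, so I further remove sets $X_A \subseteq A_0$ and $X_B \subseteq B_0$ with $|X_A|, |X_B| \le s(G)$, each splitting its respective subgraph into pieces of size at most $c^2 N = N/2$. The union $X = X_0 \cup X_A \cup X_B$ has $|X| \le 3 s(G)$, and every component of $G \setminus X$ has size at most $N/2$. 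Therefore $\sum_i c_i^2 \le (N/2)\sum_i c_i \le N^2/2$, so $f(G) \le |X| \le 3 s(G)$.

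For the lower bound $r(G) \le f(G)$, fix an optimal removed set $X$ with $|X| = f(G)$. Then $\sum_i c_i^2 \le N^2/2$ implies $\max_i c_i \le \sqrt{\sum_i c_i^2} \le N/\sqrt{2} = cN$, so every component of $G \setminus X$ has size at most $cN$. I would then partition the components into sides $A, B$ satisfying $|A| \le N/2 \le |B| \le cN$ (and $|A| \ge (1-c)N$): if some component has size at least $N/2$, place it alone in $B$ with the remaining components forming $A$; otherwise every component is strictly below $N/2$, and a greedy LPT-style sweep (largest component first, into the currently smaller side) keeps both sides below $cN$ by Graham's $7/6$ bound applied with makespan lower-bounded by $N/2$. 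This exhibits $X$ as a $c$-separator of size $f(G)$, proving $r(G) \le f(G)$.

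The step I expect to require the most care is the partition argument in the lower bound, in particular verifying the lower-threshold constraint $|A| \ge (1-c)N$ from the paper's definition (rather than merely the upper constraint $|B| \le cN$) when $|X|$ sits near its feasibility limit $(2c-1)N$; the upper-bound direction is, by comparison, a clean two-level recursion on the separator.
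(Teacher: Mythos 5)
Your proposal is correct and follows essentially the same route as the paper: the upper bound uses the top-level separator plus the two second-level separators so that every remaining component has size at most $|V(G)|/2$, and the lower bound uses the fact that no component can exceed $|V(G)|/\sqrt{2}$ followed by a case split on whether some component reaches $|V(G)|/2$, with a greedy combination of small components otherwise (the paper asserts the $2|V(G)|/3$ bound directly where you invoke Graham's LPT bound). The lower-threshold constraint $|A|\geq(1-c)|V(G)|$ that you flag is also glossed over by the paper's own proof, which only ever bounds the size of the larger side, so this is a looseness in the paper's definition of a balanced separator rather than a gap in your argument relative to it.
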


\begin{proof}
If we remove a vertex set $X$ with $|X| = f(G)$, leaving probability less than $1/2$ that two randomly-chosen vertices from $G$ are connected, then the remaining subgraph cannot contain any connected component larger than $|V(G)|/\sqrt{2}$. If it contains any connected component of size at least $|V(G)|/2$, then $f$ separates that subgraph from the remaining vertices, and otherwise the remaining small subgraphs can be combined to give a separation between two subgraphs whose largest size is at most $2|V(G)|/3$, better than $c$. Therefore, $r(G)\le f(G)$.

To show that $f(G)\le 3s(G)$, find a recursive $c$-separator for $G$ of order $s$; the separator $X$ has the following three separators as subsets: a $c$-separator $X$ for $G$ resulting in two separated subgraphs, and $c$-separators $Y$ and $Z$ for each of the two separated subgraphs. $|X| + |Y| + |Z| \leq 3s(G)$. Removing $X \cup Y \cup Z$ from $G$ partitions the rest of $G$ into subgraphs of size at most $|V(G)|/2$. No matter which of these subgraphs one of the randomly chosen two vertices belongs to, the probability that the other vertex belongs to the same component will be at most $1/2$.
\end{proof}

Some of our results will bound the treewidth of graphs using \emph{havens}, a mathematical formalization of an escape strategy for a robber in cop-and-robber pursuit-evasion games.
In these games, a set of cops and a single robber are moving around on a given graph $G$.
Initially the robber is placed at any vertex of the graphs, and none of the cops has been placed.
In any move of the game, one of the cops can be removed from the graph, or a cop that has already been removed can be placed on any vertex of the graph. However, before the cop is placed,
the robber (knowing where the cop will be placed) is allowed to move along any path in the graph that is free of other cops. The goal of the cops is to place a cop on the same vertex as the robber while simultaneously blocking all escape routes from that vertex, and the goal of the robber is to evade the cops forever. In these games, a \emph{haven of order} $k$ describes a strategy by which the robber can perpetually evade $k$ cops, by specifying where the robber should move for each possible move by the cops. It is defined as a function $\phi$, mapping each set of vertices $X \subseteq V$ with $|X| \leq k$ to a nonempty connected component in $G \setminus X$, such that whenever $X_1 \subseteq X_2$, $\phi(X_2) \subseteq \phi(X_1)$.
A robber following this strategy will move to any vertex of $\phi(X)$, where $X$ denotes the set of vertices to be occupied by the cops at the end of the move. The mathematical properties of havens ensure that the robber can always reach one of these vertices by a cop-free path.

Returning again to our adversarial version of the Towers of Hanoi puzzle, the cops-and-robber game is equivalent to a game in which the first player attempts to pin the second player to a state from which no legal move to any non-forbidden state is possible. The placement (or removal) of a cop is equivalent to the first player designating (or de-designating) a state as forbidden; an evasion strategy for a robber is equivalent to the existence of a legal move for the second player.

The existence of a haven in $G$ yields a lower bound on the treewidth of $G$ via the following lemma.

\begin{lemma}[Seymour and Thomas \cite{Seymour1993}]
A graph $G$ has a haven of order greater than or equal to $k$ if and only if $\tw(G) \geq k - 1$.
\label{lem:twhaven}
\end{lemma}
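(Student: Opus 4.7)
The plan is to prove the two directions separately. For the easy direction, which shows that a haven of order $\geq k$ forces $\tw(G) \geq k-1$, I would argue by contradiction: assume a tree decomposition $(T, \{X_t\}_{t \in V(T)})$ of width at most $k-2$, so every bag and every adhesion $X_s \cap X_t$ has size at most $k-1$. For each edge $st$ of $T$ the component $\phi(X_s \cap X_t)$ is defined, and by the standard property of tree decompositions it lies entirely in the union of bags on one side of the edge $st$; orient $st$ toward that side. In a finite tree, repeatedly following outgoing edges must terminate at a node $t^*$ that has no outgoing edge, i.e., every incident edge is oriented toward $t^*$. Haven monotonicity then gives $\phi(X_{t^*}) \subseteq \phi(X_s \cap X_{t^*})$ for every neighbor $s$ of $t^*$, so $\phi(X_{t^*})$ lies on the $t^*$-side of every adhesion. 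Intersecting these containments over all neighbors of $t^*$ yields $\phi(X_{t^*}) \subseteq X_{t^*}$, which contradicts the fact that $\phi(X_{t^*})$ is a nonempty component of $G \setminus X_{t^*}$.

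The converse direction -- that $\tw(G) \geq k-1$ implies a haven of order $\geq k$ -- is the main obstacle. I would prove the contrapositive: assuming $G$ admits no haven of order $k$, construct a tree decomposition of width $\leq k-2$. The plan is a recursive construction whose subproblems are pairs $(B, U)$ with $|B| \leq k-1$ and $U$ either equal to $V$ or a single connected component of $G \setminus B$. At each step I attempt to choose a separator $X$ with $B \subseteq X \subseteq B \cup U$ and $|X| \leq k-1$ such that every component of $U \setminus X$ is strictly smaller than $U$; I then place $X$ in the current bag and recurse on each component, gluing the returned subtrees at the new bag.

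The technical heart is showing that such a separator always exists when no haven of order $k$ is available. If, to the contrary, some subproblem $(B, U)$ admitted no suitable $X$, then for every admissible $X$ there would be a uniquely determined ``dominant'' component of $U \setminus X$; I would use this to define $\phi(Y)$ for each $Y \subseteq V$ with $|Y| < k$ by pulling back the dominant component through a canonical extension of $Y$ inside the current piece. The main verification is the monotonicity axiom $\phi(Y_2) \subseteq \phi(Y_1)$ whenever $Y_1 \subseteq Y_2$, which should follow because enlarging the separator only refines the collection of components while preserving dominance. The resulting $\phi$ would then be a haven of order $k$, contradicting the hypothesis and forcing the recursion to succeed. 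I expect the delicate part to be verifying this monotonicity under all possible ways of enlarging $Y$, particularly ensuring that the canonical extension is chosen coherently across nested sets; this is precisely the subtle content of the Seymour--Thomas argument, while the haven-to-treewidth direction reduces to the elementary edge-orientation argument sketched above.
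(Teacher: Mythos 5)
The paper does not prove this lemma at all---it imports it from Seymour and Thomas as a black box---so there is no in-paper argument to compare against; the question is only whether your proposed proof stands on its own. Your first direction (a haven of order $\ge k$ forces $\tw(G)\ge k-1$) is correct and is the standard argument: orient each tree edge toward the side containing $\phi(X_s\cap X_t)$ (well-defined because the two sides of an edge meet exactly in the adhesion), find a sink $t^*$, and use monotonicity plus the interpolation property to conclude $\phi(X_{t^*})\subseteq X_{t^*}$, a contradiction. Only the one-node tree needs a separate word.

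The converse is where the theorem actually lives, and your sketch has a genuine gap---two, in fact. First, the recursion as you set it up stalls: with the invariants $|B|\le k-1$, $B\subseteq X\subseteq B\cup U$, and $|X|\le k-1$, as soon as $|B|=k-1$ the only admissible separator is $X=B$, and then $U\setminus X=U$ is a single component that is not strictly smaller than $U$, so the recursion fails at that node whether or not $G$ has a haven of order $k$. The correct bookkeeping (as in Seymour--Thomas, or in the usual bramble-duality proofs) keeps the adhesion strictly smaller than the permitted bag size and measures progress not by $|U|$ but by which haven/bramble elements remain reachable; getting these invariants right is not cosmetic, it is what makes the induction close. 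Second, even with a repaired recursion, the heart of the proof---defining $\phi(Y)$ for \emph{every} $Y\subseteq V$ with $|Y|<k$ from the failure of a single subproblem, showing a ``dominant'' component exists and is unique for each such $Y$, and verifying $\phi(Y_2)\subseteq\phi(Y_1)$ for $Y_1\subseteq Y_2$---is exactly what you defer with ``this should follow'' and ``I expect the delicate part to be.'' As written, the proposal establishes the easy half and correctly identifies, but does not supply, the hard half.
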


\section{Three pegs}
In this section we show that $\tw(H_3^n) \leq 4$ for all $n \geq 1$.
We prove this by relating the three-peg Towers of Hanoi game and the \sier{} triangle graphs, which we denote $S_n$. 
$S_n$ has treewidth at least 3 for all $n$, as it contains a triangle, and (\autoref{lem:siertw}) it equals 4 for $n>4$.
Additionally, each \sier{} triangle graph contains a smaller three-peg Hanoi graph as a minor, and vice versa.
From this it will follow that $\tw(H_3^n) = 4$ for all sufficiently large $n$.
For completeness we include a more detailed proof of the bounds on $\tw(S_n)$.

We define the \sier{} triangle graphs, along with a planar embedding of them, inductively. 
The planar embedding will allow us to see the geometric similarity between the \sier{} graphs and the three-peg Hanoi graphs.
The first \sier{} triangle, $S_1$, is isomorphic to $K_3$ with a planar embedding of an equilateral triangle with unit length sides. The vertices of the triangle coincide with the vertices of $K_3$.

Inductively, we assume $S_{n-1}$ has a planar embedding whose outer face is embedded geometrically as an equilateral triangle.
We label the vertices on the outer face of the triangle $v_\ell, v_r, v_t$ which are the left, right, and top vertices, respectively.
To construct $S_n$ from $S_{n-1}$ we take three copies of $S_{n-1}$ labeled $S_{n-1}^L, S_{n-1}^R, S_{n-1}^T$ for the left, right, and top triangles and make the following vertex identifications.
\begin{enumerate}
\item Identify $v_\ell$ in $S_{n-1}^R$ with $v_r$ in $S_{n-1}^L$, and call the resulting vertex $v_{\ell r}$.
\item Identify $v_t$ in $S_{n-1}^L$ with $v_\ell$ in $S_{n-1}^T$, and call the resulting vertex $v_{\ell t}$.
\item Identify $v_t$ in $S_{n-1}^R$ with $v_r$ in $S_{n-1}^T$, and call the resulting vertex $v_{rt}$.
\end{enumerate}

The resulting graph has a planar embedding whose outer face can again be embedded as a subdivided equilateral triangle. In $S_n$ the left, right, and top vertices of the outer face are contained in $S_{n-1}^L, S_{n-1}^R$, and $S_{n-1}^T$ respectively. As before we denote them as $v_\ell, v_r,$ and $v_t$.
Note that we can recursively decompose $S_n$ into a triangle and a trapezoid, from which the trapezoid further decomposes into two additional triangles. (Here, we only consider trapezoids whose long side is horizontal.) This recursive decomposition leads to the construction of a tree decomposition of $S_n$.
The six distinguished vertices $v_\ell, v_r, v_t, v_{\ell r}, v_{\ell t},$ and $v_{rt}$ define the bags of the tree decomposition at each level.
The set $\{v_t, v_{\ell t}, v_{r t}, v_\ell, v_r\}$ lies on the perimeter of a triangle in this decomposition. We call a bag in the tree decomposition consisting of these vertices a \emph{triangular bag}.
On the other hand, the set $\{v_{\ell t}, v_{rt}, v_\ell, v_{\ell r}, v_r\}$ lies on the perimeter of a trapezoid in the decomposition.
We call a bag in the tree decomposition consisting of these vertices a \emph{trapezoidal bag}.
With this definition we are now ready give a proof of the fact that $\tw(S_n) = 4$ for all $n > 4$. 
\begin{figure}[!htb]
    \centering
    \includegraphics[scale=0.25]{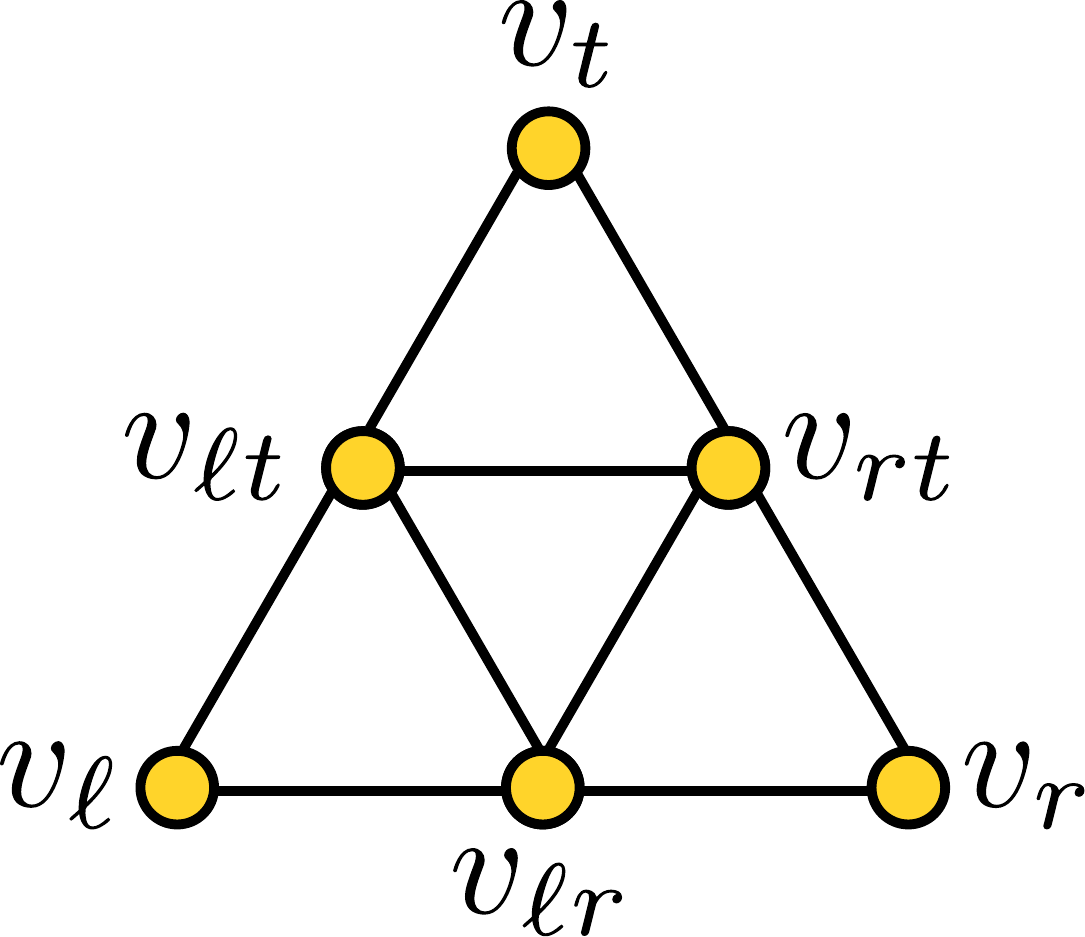}\quad\quad
    \includegraphics[scale=0.25]{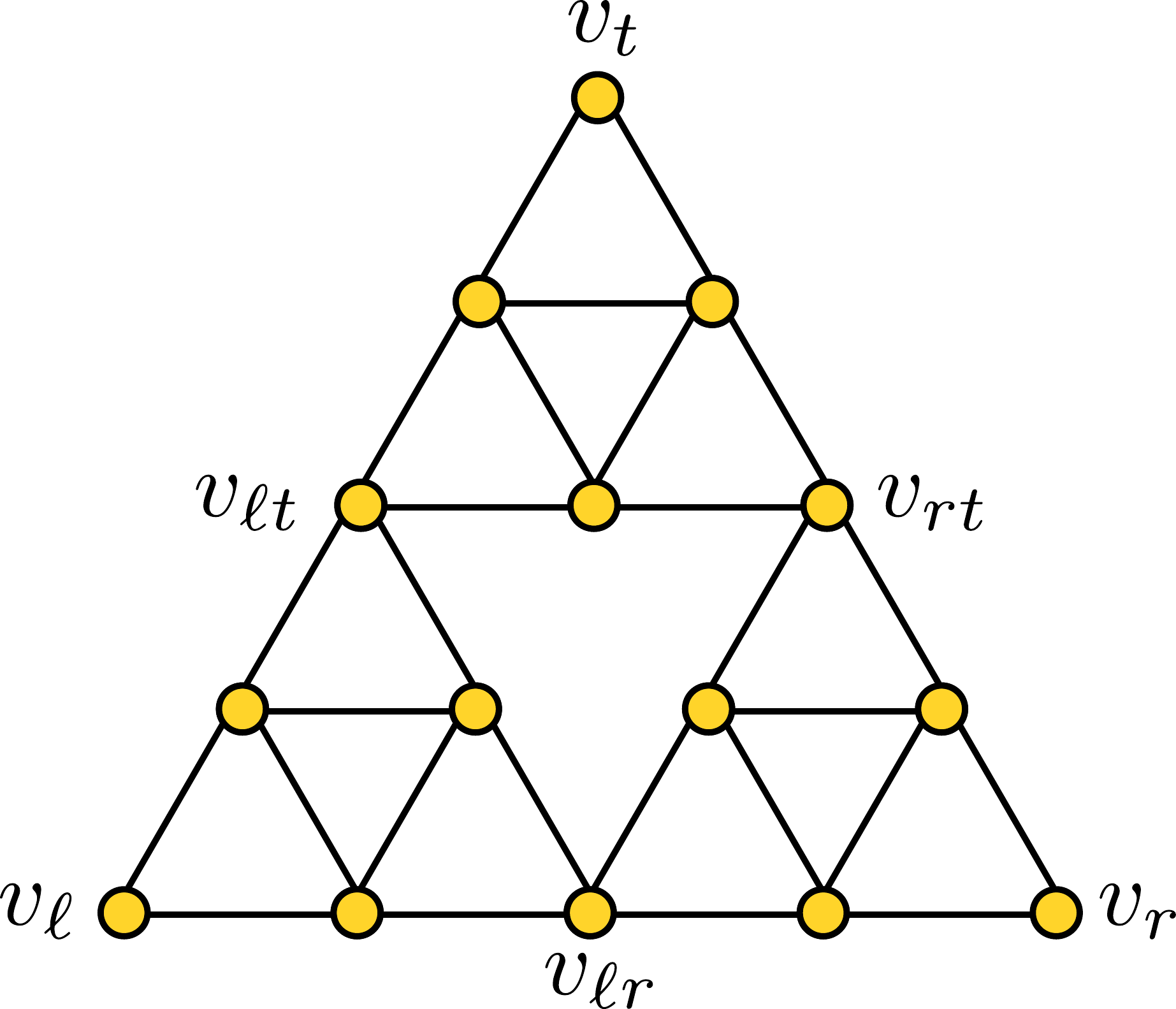}
    \caption{The \sier{} graphs $S_2$ and $S_3$.}
    \label{fig:sier}
\end{figure}

\begin{lemma}
\label{lem:siertw}
The treewidth of $S_n$ is equal to $4$ for all $n > 4$.
\end{lemma}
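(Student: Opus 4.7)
The plan is to establish both $\tw(S_n) \le 4$ for all $n \ge 1$ and $\tw(S_n) \ge 4$ for all $n > 4$.

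For the upper bound, I would construct a tree decomposition of $S_n$ inductively, following the recursive planar decomposition of $S_n$ into triangular and trapezoidal regions already described. The base case $n = 1$ uses the single bag $\{v_\ell, v_r, v_t\}$ of size three. For $n \ge 2$, I take tree decompositions $T^L$, $T^R$, $T^T$ of the three sub-copies $S_{n-1}^L$, $S_{n-1}^R$, $S_{n-1}^T$, maintaining the inductive invariant that each of them contains a bag holding that sub-copy's three corner vertices. I then attach a triangular bag $\{v_\ell, v_r, v_t, v_{\ell t}, v_{rt}\}$ and a trapezoidal bag $\{v_\ell, v_r, v_{\ell r}, v_{\ell t}, v_{rt}\}$ as two new tree nodes, connected by a tree edge, and link each sub-decomposition's corner bag to whichever new bag contains the corresponding shared corner vertices. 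Each new bag has size $5$, giving width $4$. The tree-decomposition axioms follow easily: every edge of $S_n$ lies inside one of the three sub-copies (and hence in some bag by induction), and every vertex's bag-set forms a subtree via the two new bags at the identified corners.

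For the lower bound, the plan is to show that $S_n$ contains a minor of treewidth at least $4$ whenever $n > 4$. Since $S_n$ is planar, this minor cannot be $K_5$; I would therefore target the octahedron $K_{2,2,2}$, which has treewidth $4$ and is one of the forbidden minors for the class $\{G : \tw(G) \le 3\}$. Using the self-similar structure of $S_n$, I would exhibit six vertex-disjoint connected branch sets realizing the octahedral adjacency pattern (three pairs of ``antipodal'' non-adjacent branch sets, with all other pairs joined by an edge of $S_n$). The threshold $n > 4$ reflects the need for enough nested levels of self-similarity to fit six appropriately disposed branch sets. Alternatively, I could invoke \autoref{lem:twhaven} and directly construct a haven of order $5$: for every $X \subseteq V(S_n)$ with $|X| \le 4$, choose $\phi(X)$ to be the connected component of $S_n \setminus X$ containing a canonically selected sub-$S_k$-copy that avoids $X$ (such a sub-copy exists because there are $3^k$ sub-copies at depth $k$, so for $k$ with $3^k > 4$ at least one is disjoint from $X$).

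The lower bound is the main obstacle. Either approach requires care within the fractal structure: the octahedron-minor approach needs an explicit construction of six branch sets whose adjacencies match $K_{2,2,2}$ exactly, while the haven approach needs the canonical selection of a target sub-copy to respect the monotonicity condition $\phi(X_2) \subseteq \phi(X_1)$ whenever $X_1 \subseteq X_2$. The latter is subtle because, as $X$ grows, the chosen target sub-copy may change, yet the containing component of $\phi$ must only shrink; one must therefore order the sub-copies consistently (for instance, lexicographically by address in the recursion tree) and verify that the resulting components nest correctly.
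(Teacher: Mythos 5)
Your proposal matches the paper's proof in both halves: the upper bound via a recursive tree decomposition whose bags are the five-vertex triangular and trapezoidal boundary sets, and the lower bound via an octahedron ($K_{2,2,2}$) minor, which is a forbidden minor for treewidth at most $3$. The only piece you leave open---explicitly exhibiting the six branch sets---is exactly what the paper supplies, by displaying a subdivision of the octahedron inside $S_5$ (Figure~\ref{fig:octa}), so your octahedron route is the intended one and the haven alternative is unnecessary.
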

\begin{proof}
To prove the upper bound we construct a tree decomposition of $S_n$ out of the triangular and trapezoidal bags defined above.
We take the triangular bag in $S_n$ to be the root of the tree decomposition, and recursively decompose $S_n$ into its triangular and trapezoidal subgraphs.
A bag at depth $k$ is either a triangular or trapezoidal bag from an $S_{n-k}$ subgraph.
The children of a trapezoidal bag at depth $k$ are the triangular bags corresponding to the two copies of $S_{k-1}$ that make up the trapezoid.
The children of a triangular bag at depth $k$ are a trapezoidal and a triangular bag corresponding to the decomposition of $S_k$ into a trapezoid and triangle.
Every edge of $S_n$ is contained in some triangle or trapezoid, and every triangle and trapezoid appear as a bag in the tree decomposition.
For any vertex $v$ in $S_n$ if $v \in B_1, B_2$ where $B_1$ and $B_2$ are distinct bags there are two cases to consider.
If $B_1$ is an ancestor of $B_2$ then $v$, by the construction of the bags, must be in every triangular or trapezoidal bag lying in between them.
If there is no ancestry relationship, then $v$ must lie in the intersection of the shapes defined by $B_1$ and $B_2$. Hence, there is some triangle or trapezoid containing both $B_1$ and $B_2$ which is their least common ancestor in the tree decomposition.
See Figure~\ref{fig:tree_decomp} for an illustration on $S_3$.

To prove the lower bound it is sufficient to show that $S_n$ contains a subdivision of the octahedron graph when $n > 4$. The octahedron graph is a forbidden minor for treewidth 3 graphs \cite{Arnborg1990}. See Figure \ref{fig:octa} for an illustration.
\end{proof}

\begin{figure}[!htb]
    \centering
    \includegraphics[scale=0.2]{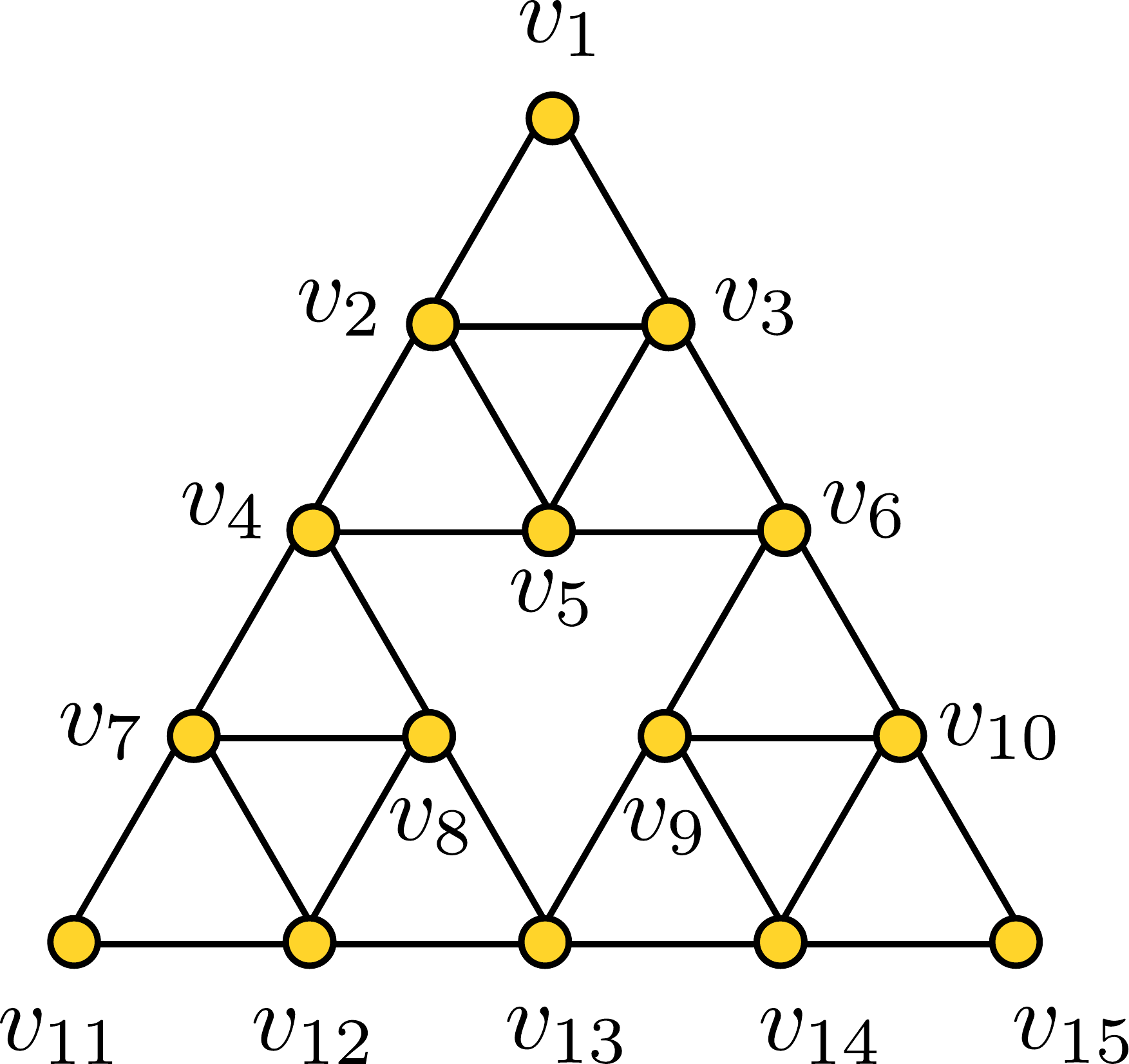}\quad\quad\quad
    \includegraphics[scale=0.14]{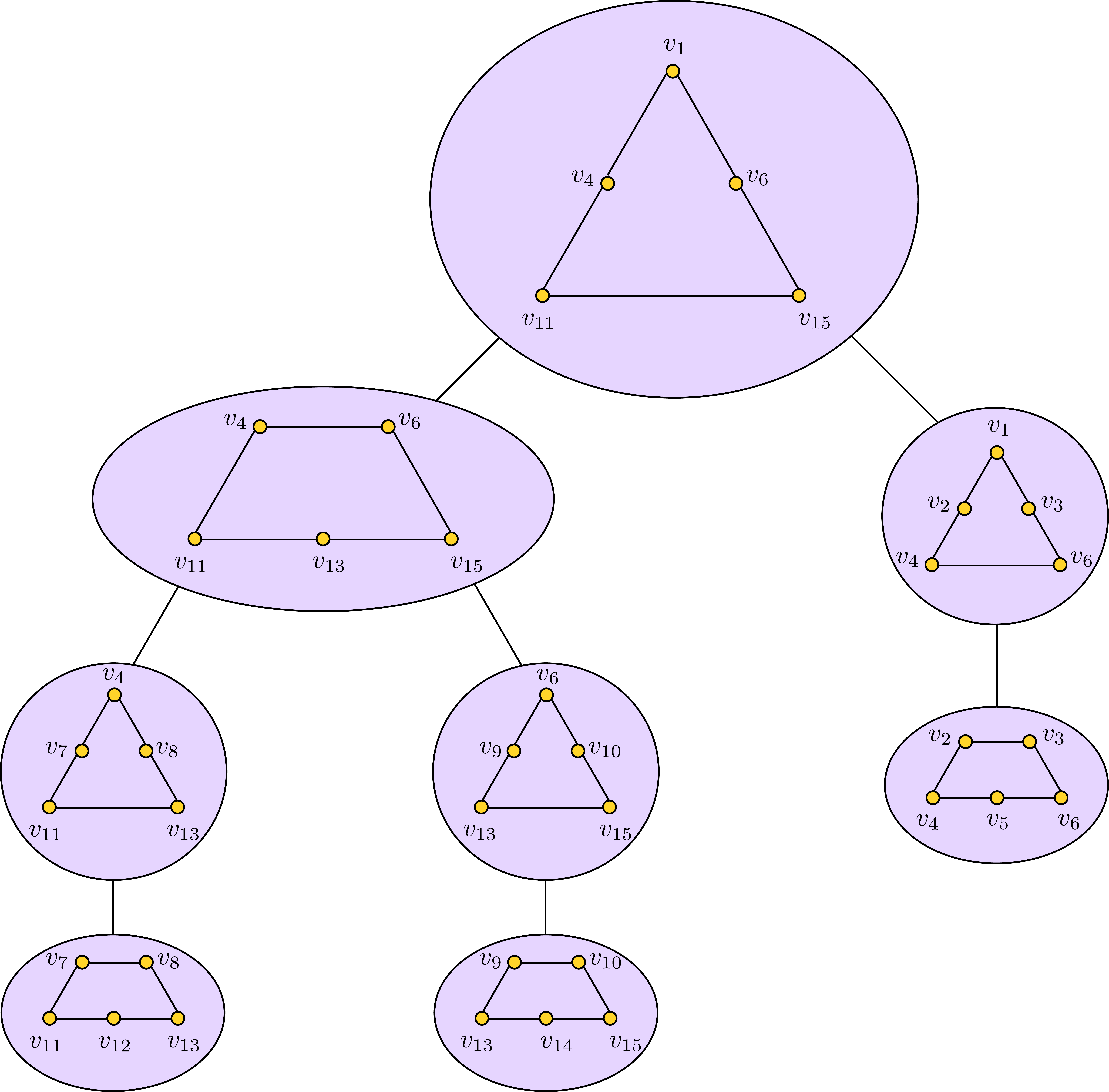}
    \caption{$S_3$ along with its tree decomposition.}
    \label{fig:tree_decomp}
\end{figure}

\begin{figure}[!htb]
    \centering
    \includegraphics[scale=0.33]{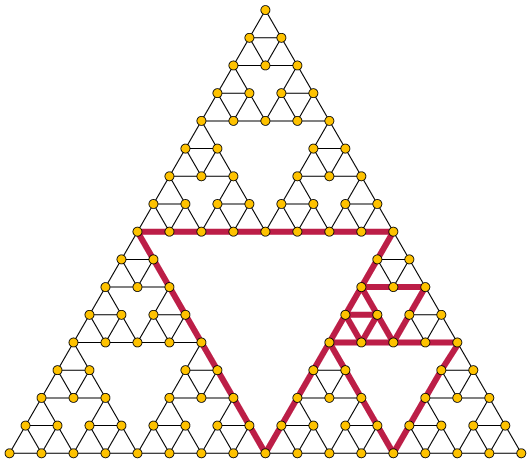}
    \caption{The graph $S_5$ with a subdivision of the octahedral graph highlighted in red.}
    \label{fig:octa}
\end{figure}

Next we give an inductive construction of the Hanoi graph $H_3^n$ with $3$ pegs and $n$ disks. This construction is almost identical to that of $S_n$, but instead of identifying vertices we connect the three copies of $H_3^{n-1}$ with three edges.
Recall that the vertices of $H_3^n$ are configurations representing the game state, that is a vertex is an element of $\{1, 2, 3\}^n$.
We define $H_3^1$ to be $K_3$ with the same planar embedding as in the case of the \sier{} triangle and denote the vertices as the $1$-tuples $(1), (2), (3)$. The cyclic ordering of the vertices does not affect our construction.

By induction we assume $H_3^{n-1}$ has a planar embedding whose outer face is an equilateral triangle such that the corners of the triangle are the configurations corresponding with the perfect states, and we denote these vertices $p_1,p_2,p_3$. For $i \in \{1,2,3\}$ let $H_i$ be the graph isomorphic to $H_3^{n-1}$ with the vertex set $V(H_3^{n-1}) \times \{i\}$.
We construct $H_3^n$ out of the three $H_i$'s and add the following edges.

\begin{enumerate}
\item Add an edge between $(p_1, 2)$ and $(p_1, 3)$ and denote it $e_{\ell r}$.
\item Add an edge between $(p_2, 1)$ and $(p_2, 3)$ and denote it $e_{r t}$.
\item Add an edge between $(p_3, 1)$ and $(p_3, 2)$ and denote it $e_{\ell t}$.
\end{enumerate}

We call these three edges the \emph{boundary edges}.
The boundary edges represent the legal moves obtained by moving the largest disk. It is clear from the construction that the resulting graph embeds into the plane as an equilateral triangle with the perfect states at the corners of the triangle.
See Figure \ref{fig:hanoi}.

\begin{figure}[h]
    \centering
    \includegraphics[scale=0.25]{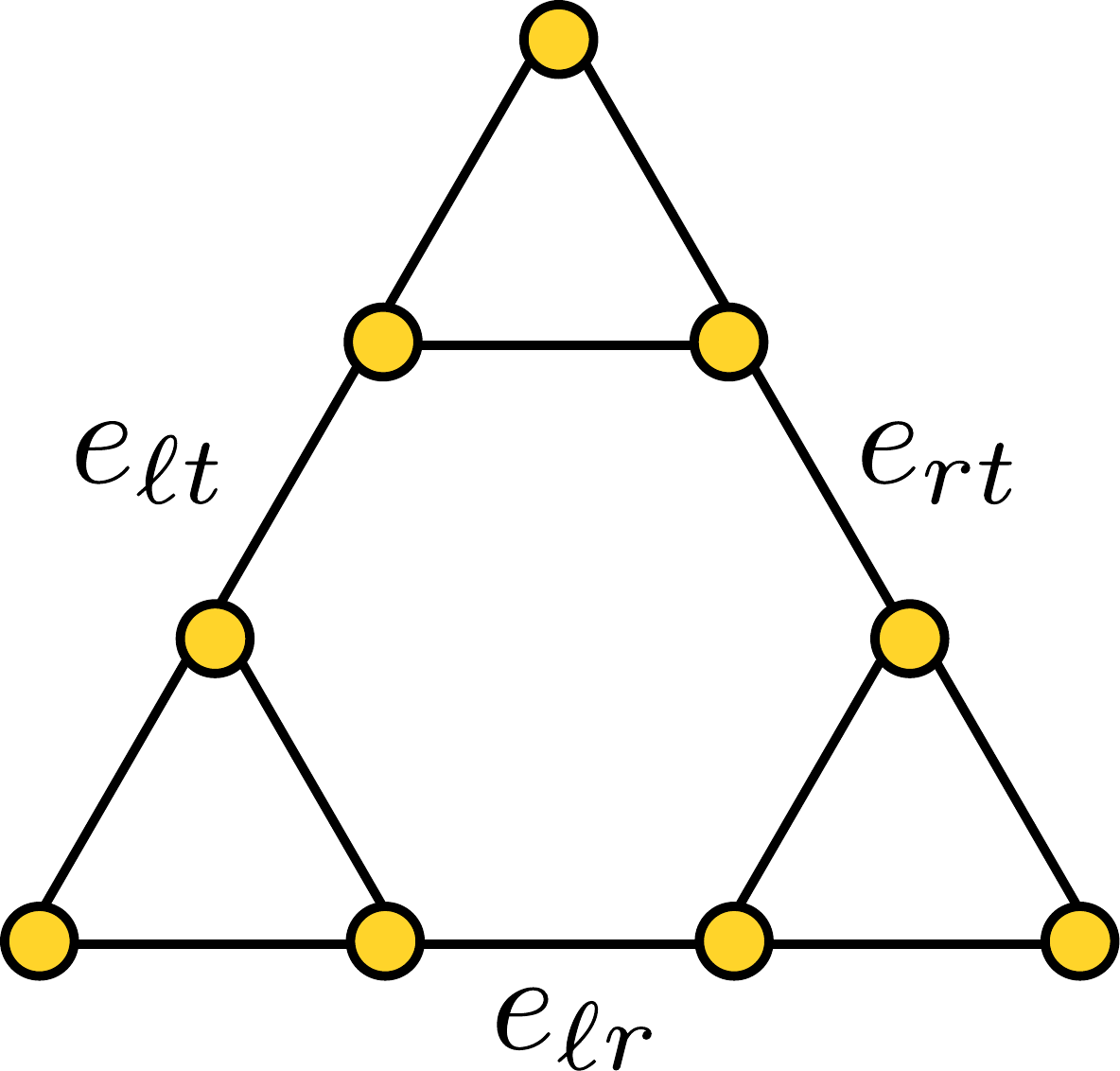}\quad\quad
    \includegraphics[scale=0.25]{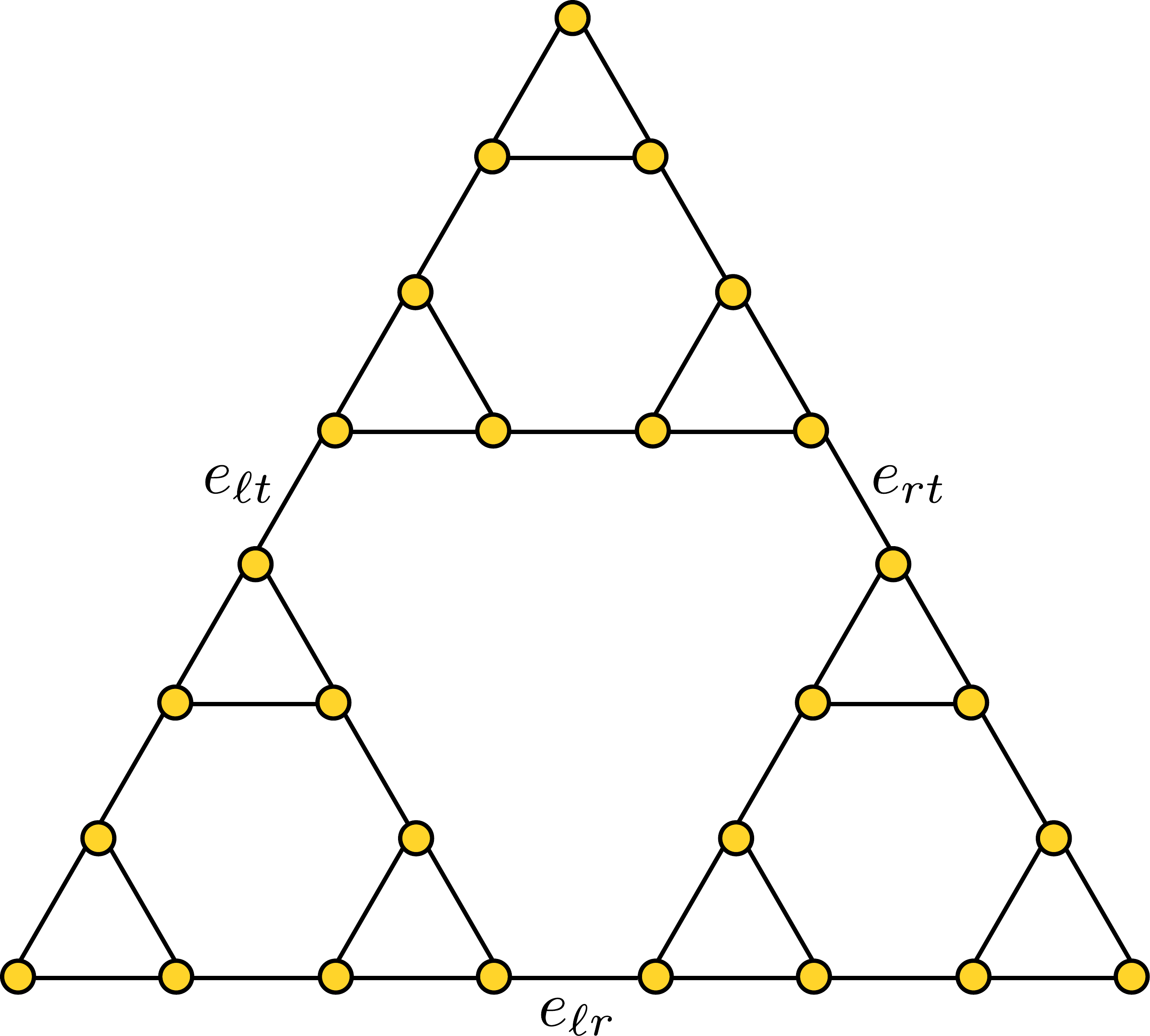}
    \caption{The Hanoi graphs $H_3^2$ and $H_3^3$. We label the boundary edges such that their index coincides with their corresponding vertex in the \sier{} triangle.}
    \label{fig:hanoi}
\end{figure}

\begin{theorem}\label{thm:twh3}
$\tw(H_3^n) = 4$ for all $n > 4$.
\end{theorem}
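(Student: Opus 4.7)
The plan is to sandwich $\tw(H_3^n)$ between $\tw(S_n)$ and $\tw(S_{n+1})$ via minor containments in both directions and then invoke \autoref{lem:siertw}, which supplies $\tw(S_m)\le 4$ for every $m\ge 1$ (from the tree decomposition constructed in the upper-bound part of that lemma) and $\tw(S_m)\ge 4$ for $m>4$.

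For the lower bound $\tw(H_3^n)\ge 4$, I plan to observe that $S_n$ is a minor of $H_3^n$. The recursive constructions of the two families run in perfect parallel: both start from $K_3$ and, at each step, build a ``triangle of three copies'' of the previous graph, the only difference being that $S_n$ identifies pairs of outer corners of those copies while $H_3^n$ joins the same pairs by the three boundary edges $e_{\ell r}$, $e_{rt}$, $e_{\ell t}$. Contracting every boundary edge at every recursion level of $H_3^n$ therefore yields precisely $S_n$, so $\tw(H_3^n)\ge \tw(S_n)$, which is $4$ whenever $n>4$.

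For the upper bound $\tw(H_3^n)\le 4$, I plan to show by induction on $n$ that $H_3^n$ is a minor of $S_{n+1}$, which gives $\tw(H_3^n)\le \tw(S_{n+1})\le 4$. The base case $n=1$ presents $K_3$ as a minor of $S_2$ using the three super-vertices $\{v_t,v_{\ell t}\}$, $\{v_\ell,v_{\ell r}\}$, $\{v_r,v_{rt}\}$; each is a single edge inside one of the three small triangles of $S_2$, and a direct check confirms that every pair of super-vertices is joined by an edge of $S_2$. The inductive step uses the decomposition $S_{n+1}=S^T\cup S^L\cup S^R$ of three copies of $S_n$ glued at shared corners $v_{\ell t},v_{rt},v_{\ell r}$. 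I would apply the inductive minor inside each sub-copy to obtain a copy of $H_3^{n-1}$, and then assign each shared corner of $S_{n+1}$ to exactly one of its two sub-copies. To make this splitting go through I plan to maintain throughout the induction the invariant that each outer corner $v$ of $S_n$ lies in a distinct perfect-state super-vertex of the form $\{v\}\cup X$, where $X$ is a nonempty connected subset of $S_n$ containing a vertex adjacent to $v$. This invariant guarantees that the super-vertex on the ``losing'' side of a shared corner $v$ stays connected after removing $v$ (it remains as $X$), and that the $S_{n+1}$-edge from $v$ to its $X$-neighbor furnishes the required boundary edge of $H_3^n$ between the two relevant perfect-state super-vertices. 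Since each outer corner $v_t,v_\ell,v_r$ of $S_{n+1}$ lives in only one sub-copy, its super-vertex is inherited unchanged and the invariant carries over.

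The main obstacle I anticipate is pinning down and verifying this invariant through the induction: in particular, arranging the super-vertex for each outer corner so that the remainder $X$ is simultaneously connected and retains an appropriate neighbor, and then checking that the three boundary edges that arise in the minor in $S_{n+1}$ line up correctly with $e_{\ell r},e_{rt},e_{\ell t}$ under the natural labeling of perfect states by ``opposite peg.'' Once this bookkeeping is complete, the two minor containments combine with \autoref{lem:siertw} to give $\tw(H_3^n)=4$ for all $n>4$.
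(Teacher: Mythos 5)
Your proposal is correct and follows essentially the same route as the paper: the lower bound via contracting the boundary edges of $H_3^n$ to obtain $S_n$ as a minor, and the upper bound via an inductive embedding of $H_3^n$ as a minor of $S_{n+1}$ (gluing three copies of the $H_3^{n-1}$-minor at the shared corners), combined with the treewidth bounds on the \sier{} graphs from \autoref{lem:siertw}. Your invariant on the perfect-state super-vertices is just a more explicit bookkeeping of the vertex-disjoint connecting paths through the shared corners that the paper uses.
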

\begin{proof}
To prove the lower bound we contract the boundary edges of $H_3^n$ to create an $S_n$-minor. Hence, $4 = \tw(S_n) \leq \tw(H_3^n)$ for $n > 4$.

To get the inequality $\tw(H_3^n) \leq 4$ we inductively construct an $H_3^{n}$-minor of $S_{n+1}$ as follows.
For the base case we can easily find a copy of $K_3$ in $S_2$.
Let $G_1, G_2, G_3$ be the three $S_{n}$ subgraphs used to construct $S_{n+1}$ and let $v_{i,j}$ be the vertex shared by $G_i$ and $G_j$.
By the inductive hypothesis we assume each $G_i$ contains an $H_3^{n-1}$-minor which we denote by $H_i$. We construct an $H_3^{n}$-minor in $S_{n+1}$ by connecting the corresponding perfect states of $H_i$ and $H_j$ via a path containing $v_{i,j}$ for each $i \neq j$.
These paths can be chosen to be vertex-disjoint, which proves the theorem. See Figure \ref{fig:hanoiminor} for an illustration.
\end{proof}

\begin{figure}[!htb]
    \centering
    \includegraphics[scale=0.25]{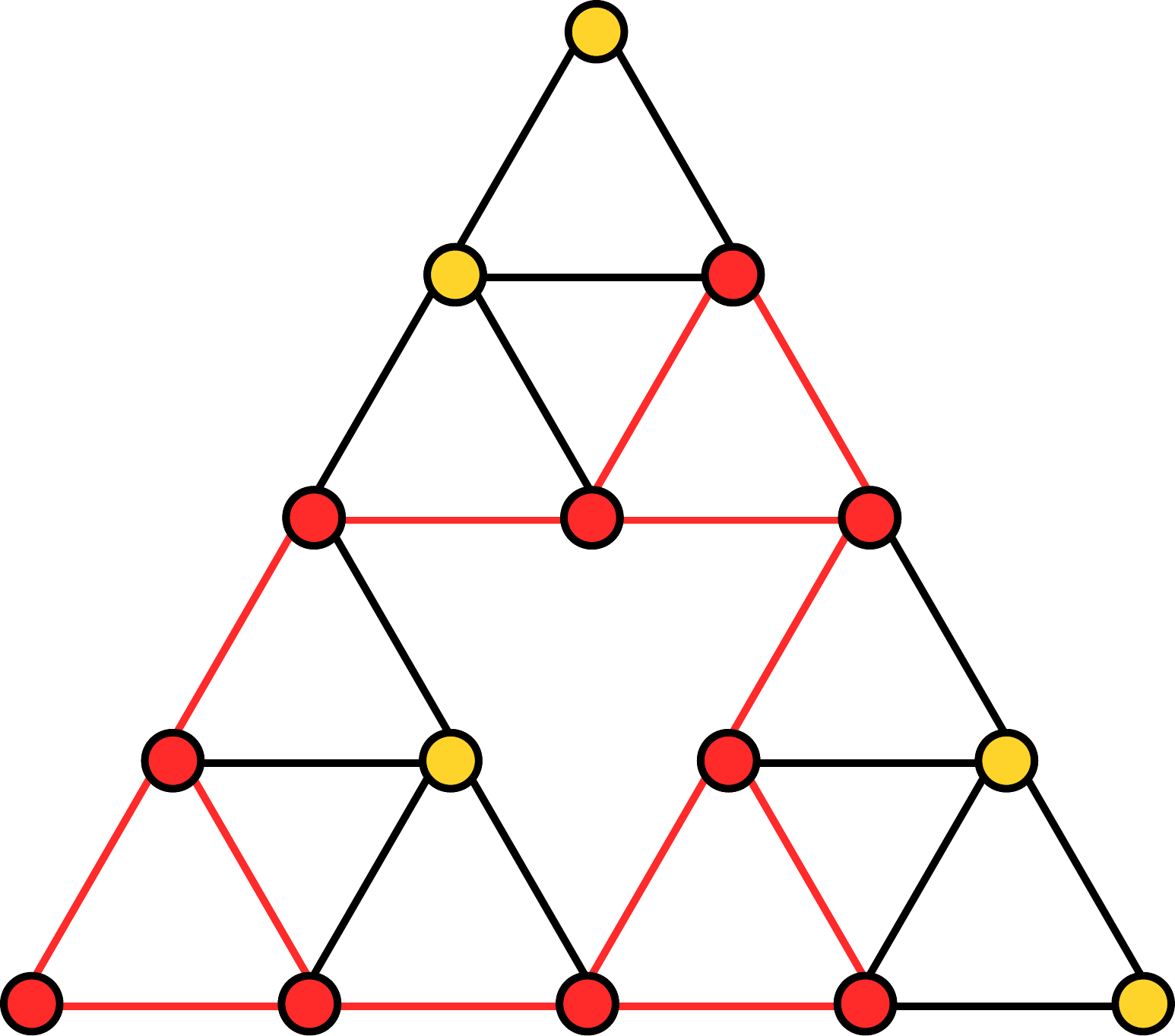}
    \caption{$S_3$ with an $H_3^2$ minor highlighted in red.}
    \label{fig:hanoiminor}
\end{figure}

The three-peg case is simple enough that we can analyze our forbidden-state version of the puzzle directly. If two states are forbidden, the only way to separate the remaining states is to separate one recursive subgraph of the same type from the rest of the graph. In terms of the original puzzle,
the two forbidden states can be described by choosing a peg and a number $k$ and forbidding the two states where the largest $k$ disks are on the chosen peg and the remaining $n-k$ disks are all on the same peg as each other (one of the other two pegs). The probability of a connection between two randomly-chosen states is maximized for $k=1$, for which, for large $n$, the probability of a path between two randomly-chosen vertices becomes approximately $(2/3)^2+(1/3)^2=5/9$.
On the other hand, if three states are forbidden, it becomes possible to separate the state space into three equally-sized subgraphs by forbidding three of the six states in which the largest disk can move. For this selection, the probability of a path between two randomly-chosen vertices becomes
$3(1/3)^2=1/3$.

\section{More pegs}
We conjecture that the treewidth of the Hanoi graph $H_p^n$ is $\Theta((p-2)^n)$.
By \autoref{lem:twsep} the same bound would automatically apply to the recursive balanced separator orders of these graphs; by \autoref{game-from-separator}, this would imply an upper bound on the number of states to forbid to make the adversarial version of the Hanoi puzzle fair ($f(G)$ in \autoref{game-from-separator}).
In this section we make progress towards this conjecture by proving the asymptotic upper bound $\tw(H_p^n) = O((p-2)^n)$ and the asymptotic lower bound $\tw(H_p^n) = \Omega (n^{-(p-1)/2} \cdot (p - 2)^n)$. We obtain the lower bound by proving that every balanced separator of $H_p^n$ (recursive or otherwise) is of this asymptotic order. This lower bound then applies to $f(G)$ in \autoref{game-from-separator}.
Our bounds are almost tight, off only by the factor $\Theta(n^{(p-1)/2})$.
We begin by proving the asymptotic upper bound, which we do by constructing a recursive balanced separator of the required order and applying Lemma~\ref{lem:twsep}.

\begin{theorem}
For any fixed $p \geq 3$ and $n \geq 1$, $\tw(H_p^n) = O((p-2)^n)$.
\label{thm:ubound}
\end{theorem}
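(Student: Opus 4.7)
The plan is to prove the bound by induction on $n$, using the recursive decomposition of $H_p^n$ together with the folklore inequality $\tw(G)\le |S|+\max_i\tw(G_i)$ valid for any vertex separator $S$ whose removal leaves components $G_1,G_2,\dots$. The base case $n=1$ is immediate since $H_p^1=K_p$ has treewidth $p-1$. For the inductive step, I would exploit the fact that $H_p^n$ splits into $p$ copies $C_1,\dots,C_p$ of $H_p^{n-1}$, indexed by the peg holding the largest disk $d_n$: between any $C_i$ and $C_j$ there are exactly $(p-2)^{n-1}$ edges, one for each configuration of the $n-1$ smaller disks on pegs other than $i$ and $j$ (the only setup that permits moving $d_n$ from $i$ to $j$), and each such edge's endpoint in $C_i$ lies among the $(p-2)^{n-1}$ configurations with smaller disks on $\{1,\dots,p\}\setminus\{i,j\}$.

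I would build the balanced separator as follows. Split the peg set into groups $A$ and $B$ of sizes $\lfloor p/2\rfloor$ and $\lceil p/2\rceil$, and let $S_0$ be the set of vertices in $\bigcup_{i\in A}C_i$ that have at least one neighbor in some $B$-copy. A union bound over pairs $(i,j)\in A\times B$ gives
\[
|S_0|\le |A|\cdot|B|\cdot(p-2)^{n-1}=O((p-2)^n),
\]
and removing $S_0$ disconnects the $A$-copies (now missing some interface vertices) from the intact $B$-copies. I would iterate the same construction inside each side, repeatedly bisecting the remaining copies. After $O(\log p)$ rounds of such bisections, each component of the current subgraph is contained in a single copy $C_i\cong H_p^{n-1}$, so by monotonicity of treewidth under induced subgraphs and the inductive hypothesis, every such component has treewidth at most $c(p-2)^{n-1}$.

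Accumulating the $O(\log p)$ separator contributions, each of size $O((p-2)^n)$, together with the final single-copy bound yields
\[
\tw(H_p^n)\le O(\log p)\cdot O((p-2)^n)+c(p-2)^{n-1}=O((p-2)^n),
\]
which closes the induction once $c$ is chosen large enough in terms of $p$ (but not in terms of $n$). The main obstacle is the bookkeeping around the iterated bisection: one must verify that $O(\log p)$ rounds suffice to shrink the components down to single copies, and that the separator size at every round remains $O((p-2)^n)$ regardless of how many vertices were deleted by earlier rounds, so that the cumulative overhead and the inductive constant can both be absorbed into a single $p$-dependent multiplier of $(p-2)^n$.
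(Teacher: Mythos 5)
Your proposal is correct for $p\ge 4$ and is essentially the paper's proof in different packaging: both split $H_p^n$ into $p$ copies of $H_p^{n-1}$ by the peg of the largest disk, bound the interface by counting the $(p-2)^{n-1}$ configurations per peg pair that free the largest disk, and accumulate a geometric series $\sum_i (p-2)^{n-i}=O((p-2)^n)$ --- the paper phrases this as a recursive balanced separator fed into Lemma~\ref{lem:twsep}, while you run a direct induction on $n$ with the folklore bound $\tw(G)\le|S|+\max_i\tw(G_i)$ (and your $O(\log p)$ bisection rounds could be collapsed into one round by deleting all inter-copy interface vertices at once). The one caveat is that your induction closes only when $c\ge K_p(p-2)/(p-3)$, so it yields $O(n)$ rather than $O(1)$ at $p=3$; the paper's argument has the same limitation (Lemma~\ref{lem:twsep} requires the separator order to be polynomial in $N$) and covers $p=3$ separately via the exact result of Theorem~\ref{thm:twh3}.
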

\begin{proof}
We can recursively decompose $H_p^n$ into $p$ vertex-disjoint copies of $H_p^{n-1}$ by considering the subgraphs induced by fixing the position of the largest disk in the configurations.
We call a vertex a boundary vertex if in its configuration there is at least one peg occupied by no disks and the largest disk shares its peg with no other disks. These are the configurations in which the largest disk is free to move. The endpoints of edges between the $H_p^{n-1}$ subgraphs in our decomposition are the boundary vertices.

We compute the order of our recursive balanced separator by counting the number of boundary vertices. This is the number of ways to distribute $n-1$ disks across $p-2$ pegs, hence the size of the separator is ${p \choose 2}(p-2)^{n-1}$. Our choice of separator splits $H_p^n$ into $p$ subgraphs of size $\frac{1}{p}|V(H_p^n)|$. By grouping the $H_p^{n-1}$ subgraphs into two vertex sets, we obtain a $c$-separator where $c \in \{\frac{\lceil p/2 \rceil}{p}, \frac{\lceil p/2 \rceil + 1}{p}, \dots, \frac{p-1}{p}\}$ depending on our choice of vertex sets. Each subgraph can then be recursively decomposed in a similar way, and the number of vertices required in each recursive decomposition at level $i$ is equal to ${p \choose 2}(p - 2)^{n-i}$. The theorem follows directly from Lemma~\ref{lem:twsep}.
\end{proof}

To prove the asymptotic lower bound we construct a new graph related to $H_p^n$ whose treewidth is easier to compute.
We can specify the positions of a subset of disks in a Hanoi puzzle by a mapping $\rho \colon [n] \rightarrow [p] \cup \{\infty\}$, where a finite value of $\rho(i)$ specifies the peg containing disk $d_i$ and an infinite value means that disk $d_i$ is allowed to be placed on any peg that does not also contain a specified disk. We define the \emph{pegset} induced by $\rho$ to be the states consistent with this specification.
More formally, a vertex $v = (p_1, p_2, \dots, p_n)$ is in the pegset induced by $\rho$ if and only if :

\begin{enumerate}
\item for all $k\in [n]$, if $\rho(k) \neq \infty$ then $\rho(k) = p_k$, and
\item for all $k, l \in [n]$, if $\rho(k) = \infty \ne \rho(l)$, then $p_k\ne p_l$.
\end{enumerate}

If $\rho(k) \neq \infty$ we call $d_k$ \emph{frozen} by $\rho$; further, if a peg $p_k$ is in the image of $\rho$ we call $p_k$ frozen by $\rho$ as well.
Intuitively, a pegset is the result of freezing a set of disks onto a set of pegs and playing a Hanoi puzzle using only the remaining unfrozen disks and pegs.
We are interested in pegsets that meet two additional properties:

\begin{enumerate}
\setcounter{enumi}{2}
\item exactly $p - 3$ elements of $[p]$ have a non-empty inverse under $\rho$, and
\item for $j \in [p]$ either $|\rho^{-1}(j)| = \lfloor\frac{n-1}{p-2}\rfloor$ or $|\rho^{-1}(j)| = 0$.
\end{enumerate}

\noindent
We call such pegsets \emph{regular pegsets}. Note that, because we still have three pegs unfrozen, and because the three-peg Hanoi graphs are connected, each regular pegset describes a connected subgraph of the Hanoi graph.

To make our analysis cleaner we assume that $n \equiv 1 \mod (p - 2)$, hence properties 3 and 4 imply that there are precisely $\frac{n-1}{p-2} + 1$ unfrozen disks in a regular pegset.
Note that this restriction on $n$ does not change the overall asymptotic analysis for other values of $n$, as we can still lower-bound the treewidth for other $n$ by rounding $n$ down to a value with this restricted form.

Let $I_p^n$ denote the graph whose vertices are the regular pegsets of $H_p^n$ where two vertices share an edge if and only if the intersection of their corresponding pegsets is non-empty. We call $I_p^n$ the \emph{pegset intersection graph} of $H_p^n$.
We characterize the adjacency condition in terms of frozen disks and pegs in Lemma~\ref{lem:adj}.

\begin{lemma}\label{lem:adj}
Two regular pegsets $u$ and $v$ are adjacent in $I_p^n$ if and only if the following criteria are satisfied:
\begin{enumerate}
\item if a disk is frozen by both $u$ and $v$, then both $u$ and $v$ freeze it to the same peg,
\item $u$ and $v$ each freeze exactly one peg unfrozen by the other,
\item if a disk is frozen by $u$ but not by $v$, then the peg it is frozen on is not frozen by $v$, and
\item if a disk is frozen by $v$ but not by $u$, then the peg it is frozen on is not frozen by $u$.
\end{enumerate}
\end{lemma}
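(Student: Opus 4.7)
The plan is to prove the biconditional by analyzing where each disk sits in a configuration that lies in both pegsets. I will write $F_u$ and $F_v$ for the sets of pegs frozen by $\rho_u$ and $\rho_v$, each of size $p-3$ by the definition of a regular pegset, and I will put $k = (n-1)/(p-2)$, so that by regularity each frozen peg receives exactly $k$ disks. I will partition $[p]$ as $A = F_u \cap F_v$, $B = F_u \setminus F_v$, $C = F_v \setminus F_u$, and $D = [p]\setminus (F_u \cup F_v)$; since $|F_u|=|F_v|=p-3$, we have $|B|=|C|$.

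For the forward direction I will fix a configuration $c = (c_1,\ldots,c_n)$ in the intersection of the two pegsets. The two defining conditions of a pegset force $c_i = \rho_u(i) \in F_u$ when $d_i$ is frozen by $u$ and $c_i \notin F_u$ otherwise, and analogously for $v$. Comparing these two descriptions of each $c_i$ immediately gives condition 1 (on a commonly frozen disk, both $\rho$'s evaluate to $c_i$) and conditions 3 and 4 (a disk frozen by $u$ alone has $c_i = \rho_u(i) \in F_u \setminus F_v = B$, and symmetrically). For condition 2 I will count the disks unfrozen by $u$: there are exactly $k+1$ of them, each sitting on a peg in $C \cup D$. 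Those also frozen by $v$ sit on pegs in $C$, and by regularity each peg in $C$ accounts for exactly $k$ such disks, giving $k|C| \le k+1$ and hence $|C| \le 1$. Symmetrically $|B| \le 1$. Finally, $|B|=|C|=0$ would force every disk frozen by $u$ to sit on a peg of $F_v=F_u$, hence to be frozen by $v$ to the same peg, so $\rho_u = \rho_v$, contradicting $u \ne v$; therefore $|B|=|C|=1$.

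For the reverse direction I will assume conditions 1--4 and explicitly construct a common configuration. The hypotheses yield $|A|=p-4$, $|B|=|C|=1$, and $|D|=2$. I will place each disk frozen by $u$ on $\rho_u(i)$; by condition 1 this is compatible with $\rho_v$ for commonly frozen disks, and by condition 3 the $u$-only frozen disks land on the unique peg of $B$. Symmetrically, condition 4 places each $v$-only frozen disk on the unique peg of $C$. A short counting check using regularity (disks frozen by both equal $k|A|=k(p-4)$, so total distinct frozen disks equal $k(p-2)$) shows that exactly one disk is unfrozen by both; I will place it arbitrarily on either peg of $D$. Verifying that this configuration satisfies both defining conditions of each pegset is then routine: every frozen disk lies on its prescribed peg, and every disk unfrozen by $u$ (respectively $v$) lies outside $F_u$ (respectively $F_v$) by construction.

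The main technical obstacle will be the forward derivation of condition 2, since it is the only step that essentially uses the regularity hypothesis of equal per-peg frozen-disk counts, and it combines the counting bound $k|C|\le k+1$ with the elimination of the degenerate case $|B|=|C|=0$ via the distinctness of $u$ and $v$. Conditions 1, 3, and 4 and the entire reverse direction will follow from a direct disk-by-disk analysis of where each peg's contents can come from.
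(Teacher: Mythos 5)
Your proof is correct and follows essentially the same route as the paper's: derive conditions 1, 3, 4 by inspecting a common configuration, get condition 2 by a counting argument using regularity, and prove the converse by explicitly assembling a common configuration from the commonly frozen disks, the two singly-frozen blocks, and one leftover disk. The only cosmetic difference is in condition 2, where you bound $|C|$ by counting the $k+1$ disks unfrozen by $u$ while the paper counts total frozen disks against $n$; both arguments are equivalent (and both, like the paper's, implicitly assume $k=\frac{n-1}{p-2}$ is not tiny).
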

\begin{proof}
If $u$ and $v$ are adjacent in $I_p^n$ then there exists some vertex $w = (w_1, w_2, \dots, w_n)$ contained in both pegsets.
By the definition of a regular pegset both $u$ and $v$ freeze $\frac{n-1}{p-2}$ disks evenly across $p-3$ pegs and leave $\frac{n-1}{p-2}+1$ disks unfrozen.
We will now show that each of the four claims follows from the adjacency of $u$ and $v$.

1. Suppose for a contradiction that a disk $d_i$ is frozen to different pegs by $u$ and $v$; then no configuration in $u$ can equal a configuration in $v$ since they differ at the $i$th component.

2. If $u$ and $v$ freeze the same set of pegs then a configuration in $u$ cannot equal a configuration in $v$ since they will differ on the components corresponding to frozen disks.
Now, assume $u$ freezes more than one peg left unfrozen by $v$. A vertex $w$ in the intersection of $u$ and $v$ would have a configuration that matches both $u$ and $v$ on their frozen disks, but the total number of disks frozen by $u$ and $v$ is at least $(p-1) \cdot \frac{n-1}{p-2}$; then $w$ has more than $n$ disks, contradicting the fact that $w$ is a valid configuration.

3. Let $u$ freeze the disk $d_i$ onto the peg $p_k$ and assume $v$ does not freeze $d_i$.
If $v$ also freezes $p_k$ then $v$ must freeze $\frac{n-1}{p-2}$ disks onto $p_k$ while leaving $d_i$ unfrozen, hence there is no configuration in $v$ that places $d_i$ onto $p_k$.

4. Identical to 3.

We are now ready to prove the converse. Let $u$ and $v$ be pegsets in $I_p^n$ such that conditions 1 through 4 hold.
Conditions 1 and 2 tell us that the configurations of $u$ and $v$ coincide with one another for $(p-4) \cdot \frac{n-1}{p-2}$ disks evenly distributed across $p-4$ pegs. The remaining $2 \cdot \frac{n-1}{p-2}$ disks are left unfrozen by either $u$ or $v$; call the set of these disks $U$.
Conditions 3 and 4 ensure that we can choose a peg that is frozen by either $u$ or $v$, but not both, and place $\frac{n-1}{p-2}$ of the disks in $U$ onto this peg which yields a configuration shared by both $u$ and $v$.
\end{proof}

As a consequence of Lemma~\ref{lem:adj} we can describe how to traverse an edge from a pegset $u$ to a pegset $v$ in $I_p^n$ by freezing and unfreezing disks.
We place $\frac{n-1}{p-2}$ of the disks left unfrozen by $u$ onto the peg frozen by $v$ but left unfrozen by $u$. Then, we take the peg frozen by $u$ and left unfrozen by $v$ and unfreeze every disk on it.

The asymptotic lower bound on $\tw(H_p^n)$ will be derived from an asymptotic lower bound on $\tw(I_p^n)$.
To compute the treewidth of $I_p^n$ we first need to prove that it is vertex-transitive and compute its diameter.

\begin{lemma}
$I_p^n$ is vertex-transitive.
\label{lem:ipdtrans}
\end{lemma}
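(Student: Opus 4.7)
My plan is to exhibit a group of automorphisms of $I_p^n$ acting transitively on its vertex set. The natural candidate is $S_p \times S_n$, where $S_p$ permutes pegs and $S_n$ permutes disk labels. An element $(\pi,\sigma)$ will act on a pegset-defining map $\rho \colon [n] \to [p] \cup \{\infty\}$ by $\rho \mapsto \hat\pi \circ \rho \circ \sigma^{-1}$, where $\hat\pi$ extends $\pi$ to $[p] \cup \{\infty\}$ by fixing $\infty$.

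First I would verify that each such $(\pi,\sigma)$ is indeed an automorphism of $I_p^n$. At the level of configurations, $(\pi,\sigma)$ induces the bijection $(p_1,\dots,p_n) \mapsto (\pi(p_{\sigma^{-1}(1)}),\dots,\pi(p_{\sigma^{-1}(n)}))$ on $[p]^n$; a direct check of the two defining conditions of a pegset shows that this bijection carries the pegset of $\rho$ onto the pegset of $\hat\pi \circ \rho \circ \sigma^{-1}$. The four properties defining regularity are manifestly preserved, so the action permutes the vertex set of $I_p^n$. Since it acts bijectively on $[p]^n$, it preserves nonempty intersections of pegsets, hence edges of $I_p^n$.

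For transitivity, given two regular pegsets with maps $\rho_u$ and $\rho_v$, I would first choose $\pi \in S_p$ as any bijection mapping the $p-3$ frozen pegs of $\rho_u$ onto the $p-3$ frozen pegs of $\rho_v$ (extending arbitrarily on the three remaining pegs). For each aligned pair of frozen pegs $q$ and $\pi(q)$, the fibers $\rho_u^{-1}(q)$ and $\rho_v^{-1}(\pi(q))$ both have size $(n-1)/(p-2)$, and the sets of disks left unfrozen by $u$ and by $v$ both have size $(n-1)/(p-2)+1$. I would choose $\sigma \in S_n$ to biject each such pair of frozen blocks and also to biject the two sets of unfrozen disks. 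The resulting $\sigma$ satisfies $\rho_v \circ \sigma = \hat\pi \circ \rho_u$, equivalently $(\pi,\sigma) \cdot \rho_u = \rho_v$.

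The main point to handle carefully is the edge-preservation claim: unlike in $H_p^n$, where permuting disk labels breaks the edge relation (the ``smallest disk'' notion depends on labels), the graph $I_p^n$ only records which subsets of $[p]^n$ arise as regular pegsets and whether pairs of them meet, a structure preserved by arbitrary coordinate permutations. This extra, label-level symmetry is exactly what makes $S_p \times S_n$ act transitively. The remainder of the argument is routine combinatorial bookkeeping, using the fact that a regular pegset is determined by a labeled partition of a chosen subset of disks among a chosen subset of pegs.
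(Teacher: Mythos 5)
Your proof is correct, and it is in the same spirit as the paper's---both arguments exhibit explicit label symmetries of $I_p^n$ and check that enough of them exist to connect any two regular pegsets. The difference is that the paper works only with the disk transpositions $\phi_{i,j}$ (generating the $S_n$ factor of your group), whereas you use the full group $S_p\times S_n$, acting on pegs as well as disks. This is not just extra generality: disk permutations alone preserve the multiset of values of $\rho$, and in particular the set of frozen pegs, so they cannot carry a regular pegset freezing one $(p-3)$-subset of pegs to one freezing a different subset (already an issue for $p=4$, where each regular pegset freezes a single peg). Your peg-permutation component $\pi$ is exactly what is needed to align the frozen pegs before matching fibers with $\sigma$, so your argument is the more complete one. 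Your verification that the induced bijection of $[p]^n$ carries the pegset of $\rho$ to that of $\hat\pi\circ\rho\circ\sigma^{-1}$, and hence preserves the nonempty-intersection edge relation, is also cleaner than the paper's appeal to ``the traversal process,'' since adjacency in $I_p^n$ is literally defined by intersection of subsets of the configuration space and is therefore preserved by any bijection of that space permuting the family of regular pegsets. The one small point worth making explicit is that a regular pegset determines its defining map $\rho$ (a disk is frozen iff it occupies the same peg in every configuration of the pegset, which holds since each unfrozen disk ranges over all three unfrozen pegs), so the action on maps descends to a well-defined action on vertices of $I_p^n$.
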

\begin{proof}
We define a family of automorphisms $\phi_{i,j}$ which swap the roles of $d_i$ and $d_j$ in some pegset. The lemma follows from the fact that we can transform a pegset $u$ to any other pegset $v$ by a sequence of swap operations.
For any pegset $u$ we define the image of $u$ under $\phi_{i,j}$ to be
\begin{equation*}
\phi_{i,j}(u)(k) = \begin{cases}
       u(i) & k = j \\
       u(j) & k = i \\
       u(k) & \text{otherwise} 
       \end{cases}.
\end{equation*}
Let $u$ and $v$ be adjacent pegsets in $I_p^n$. By $U_u$ and $U_v$ we denote the sets of disks left unfrozen by $u$ and $v$, respectively.
By Lemma~\ref{lem:adj} there are pegs $p_u$ and $p_v$ such that $u$ freezes disks onto $p_u$ but not $p_v$, and $v$ freezes disk onto $p_v$ but not $p_u$.
Further, traversing the edge from $u$ to $v$ is equivalent to placing $\frac{n-1}{p-2}$ disks from $U_u$ onto $p_k$ and treating the disks frozen to $p_u$ as unfrozen.
If $u$ and $v$ are adjacent then $\phi_{i,j}(u)$ and $\phi_{i,j}(v)$ are also adjacent, since swapping the labels of two disks does not affect the traversal process.
If $\phi_{i,j}(u)$ and $\phi_{i,j}(v)$ are adjacent then so are $u$ and $v$, by the above and the fact that $\phi_{i,j} = \phi_{i,j}^{-1}$.
\end{proof}

\begin{lemma}
\label{lem:ipddiam}
The diameter of $I_p^n$ is $\Theta(n)$.
\end{lemma}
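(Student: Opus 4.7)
The plan is to prove $\operatorname{diam}(I_p^n) = \Theta(n)$ by establishing matching upper and lower bounds separately.

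For the upper bound $\operatorname{diam}(I_p^n) = O(n)$, I would construct an explicit path of length $O(n)$ between any two regular pegsets $u$ and $v$. The construction has two phases. First, using at most $p - 3$ edge traversals (each swapping a single peg in and out of the frozen set), I transform $u$ into an intermediate pegset $u'$ whose frozen peg set equals that of $v$. Second, I iteratively correct the disk assignments. By the adjacency characterization of Lemma~\ref{lem:adj}, a pair of consecutive edge traversals that temporarily unfreezes and then refreezes a peg $r$ (via a helper peg outside the current frozen set) can exchange one disk between $r$ and the free pool while leaving the other frozen pegs untouched. Since at most $(p-3)m = O(n)$ disks need to be repositioned in total across the $p-3$ frozen pegs, $O(n)$ edge traversals suffice.

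For the lower bound $\operatorname{diam}(I_p^n) = \Omega(n)$, my first attempt is the projection $\phi\colon I_p^n \to F_p^n$ sending each pegset $u$ to its unfrozen disk set $U_u$, which is an $(m+1)$-subset of $[n]$. By Lemma~\ref{lem:adj}, any two adjacent pegsets have free sets sharing exactly one element, so $\phi$ is a graph homomorphism from $I_p^n$ to the graph $F_p^n$ on $(m+1)$-subsets with ``share-exactly-one'' adjacency, and therefore $\operatorname{diam}(I_p^n) \geq \operatorname{diam}(F_p^n)$. For $p = 4$, since $n = 2m+1$, complementation identifies $F_4^n$ with the Kneser graph $K(2m+1, m)$ (equivalently the odd graph $O_{m+1}$), whose diameter is classically known to be $m = \Theta(n)$; picking $u^\ast, v^\ast$ whose free sets are diametrically opposite in this Kneser graph delivers the bound.

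The main technical obstacle is the lower bound for $p \geq 5$. In that regime, two $(m+1)$-subsets of $[n] = [(p-2)m+1]$ can be disjoint, so $\operatorname{diam}(F_p^n)$ may collapse to a small constant and the projection $\phi$ becomes too lossy by itself. To handle this, I would work directly in $I_p^n$ using a richer invariant that tracks, for each disk, which peg (or free state) it currently occupies, rather than only the unfrozen set. By exploiting vertex transitivity from Lemma~\ref{lem:ipdtrans}, it suffices to exhibit one adversarial pair $u^\ast, v^\ast$ whose invariants differ by $\Omega(n)$ while every edge changes the invariant by only $O(1)$ in a suitable measure; alternatively, one can combine a Moore-type counting bound $\operatorname{diam}(I_p^n) \geq \log|V(I_p^n)| / \log(\text{max degree}) = \Omega(n/\log n)$ with an amortized refinement that removes the logarithmic loss. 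Designing this invariant and verifying its behavior on each edge is the bulk of the technical work for this direction.
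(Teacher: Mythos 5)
Your upper-bound argument is essentially the paper's own proof: first spend $O(1)$ edge traversals to make the frozen peg sets of $u$ and $v$ coincide, then repair the contents of each frozen peg one disk at a time, using a temporarily-frozen helper peg so that each exchange of a single disk between a frozen peg and the free pool costs two edges. (The one detail to keep explicit, which the paper handles as its ``case 2,'' is the disk that $v$ wants on peg $q_i$ but that $u$ currently freezes on some other not-yet-processed peg; you must first release it into the free pool by unfreezing and refreezing that other peg, again at $O(1)$ cost.) This gives $\operatorname{diam}(I_p^n) = O(n)$, which is the only direction of the lemma that is actually used downstream: Lemma~\ref{lem:vexp} (Babai--Szegedy) converts an \emph{upper} bound on the diameter into a lower bound on vertex expansion, so the $\Omega(n)$ half of the ``$\Theta$'' is decorative.

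That said, your proposal does leave the lower bound genuinely unproven for $p \geq 5$, and you are right that it is not trivial there: the projection onto unfrozen sets collapses (two $(\tfrac{n-1}{p-2}+1)$-subsets of $[n]$ can be disjoint once $p \geq 5$), and the Moore-type count only yields $\Omega(n/\log n)$ since $I_p^n$ has degree $\Theta(n)$; the ``amortized refinement'' you allude to is not an argument. If you intend to claim $\Theta(n)$ you must either supply that invariant or weaken the statement to $O(n)$ (which is all the paper's proof establishes as well, and all that Lemma~\ref{lem:vexptw} requires). On the positive side, your $p=4$ lower bound is correct and is more than the paper proves: adjacent pegsets have unfrozen sets meeting in exactly one element, so complementation gives a homomorphism onto the odd graph $\kn(2k+1,k)$ with $k = \frac{n-1}{2}$, whose diameter is $k$ by Lemma~\ref{lem:kndiam}, and surjectivity of the homomorphism transfers that lower bound back to $I_4^n$.
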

\begin{proof}
Let $u$ and $v$ be pegsets in $I_p^n$. Let $k = \frac{n-1}{p-2}$. 
If $u$ and $v$ do not freeze the same set of pegs, we can, by freezing and unfreezing disks, walk along a path in $I_p^n$ of length depending only on $p$, to a configuration that does freeze the same set of pegs as $v$, and continue with the process below. Therefore, assume $u$ and $v$ do freeze the same set of pegs, and label this set of pegs in increasing order by index as $Q = \{q_1, q_2, \dots, q_{p-3}\}$. For $i = 1, \dots, p - 3$, let $U_i$ be the set of disks frozen on $q_i$ by $u$, and let $V_i$ be the set frozen on $q_i$ by $v$.

For all $i = 1, \dots, p-3$, we iteratively transform $u$ into $v$, one peg at a time. For a given peg $q_i$, the process for transforming $U_i$ into $V_i$ is as follows. There are three cases:
\begin{enumerate}
\item There exists some disk $d \in V_i \setminus U_i$ that is unfrozen by $u$,
\item There exists some disk $d \in V_i \setminus U_i$ that is frozen on some other peg by $u$,
\item or $U_i = V_i$.
\end{enumerate}

In case 1, unfreeze $q_i$, then freeze an arbitrary peg $q_l \notin Q$, to obtain a new pegset $w$ adjacent to the current pegset. Since each pegset leaves $\frac{n-1}{p-2}+1$ disks unfrozen, let the new pegset freeze onto $q_l$ all but one of the disks unfrozen by $u$. Let the omitted disk, $d$, be one in $V_i$ that is unfrozen by $u$. Choose some $d' \in U_i \setminus V_i$ (one exists since $|U_i| = |V_i|$), then unfreeze $q_l$; freeze onto $q_i$ the set $(U_i \setminus \{d'\}) \cup \{d\}$, to obtain a new adjacent pegset where $d'$ is replaced by $d$.

Repeat this process until case 1 no longer applies, i.e. until every remaining $d \in V_i \setminus U_i$ is frozen by $u$. Then (case 2) consider some such $d$. $u$ does not freeze $d$ on a peg $q_r$ to which this process has already been applied, since all such pegs now agree with $v$. Therefore, $u$ freezes $d$ on some peg $q_s$ to which the process has not yet been applied. Unfreeze $q_s$ and freeze an arbitrary unfrozen peg $q_l$ to obtain the next pegset in the process; when doing so, some unfrozen disk $d''$ remains unfrozen. Then again freeze $q_s$, but omit $d$ and instead freeze $d''$ onto $q_s$. $d$ is now unfrozen, and we proceed as in case 1. Repeat case 2 until case 3 applies.

Repeating the overall process for every peg gives a path from $u$ to $v$ of length $O(n)$.
\end{proof}

Lemmas \ref{lem:ipdtrans} and \ref{lem:ipddiam} allow us to apply the following lemma due to Babai and Szegedy to obtain a lower bound of $\Omega(\frac{1}{n}V(|I_p^n|))$ on the \emph{vertex expansion} of $I_p^n$.

\begin{definition}
The vertex expansion of a graph $G$ is equal to
$$\min_{S\subseteq V(G): 1 \leq |S| \leq \frac{1}{2}} \frac{|\partial S|}{|S|},$$
where $\partial S$ is the union of the neighborhoods, in $G \setminus S$, of vertices in $S$.
\end{definition}

\begin{lemma}[Babai and Szegedy \cite{Babai1992}] \label{lem:vertextrans}
Let $G$ be a vertex-transitive graph. Then the vertex expansion of $G$ is $\Omega(1/d)$, where $d$ is the diameter of $G$.
\label{lem:vexp}
\end{lemma}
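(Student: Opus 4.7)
The plan is a double-counting/averaging argument driven by the action of $\Gamma := \mathrm{Aut}(G)$, which acts transitively on $V(G)$ by hypothesis. Fix $S \subseteq V(G)$ with $|S| \leq N/2$, where $N = |V(G)|$, and let $d$ denote the diameter. For every ordered pair $(a,b) \in V \times V$, fix once and for all a shortest $a$-to-$b$ path $P_{ab}$, so $|V(P_{ab})| \leq d+1$, and set $\pi(w) := |\{(a,b) : w \in V(P_{ab})\}|$. Then
\begin{equation*}
\sum_{w \in V} \pi(w) \;=\; \sum_{(a,b)} |V(P_{ab})| \;\leq\; N^{2}(d+1).
\end{equation*}

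For each pair $(u,v) \in S \times (V \setminus S)$ and each $\sigma \in \Gamma$, the image $Q_{uv}^\sigma := \sigma(P_{\sigma^{-1}(u),\sigma^{-1}(v)})$ is again a shortest $u$-to-$v$ path (automorphisms preserve distances), and since it starts in $S$ and ends outside $S$ it must contain at least one vertex of $\partial S$. Averaging over uniformly chosen $\sigma \in \Gamma$ therefore gives
\begin{equation*}
|S|\cdot|V \setminus S| \;\leq\; \sum_{(u,v)} \mathbb{E}_\sigma\bigl[\,|V(Q_{uv}^\sigma)\cap \partial S|\,\bigr] \;=\; \sum_{w \in \partial S}\sum_{(u,v)} \Pr_\sigma\bigl[\sigma^{-1}(w) \in V(P_{\sigma^{-1}(u),\sigma^{-1}(v)})\bigr].
\end{equation*}

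The crucial step invokes vertex-transitivity twice: relaxing the restriction $(u,v) \in S \times (V\setminus S)$ in the inner sum to all pairs of $V \times V$ upper-bounds it by $\mathbb{E}_\sigma[\pi(\sigma^{-1}(w))]$, and orbit-stabilizer applied to the transitive $\Gamma$-action shows that $\sigma^{-1}(w)$ is uniformly distributed on $V$ for uniform $\sigma \in \Gamma$. Hence $\mathbb{E}_\sigma[\pi(\sigma^{-1}(w))] = N^{-1}\sum_{w' \in V}\pi(w') \leq N(d+1)$. Combining, $|S|\cdot|V\setminus S| \leq |\partial S|\cdot N(d+1)$, and since $|V\setminus S| \geq N/2$ we conclude $|\partial S|/|S| \geq 1/(2(d+1)) = \Omega(1/d)$, as required.

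The main obstacle is justifying the averaging step: without transitivity, $\pi$ could be concentrated on a small set of vertices, so the identity $\mathbb{E}_\sigma[\pi(\sigma^{-1}(w))] = N^{-1}\sum_{w'}\pi(w')$ would fail and no uniform $O(Nd)$ upper bound on the incidence count would hold. Everything else reduces to elementary double counting between the pairs in $S \times (V \setminus S)$ and the fixed system of shortest paths, combined with the diameter bound $|V(P_{ab})| \leq d+1$.
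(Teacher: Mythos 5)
Your proof is correct. Note that the paper does not prove this lemma at all: it is quoted as a black box from Babai and Szegedy, so there is no in-paper argument to compare against. Your averaging argument --- fixing a canonical system of shortest paths, pushing it around by a uniformly random automorphism so that every ordered pair $(u,v)\in S\times(V\setminus S)$ is joined by a shortest path that must meet $\partial S$, and then bounding the expected load $\mathbb{E}_\sigma[\pi(\sigma^{-1}(w))]=N^{-1}\sum_{w'}\pi(w')\le N(d+1)$ via the orbit--stabilizer uniformity of $\sigma^{-1}(w)$ --- is essentially the classical Babai--Szegedy proof, and all the steps (automorphisms map shortest paths to shortest paths, the relaxation of the inner sum to all of $V\times V$, the final bound $|\partial S|/|S|\ge 1/(2(d+1))$) check out. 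It yields the stated $\Omega(1/d)$ bound with an explicit constant, which is all the paper uses.
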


\begin{lemma}
The treewidth of $I_p^n$ is $\Omega(\frac{1}{n}|V(I_p^n)|)$.
\label{lem:vexptw}
\end{lemma}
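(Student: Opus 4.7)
The plan is to chain together the two structural facts just proved for $I_p^n$ via the standard expansion-to-separator-to-treewidth pipeline. Concretely, Lemma~\ref{lem:ipdtrans} gives vertex transitivity and Lemma~\ref{lem:ipddiam} gives diameter $\Theta(n)$, so Lemma~\ref{lem:vexp} yields vertex expansion $\alpha \geq c_0/n$ for some constant $c_0 > 0$ depending only on $p$. The rest of the proof is then a matter of converting this expansion bound into a lower bound on every balanced separator and invoking Lemma~\ref{lem:twsep}.

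In detail, I would fix a constant $c$ with $1/2 \leq c < 1$, say $c = 2/3$, and let $X$ be any $c$-separator of $I_p^n$, with $V(I_p^n)\setminus X = A \cup B$ as in the definition. By the definition of a $c$-separator, $A$ satisfies
\[
\tfrac{1}{3}|V(I_p^n)| \;=\; (1-c)|V(I_p^n)| \;\leq\; |A| \;\leq\; \tfrac{1}{2}|V(I_p^n)|,
\]
so $A$ is an eligible set in the definition of vertex expansion. Since $\partial A \subseteq X$, the expansion bound gives
\[
|X| \;\geq\; |\partial A| \;\geq\; \alpha \cdot |A| \;\geq\; \frac{c_0}{n}\cdot\frac{|V(I_p^n)|}{3} \;=\; \Omega\!\left(\frac{|V(I_p^n)|}{n}\right).
\]
Thus every $c$-separator of $I_p^n$ has size $\Omega(|V(I_p^n)|/n)$.

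To conclude, I would apply the first half of Lemma~\ref{lem:twsep}: setting $t = \tw(I_p^n)$, the lemma produces a (recursive) $c$-separator of size $t+1$ for this same $c$. Combining with the lower bound just derived gives $t + 1 = \Omega(|V(I_p^n)|/n)$, hence $\tw(I_p^n) = \Omega(|V(I_p^n)|/n)$ as desired.

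The proof is essentially a plug-and-play assembly of results already stated, so the only thing to watch is bookkeeping around the balance constant: the paper's definition of $c$-separator forces both sides to have size at least $(1-c)|V|$, which is exactly what we need so that the smaller side $A$ is a constant fraction of $V(I_p^n)$ and the expansion inequality produces an $\Omega(|V(I_p^n)|/n)$ lower bound. Choosing any single constant $c \in [1/2,1)$ makes the balance notions in Lemma~\ref{lem:twsep} and in the vertex-expansion argument compatible, so no subtle parameter tuning is required.
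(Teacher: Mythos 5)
Your proposal is correct and follows exactly the paper's argument: combine vertex transitivity (Lemma~\ref{lem:ipdtrans}) and the $\Theta(n)$ diameter (Lemma~\ref{lem:ipddiam}) via Babai--Szegedy (Lemma~\ref{lem:vexp}) to get expansion $\Omega(1/n)$, conclude every balanced separator has size $\Omega(|V(I_p^n)|/n)$, and invoke Lemma~\ref{lem:twsep} to transfer this to treewidth. You simply spell out the bookkeeping (eligibility of the smaller side $A$ in the expansion definition, $\partial A \subseteq X$) that the paper leaves implicit.
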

\begin{proof}
By applying Lemmas~\ref{lem:ipdtrans}, \ref{lem:ipddiam}, and \ref{lem:vexp}, along with the definition of vertex expansion, we have $|\partial S| = \Omega(\frac{|S|}{n})$ for all $S \subseteq V(I_p^n)$ with $0 \leq |S| \leq \frac{|V(I_p^n)|}{2}$, which implies that the size of any balanced vertex separator of $I_p^n$ is bounded from below by $\Omega(\frac{|V(I_p^n)|}{n})$.
It follows that the treewidth of $I_p^n$ is also bounded from below by $\Omega(\frac{|V(I_p^n)|}{n})$.
\end{proof}

We now count the number of pegsets in $V(I_p^n)$. 

\begin{lemma}\label{lem:numpegsets}
The number of regular pegsets in $I_p^n$ is $\Theta(n^{-(p-3)/2} \cdot (p - 2)^n)$.
\end{lemma}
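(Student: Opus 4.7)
The plan is to count regular pegsets directly from the definition, then extract the asymptotics via Stirling's formula. A regular pegset is determined by three independent choices:

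\begin{enumerate}
\item A choice of which $p-3$ of the $p$ pegs are frozen (equivalently, which $3$ pegs remain unfrozen). This contributes a factor of $\binom{p}{3}$, which is constant in $n$.
\item A choice of which $k := \tfrac{n-1}{p-2}$ disks are assigned to each of the $p-3$ frozen pegs. Since the frozen pegs are already distinguishable from step~1, and the remaining $\tfrac{n-1}{p-2}+1 = k+1$ disks are precisely the unfrozen ones, this is the number of ways to partition the $n$ labeled disks into $p-3$ labeled groups of size $k$ together with one (unordered) leftover group of size $k+1$, namely the multinomial coefficient
\[
\binom{n}{k,k,\dots,k,k+1} \;=\; \frac{n!}{(k!)^{p-3}\,(k+1)!}.
\]
\end{enumerate}

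Thus $|V(I_p^n)| = \binom{p}{3} \cdot \tfrac{n!}{(k!)^{p-3}(k+1)!}$, and the remaining task is purely asymptotic estimation.

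First I would apply Stirling's formula $m! \sim \sqrt{2\pi m}\,(m/e)^m$ to each factorial. For the multinomial above, with $p-2$ compartments whose sizes are all $\Theta(n/(p-2))$, the standard computation gives
\[
\frac{n!}{(k!)^{p-3}(k+1)!} \;=\; \Theta\!\left(\frac{(p-2)^n}{n^{(p-3)/2}}\right),
\]
because in general a multinomial $\tfrac{n!}{\prod_{i=1}^m a_i!}$ with $m$ compartments each of size $\Theta(n/m)$ is $\Theta(m^n / n^{(m-1)/2})$; here $m = p-2$, so the exponent of $n$ in the denominator is $(p-3)/2$. The slight imbalance caused by the last box having size $k+1$ rather than $k$ only perturbs the estimate by a factor of $(k+1)/e^{o(1)} \cdot 1/\sqrt{k+1}$ times constants depending on $p$, and hence is absorbed into the $\Theta(\cdot)$. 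Multiplying by the constant $\binom{p}{3}$ preserves the same $\Theta$-class.

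The step most likely to require care is the Stirling bookkeeping: one has to verify that the exponential factors $n^n / (k^k)^{p-3}(k+1)^{k+1}$ really do collapse to $\Theta((p-2)^n)$ after using $k = (n-1)/(p-2)$ (so that $k^k \cdot (p-2)^k = (n-1)^k \sim n^k / e^{O(1)}$), and that the polynomial prefactors combine to $\Theta(n^{-(p-3)/2})$. Neither subtlety is genuinely difficult, but both depend on $p$ being treated as a fixed constant, so I would state this assumption at the start of the proof to license discarding $p$-dependent multiplicative constants into the implicit $\Theta$.
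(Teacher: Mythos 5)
Your proposal is correct and follows essentially the same route as the paper: choose the $p-3$ frozen pegs (a constant factor), count the disk assignment by a multinomial coefficient, and extract the $\Theta\bigl(n^{-(p-3)/2}(p-2)^n\bigr)$ asymptotics via Stirling. Your version is marginally more careful than the paper's in keeping the exact part sizes ($p-3$ parts of size $k$ plus one of size $k+1$) rather than assuming all $p-2$ parts equal, but this changes nothing substantive.
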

\begin{proof}
There are ${p \choose p-3}$ ways to choose the frozen pegs.
Each pegset divides the disks into $p - 2$ sets of (almost) equal size and there are $\frac{n!}{(\frac{n}{p-2})!)^{p-2}}$ ways to choose the sets. 
This is because there are $n!$ ways to order the disks, but we only care about the ordering of the $p-2$ partitions of the disks, hence we divide by $\left(\frac{n}{p-2}!\right)^{p-2}$. (Asymptotically, we may assume $n \equiv 0 \mod {p - 2}$.)
In total there are ${p \choose p - 3} \cdot \frac{n!}{((\frac{n}{p-2})!)^{p-2}}$ pegsets.
Since $p$ is fixed, we apply Stirling's approximation to $\frac{n!}{((\frac{n}{p-2})!)^{p-2}}$ to obtain the result.
\end{proof}

By applying Lemmas \ref{lem:vexptw} and \ref{lem:numpegsets} we obtain the following corollary.

\begin{corollary}
$\tw(I_p^n) = \Omega(n^{-(p-1)/2} \cdot (p-2)^n)$.
\label{cor:ipd}
\end{corollary}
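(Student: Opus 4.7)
The plan is to combine the two immediately preceding lemmas: Lemma~\ref{lem:vexptw}, which gives $\tw(I_p^n) = \Omega\!\left(\frac{1}{n}|V(I_p^n)|\right)$, and Lemma~\ref{lem:numpegsets}, which gives $|V(I_p^n)| = \Theta(n^{-(p-3)/2} \cdot (p-2)^n)$.

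First, I would substitute the cardinality bound from Lemma~\ref{lem:numpegsets} directly into the treewidth lower bound from Lemma~\ref{lem:vexptw}. This yields
\[
\tw(I_p^n) \;=\; \Omega\!\left(\frac{1}{n} \cdot n^{-(p-3)/2} \cdot (p-2)^n\right).
\]
Then I would simplify the exponent on $n$: combining the $\frac{1}{n}$ factor with $n^{-(p-3)/2}$ produces $n^{-1-(p-3)/2} = n^{-(p-1)/2}$. This gives exactly the claimed bound $\tw(I_p^n) = \Omega(n^{-(p-1)/2} \cdot (p-2)^n)$.

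There is no real obstacle here; this corollary is purely a one-line computational consequence of the two lemmas, and the only substantive step is the exponent arithmetic. The genuine work has already been done in establishing vertex-transitivity (Lemma~\ref{lem:ipdtrans}), bounding the diameter by $O(n)$ (Lemma~\ref{lem:ipddiam}), invoking Babai--Szegedy vertex expansion (Lemma~\ref{lem:vexp}) to get Lemma~\ref{lem:vexptw}, and counting regular pegsets via Stirling's formula (Lemma~\ref{lem:numpegsets}). I would therefore present the corollary with a one- or two-sentence proof that simply cites these lemmas and carries out the arithmetic.
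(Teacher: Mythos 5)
Your proposal is correct and matches the paper exactly: the corollary is stated as an immediate consequence of Lemma~\ref{lem:vexptw} and Lemma~\ref{lem:numpegsets}, and your exponent arithmetic $1 + (p-3)/2 = (p-1)/2$ is right. Nothing further is needed.
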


Next we show how to obtain a lower bound of $\tw(H_p^n)$ from Corollary~\ref{cor:ipd}.
Since we have a lower bound on the treewidth of $I_p^n$, Lemma~\ref{lem:twhaven} guarantees the existence of a haven of a useful order.
The idea behind Lemma~\ref{lem:hpdipd} is to take a haven of order $\Omega(n^{-(p-1)/2} \cdot (p-2)^n)$ in $I_p^n$ and modify it to create a haven of the same order in $H_p^n$.

\begin{lemma}
$\tw(H_p^n) = \Omega(\tw(I_p^n))$.
\label{lem:hpdipd}
\end{lemma}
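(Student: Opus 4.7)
The plan is to construct a haven in $H_p^n$ from the haven in $I_p^n$ guaranteed by Corollary~\ref{cor:ipd} and Lemma~\ref{lem:twhaven}, losing only a constant factor depending on $p$. The key observation is that each vertex of $I_p^n$ is a regular pegset $u$, which naturally corresponds to a vertex set $G_u \subseteq V(H_p^n)$; moreover, $G_u$ induces a connected subgraph of $H_p^n$ (as noted in the excerpt, three pegs remain unfrozen), and two pegsets are adjacent in $I_p^n$ precisely when $G_u \cap G_v \neq \emptyset$.

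First I would prove a simple counting bound: every configuration $v \in V(H_p^n)$ lies in at most $C_p := \binom{p}{3}$ regular pegsets. This is because a regular pegset containing $v$ is uniquely determined by the choice of $p-3$ pegs to freeze, and such a choice is only valid when those pegs carry exactly $\lfloor (n-1)/(p-2) \rfloor$ disks in $v$; there are at most $\binom{p}{p-3} = \binom{p}{3}$ such choices. Hence for any $X \subseteq V(H_p^n)$, the set
\[
X' \;=\; \{u \in V(I_p^n) : G_u \cap X \neq \emptyset\}
\]
satisfies $|X'| \leq C_p \cdot |X|$.

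Next, let $\phi$ be a haven of order $k = \tw(I_p^n)+1$ in $I_p^n$. I would define $\psi$ on $H_p^n$ as follows: for each $X \subseteq V(H_p^n)$ with $|X| \leq k/C_p$, form $X'$ as above (so $|X'| \leq k$), and let $\psi(X)$ be the connected component of $H_p^n \setminus X$ containing $\bigcup_{u \in \phi(X')} G_u$. This union lies in $V(H_p^n) \setminus X$ because no $u \in \phi(X')$ belongs to $X'$, so $G_u \cap X = \emptyset$. It is connected in $H_p^n \setminus X$ because $\phi(X')$ is a connected subgraph of $I_p^n \setminus X'$, and adjacent pegsets in $I_p^n$ have intersecting $G_u, G_v$ by Lemma~\ref{lem:adj}, so their union of $G_u$'s forms a connected subgraph of $H_p^n$ avoiding $X$. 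Hence $\psi(X)$ is a well-defined nonempty connected component of $H_p^n \setminus X$.

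Finally I would verify monotonicity: if $X_1 \subseteq X_2$, then $X_1' \subseteq X_2'$, so by the haven property of $\phi$ we have $\phi(X_2') \subseteq \phi(X_1')$, giving $\bigcup_{u\in\phi(X_2')} G_u \subseteq \bigcup_{u\in\phi(X_1')} G_u$; since the containing component in the smaller deletion $H_p^n \setminus X_1$ must contain the containing component in the larger deletion $H_p^n \setminus X_2$, we conclude $\psi(X_2) \subseteq \psi(X_1)$. Thus $\psi$ is a haven of order $\lfloor k / C_p \rfloor$ in $H_p^n$, and Lemma~\ref{lem:twhaven} yields $\tw(H_p^n) \geq \lfloor k/C_p \rfloor - 1 = \Omega(\tw(I_p^n))$. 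The only delicate point is verifying that the union of the $G_u$'s for $u \in \phi(X')$ is genuinely connected in $H_p^n \setminus X$, which is exactly where the edge characterization of $I_p^n$ in terms of intersecting pegsets is essential.
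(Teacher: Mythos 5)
Your proposal is correct and follows essentially the same route as the paper's proof: lift the haven of $I_p^n$ to $H_p^n$ via the pegset correspondence, using connectivity of each individual pegset, the intersection characterization of adjacency for connectivity of the lifted component, monotonicity of the induced maps, and a constant bound on how many regular pegsets contain a given configuration. The only cosmetic difference is your multiplicity constant $\binom{p}{3}$ versus the paper's sharper $p-2$ (at most $p-2$ pegs can each carry exactly $\lfloor\frac{n-1}{p-2}\rfloor$ disks), which is immaterial for the asymptotic claim.
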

\begin{proof}
Let $k = \tw(I_p^n) + 1 = \Omega(n^{-(p-1)/2} \cdot (p-2)^n)$. By Lemma~\ref{lem:twhaven}, $I_p^n$ has a haven of order $k$. Call this haven $\phi$. Recall that a haven describes an evasion strategy for a robber in a cops-and-robbers game. Intuitively, if a robber can evade the cops in $I_p^n$, the same robber can also evade the cops in $H_p^n$ by playing only on states that belong to pegsets of $I_p^n$ and by paying attention only to which of those pegsets are occupied by at least one cop. We formalize this strategy below by constructing a haven for $H_p^n$ from $\phi$. Because a cop moving in $H_p^n$ may simultaneously occupy a constant number of pegsets in $I_p^n$,
the order of the haven we construct is a constant factor smaller than that of~$\phi$. 

Every vertex in $I_p^n$ corresponds to a pegset; every pegset corresponds to a set of configurations in the Towers of Hanoi game. Each of these configurations corresponds to a vertex in $H_p^n$. Define the function $f:\mathcal{P}(V(H_p^n)) \rightarrow \mathcal{P}(V(I_p^n))$, where for $X \subseteq V(H_p^n)$, $f(X)$ is the set of vertices in $I_p^n$ whose corresponding pegsets contain configurations in $X$. Define $g:\mathcal{P}(V(I_p^n)) \rightarrow \mathcal{P}(V(H_p^n))$, such that for $X' \subseteq V(I_p^n)$, $g(X')$ is the set of all configurations belonging to pegsets in $X'$. Let $g(X') = \emptyset$ if $X' = \emptyset$.

Define $\psi:\mathcal{P}(V(H_p^n))\rightarrow \mathcal{P}(V(H_p^n))$, such that $\psi(X)$ is the connected component containing $g(\phi(f(X)))$. To show that $\psi$ is a haven, it suffices to show that:
\begin{enumerate}
\item for all $X \subseteq V(H_p^n)$, $\psi(X)$ is well-defined\textemdash i.e. $g(\phi(f(X)))$ is connected and nonempty whenever $\phi(f(X))$ is nonempty,
\item for $Z \subseteq V(H_p^n)$, $\psi(Z) \subseteq \psi(X)$ whenever $X \subseteq Z$, and
\item $|X| = \Omega(f(X))$.
\end{enumerate}

For (1), to see that $g(\phi(f(X)))$ is connected in $H_p^n \setminus X$, consider any pair of configurations $u, v \in g(\phi(f(X)))$. $u$ and $v$ belong to pegsets $a$ and $b$ (respectively) in $\phi(f(X))$. $a$ has a path $P$ to $b$ in $\phi(f(X))$, since $\phi(f(X))$ is connected. Every vertex (pegset) $w$ in this path corresponds to the set $W' = g(w) \subseteq g(\phi(f(X)))$ of all configurations belonging to the pegset $w$. $W' \cap X = \emptyset$, or else by the definition of $f$, $w$ would be in $f(X)$, contradicting the fact that $w \in \phi(f(X))$. Furthermore, $W'$ is connected, since it is isomorphic to $H_3^d$ (where $d < n$). Also, every edge $(w_1, w_2)$ in $P$ corresponds to a vertex $w' \in g(\phi(f(X)))$ belonging to $W_1' = g(w_1)$ and $W_2' = g(w_2)$. Therefore, $u$ has a path to $v$ in $H_p^n \setminus X$, obtained by traversing an $H_3^d$ copy $W'$ for every vertex $w \in P$, and moving between $H_3^d$ copies $W'$ and $W''$ that intersect at a vertex in $g(\phi(f(X)))$ for every edge in $P$.

For (2), if $X \subseteq Z \subseteq V(H_p^n)$, then $f(X) \subseteq f(Z)$. Since $\phi$ is a haven, $\phi(f(Z)) \subseteq \phi(f(X))$. Therefore, $g(\phi(f(Z))) \subseteq g(\phi(f(X)))$, and both $g(\phi(f(Z)))$ and $g(\phi(f(X)))$ are connected. If $\phi(f(Z)) = \emptyset$, then $\psi(X) = \emptyset$, and (2) is true. Therefore, suppose $\phi(f(Z)) \neq \emptyset$. Suppose for a contradiction that $\psi(Z) \not\subseteq \psi(X)$. Let $u$ be a vertex in $\psi(Z) \cap \psi(X)$ (this intersection is nontrivial since it includes $g(\phi(f(Z)))$), and let $v$ be a vertex in $\psi(Z) \setminus \psi(X)$. Suppose $(u, v) \in E(H_p^n)$. (Such a pair must exist because $\psi(Z)$ is connected.) Since $X \subseteq Z$,
$$u, v \in \psi(Z) \subseteq  V(H_p^n) \setminus Z \subseteq V(H_p^n) \setminus X.$$
However, since $v \notin \psi(X)$, this contradicts that $\psi(X)$ is a connected component in $H_p^n \setminus X$.

(3) follows from the fact that $|f(X)| \leq (p - 2)|X|$, since every vertex belongs to at most $p - 2$ regular pegsets.
\end{proof}

We have now proven the second theorem of the section.

\begin{theorem}
For any fixed $p \geq 4$, $\tw(H_p^n) = \Omega(n^{-(p-1)/2} \cdot (p-2)^n)$. 
\label{thm:maintw}
\end{theorem}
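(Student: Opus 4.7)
The theorem is essentially a corollary of the machinery that has already been assembled in this section, so my plan is to simply chain together the two main intermediate results and verify that no hidden obstacle arises. The plan is to invoke Corollary~\ref{cor:ipd}, which gives $\tw(I_p^n) = \Omega(n^{-(p-1)/2} \cdot (p-2)^n)$ for the pegset intersection graph, and then apply Lemma~\ref{lem:hpdipd}, which transfers a lower bound on $\tw(I_p^n)$ to a lower bound of the same asymptotic order on $\tw(H_p^n)$.

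Concretely, I would first record the value $k = \tw(I_p^n) + 1 = \Omega(n^{-(p-1)/2} \cdot (p-2)^n)$ given by Corollary~\ref{cor:ipd}. Then I would apply Lemma~\ref{lem:hpdipd} to conclude $\tw(H_p^n) = \Omega(\tw(I_p^n))$, which combined with the previous line yields the bound $\tw(H_p^n) = \Omega(n^{-(p-1)/2} \cdot (p-2)^n)$. The hypothesis $p \geq 4$ is used implicitly to ensure that the intersection graph $I_p^n$ is nontrivially defined (recall that its construction freezes $p - 3$ pegs and leaves three unfrozen, so it is only meaningful when $p \geq 4$), and to ensure that the constant factor $p - 2$ appearing in step (3) of the proof of Lemma~\ref{lem:hpdipd} (bounding $|f(X)| \le (p-2)|X|$) is absorbed into the asymptotic notation.

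There is essentially no new obstacle to address: the hard work has been done in Lemmas~\ref{lem:ipdtrans} and~\ref{lem:ipddiam} (showing $I_p^n$ is vertex-transitive with diameter $\Theta(n)$), in the application of the Babai--Szegedy vertex expansion bound (Lemma~\ref{lem:vexp}) through Lemma~\ref{lem:vexptw}, in the counting argument of Lemma~\ref{lem:numpegsets} (which uses Stirling to extract the $n^{-(p-3)/2}(p-2)^n$ factor, and then loses another $\sqrt{n}$ when passing from the pegset count to the expansion-based treewidth bound, giving the $(p-1)/2$ exponent), and finally in the haven-lifting construction of Lemma~\ref{lem:hpdipd}. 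Since the only restriction used in the counting step was $n \equiv 1 \pmod{p-2}$, I would also note that for arbitrary $n$ we simply round $n$ down to the nearest such value, which affects the asymptotics only by a constant factor depending on $p$. The proof therefore reduces to a one- or two-line citation of Corollary~\ref{cor:ipd} and Lemma~\ref{lem:hpdipd}.
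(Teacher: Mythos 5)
Your proposal is correct and is exactly the paper's argument: the paper states Theorem~\ref{thm:maintw} immediately after Lemma~\ref{lem:hpdipd} with the remark that it has now been proven, i.e., the theorem is just the composition of Corollary~\ref{cor:ipd} with Lemma~\ref{lem:hpdipd}. (One trivial slip in your recap of the earlier machinery: passing from the pegset count $\Theta(n^{-(p-3)/2}(p-2)^n)$ to the treewidth bound via Lemma~\ref{lem:vexptw} loses a factor of $n$, not $\sqrt{n}$, which is what shifts the exponent from $(p-3)/2$ to $(p-1)/2$.)
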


In Appendix~\ref{sec:p4} we prove a lower bound on the treewidth of four-peg Hanoi graphs that,
while still separated by a polynomial factor from the upper bound, is tighter than the one above.

\section{Conclusion}
Theorem~\ref{thm:maintw} and Theorem~\ref{thm:maintwp4}, together with Theorem~\ref{thm:twh3} and Theorem~\ref{thm:ubound}, give nearly tight asymptotic bounds on the number of states, in the adversarial version of the Towers of Hanoi game we proposed in the introduction, that the first player must forbid in order to ensure better than even odds of defeating the second player. This raises additional questions. First, suppose the first player forbids enough states to separate the graph in a balanced way, but the second player is fortunate enough to have starting and ending positions in the same connected component. What is the optimal strategy for the second player, and how many moves will this strategy take? Must this strategy be formulated in graph-theoretic terms, or is there an algorithm that consists of moving the disks in an intuitive way?

Theorem~\ref{thm:maintwp4} (in Appendix~\ref{sec:p4}) improves the lower bound of Theorem~\ref{thm:maintw} when $p=4$; one question would be to see whether the technique in the proof of Theorem~\ref{thm:maintwp4} could be adapted to deal more generally with the structure of pegset intersection graphs when $p \geq 4$, yielding a bound of $\Omega(\frac{(p-2)^n}{n})$ in general when $p \geq 4$. However, this still would not eliminate the asymptotic gap between our upper and lower bounds.

To this end, Corollary~\ref{cor:kntw} gives a lower bound on the treewidth of the Kneser graph that is new when $2k + 1 \leq n \leq 3k - 1$. Can this lower bound be tightened? Harvey and Wood \cite{Harvey2014} showed that in this case (when $2k + 1 \leq n \leq 3k - 1$),
$$\tw(\kn(n, k)) < {n - 1 \choose k} - 1.$$
Since ${n - 1 \choose k} = \Theta({n \choose k})$ when $2k + 1 \leq n \leq 3k - 1$, this upper bound does not imply sublinear treewidth. However, if in fact the treewidth is sublinear, and can be used to obtain a sublinear vertex separator in $\dsg(n)$ (defined in Appendix~\ref{sec:p4}), then combined with a proof of asymptotic tightness in Lemma~\ref{lem:twdsgg4n} and Lemma~\ref{lem:twg4nhanoi}, this would imply that $\tw(H_4^n)$ is $o(2^n)$, proving that the upper bound in Theorem~\ref{thm:ubound} is not tight. This would be surprising, as the family of separators given in Theorem~\ref{thm:ubound} seems intuitively to target the ``weakest'' parts of the graph.

Another possible line of further research is whether the bound given in Lemma~\ref{lem:twtensor} is tight for the tensor product, and what can be said about other graph products. As stated in the introduction, Kozawa et al. \cite{kozawa} gave lower bounds for the Cartesian and strong products. Since the strong product of a graph has the same vertices as and a superset of the edges of the tensor product, our lower bound in Lemma~\ref{lem:twtensor} for the tensor product's treewidth immediately gives a lower bound on the treewidth of the strong product. However, Kozawa et al. \cite{kozawa} gave a stronger lower bound for the strong product. One question would be whether a comparable improvement over our bound can be proven for the tensor product.

\bibliography{paper}

\begin{thebibliography}{10}

\bibitem{CountingHanoi}
Danielle Arett and Suzanne Dorée.
\newblock {Coloring and counting on the tower of Hanoi graphs}.
\newblock {\em Mathematics Magazine}, 83(3):200--209, 2010.
\newblock \href {https://doi.org/10.4169/002557010x494841}
  {\path{doi:10.4169/002557010x494841}}.

\bibitem{Arnborg1990}
Stefan Arnborg, Andrzej Proskurowski, and Derek~G. Corneil.
\newblock Forbidden minors characterization of partial 3-trees.
\newblock {\em Discrete Mathematics}, 80(1):1--19, 1990.
\newblock \href {https://doi.org/https://doi.org/10.1016/0012-365X(90)90292-P}
  {\path{doi:https://doi.org/10.1016/0012-365X(90)90292-P}}.

\bibitem{Babai1992}
László Babai and Mario Szegedy.
\newblock Local expansion of symmetrical graphs.
\newblock {\em Combinatorics, Probability and Computing}, 1(1):1–11, 1992.
\newblock \href {https://doi.org/10.1017/S0963548300000031}
  {\path{doi:10.1017/S0963548300000031}}.

\bibitem{FPTsurvey}
Hans~L. Bodlaender.
\newblock Fixed-parameter tractability of treewidth and pathwidth.
\newblock In Hans~L. Bodlaender, Rod Downey, Fedor~V. Fomin, and D{\'a}niel
  Marx, editors, {\em The Multivariate Algorithmic Revolution and Beyond:
  Essays Dedicated to Michael R. Fellows on the Occasion of His 60th Birthday},
  volume 7370 of {\em Lecture Notes in Computer Science}, pages 196--227.
  Springer, 2012.
\newblock \href {https://doi.org/10.1007/978-3-642-30891-8_12}
  {\path{doi:10.1007/978-3-642-30891-8_12}}.

\bibitem{Bousch2014}
Thierry Bousch.
\newblock {La quatri{\`e}me tour de Hano{\"\i}}.
\newblock {\em Bulletin of the Belgian Mathematical Society -- Simon Stevin},
  21(5):895--912, 2014.
\newblock URL: \url{http://projecteuclid.org/euclid.bbms/1420071861}.

\bibitem{brevspac}
Bo\v{s}tjan Bre\v{s}ar and Simon \v{S}pacapan.
\newblock On the connectivity of the direct product of graphs.
\newblock {\em The Australasian Journal of Combinatorics}, 41:45--56, 2008.
\newblock URL: \url{https://ajc.maths.uq.edu.au/pdf/41/ajc_v41_p045.pdf}.

\bibitem{cstheory}
David Eppstein.
\newblock Treewidth of deep {S}ierpiński sieve graph.
\newblock Theoretical Computer Science Stack Exchange.
\newblock URL: \url{https://cstheory.stackexchange.com/q/36542}.

\bibitem{eppstein_et_al:LIPIcs:2019:10217}
David Eppstein and Elham Havvaei.
\newblock {Parameterized leaf power recognition via embedding into graph
  products}.
\newblock In Christophe Paul and Michal Pilipczuk, editors, {\em 13th
  International Symposium on Parameterized and Exact Computation (IPEC 2018)},
  volume 115 of {\em Leibniz International Proceedings in Informatics
  (LIPIcs)}, pages 16:1--16:14, Dagstuhl, Germany, 2019. Schloss
  Dagstuhl--Leibniz-Zentrum fuer Informatik.
\newblock \href {https://doi.org/10.4230/LIPIcs.IPEC.2018.16}
  {\path{doi:10.4230/LIPIcs.IPEC.2018.16}}.

\bibitem{erickson}
Jeff Erickson.
\newblock Computational topology: Treewidth.
\newblock Lecture Notes, 2009.
\newblock URL:
  \url{http://jeffe.cs.illinois.edu/teaching/comptop/2009/notes/treewidth.pdf}.

\bibitem{mythsmaths}
Andreas M.~Hinz et~al.
\newblock {\em The Tower of Hanoi \textemdash Myths and Maths}.
\newblock Birkh\"auser Basel, 2013.

\bibitem{Frankl1984}
Peter Frankl.
\newblock {A new short proof for the Kruskal--Katona theorem}.
\newblock {\em Discrete Mathematics}, 48(2--3):327--329, 1984.
\newblock \href {https://doi.org/10.1016/0012-365X(84)90193-6}
  {\path{doi:10.1016/0012-365X(84)90193-6}}.

\bibitem{Harvey2014}
Daniel~J. Harvey and David~R. Wood.
\newblock {Treewidth of the Kneser Graph and the Erd\H{o}s--Ko--Rado theorem}.
\newblock {\em Electronic Journal of Combinatorics}, 21(1):P1.48, 2014.
\newblock \href {https://doi.org/10.37236/3971} {\path{doi:10.37236/3971}}.

\bibitem{Cartesian}
Wilfried Imrich, Sandi Klavžar, and Douglas~F. Rall.
\newblock {\em Topics in Graph Theory: Graphs and Their Cartesian Product}.
\newblock A K Peters, 2008.

\bibitem{kozawa}
Kyohei Kozawa, Yota Otachi, and Koichi Yamazaki.
\newblock Lower bounds for treewidth of product graphs.
\newblock {\em Discrete Applied Mathematics}, 162(C):251–258, January 2014.

\bibitem{Lovasz1978}
L.~Lov{\'a}sz.
\newblock Kneser's conjecture, chromatic number, and homotopy.
\newblock {\em Journal of Combinatorial Theory, Series A}, 25(3):319--324,
  1978.
\newblock \href {https://doi.org/10.1016/0097-3165(78)90022-5}
  {\path{doi:10.1016/0097-3165(78)90022-5}}.

\bibitem{lovasz1993}
L.~Lov{\'a}sz.
\newblock {\em Combinatorial Problems and Exercises}.
\newblock AMS/Chelsea publication. North-Holland Publishing Company, 1993.

\bibitem{NesOss-S-12}
J.~Ne{\v{s}}et{\v{r}}il and P.~Ossona~de Mendez.
\newblock {\em {Sparsity: Graphs, Structures, and Algorithms}}, volume~28 of
  {\em Algorithms and Combinatorics}.
\newblock Springer, 2012.
\newblock \href {https://doi.org/10.1007/978-3-642-27875-4}
  {\path{doi:10.1007/978-3-642-27875-4}}.

\bibitem{3pegalg}
Miodrag~S. Petković.
\newblock {\em Famous Puzzles of Great Mathematicians}.
\newblock American Mathematical Society, 2009.

\bibitem{Seymour1993}
Paul~D. Seymour and Robin Thomas.
\newblock Graph searching and a min-max theorem for tree-width.
\newblock {\em Journal of Combinatorial Theory, Series B}, 58(1):22--33, 1993.
\newblock \href {https://doi.org/10.1006/jctb.1993.1027}
  {\path{doi:10.1006/jctb.1993.1027}}.

\bibitem{FrameStewart}
B.~M. Stewart and J.~S. Frame.
\newblock Problem 3918 and solution.
\newblock {\em The American Mathematical Monthly}, 48(3):216--219, 1941.
\newblock \href {https://doi.org/10.2307/2304268} {\path{doi:10.2307/2304268}}.

\bibitem{ValenciaPabon2005OnTD}
Mario Valencia-Pabon and Juan-Carlos Vera.
\newblock {On the diameter of Kneser graphs}.
\newblock {\em Discrete Mathematics}, 305(1--3):383--385, 2005.
\newblock \href {https://doi.org/10.1016/j.disc.2005.10.001}
  {\path{doi:10.1016/j.disc.2005.10.001}}.

\end{thebibliography}

\appendix
\newpage

\section{Four pegs}
\label{sec:p4}
Theorems~\ref{thm:ubound} and~\ref{thm:maintw}, together, give upper and lower bounds that differ by a polynomial factor in the number of disks of the Towers of Hanoi puzzle. Compared to the overall exponential size of the bound, this is a small gap, and it is tempting to try to close it further. The proof of Theorem~\ref{thm:maintw} identifies the pegset intersection graph ($I_p^n$) as a hard part of the graph to separate, and leverages the vertex-transitive structure of this graph.

However, there are many configurations in the game that are ignored by focusing on the $I_p^n$ graph: namely, all configurations where the numbers of disks on the pegs are arbitrary, i.e., not constrained to be equal to $\lfloor \frac{n}{p-2}\rfloor$ for $p - 3$ of the pegs. We broaden our analysis of pegsets to prove the main result of this section:
\begin{theorem}
$\tw(H_4^n) = \Omega(\frac{2^n}{n})$. 
\label{thm:maintwp4}
\end{theorem}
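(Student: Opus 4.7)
The plan is to mirror the vertex-expansion argument of Section~4 for $p=4$ while using a broader class of pegsets that recovers the $\sqrt{n}$ factor lost in the Stirling estimate of Lemma~\ref{lem:numpegsets}. Instead of the regular-pegset intersection graph $I_4^n$, I would work with a disk-set graph $\dsg(n)$ whose vertices encode pegsets obtained by freezing an \emph{arbitrary} non-empty subset $S \subseteq [n]$ of disks onto a chosen peg $q \in [4]$, not just subsets of size $\lfloor (n-1)/(p-2)\rfloor$. This gives $\Theta(2^n)$ vertices, and single-swap moves (changing one frozen disk, or changing the frozen peg) still connect any pair in $O(n)$ steps, so the diameter remains linear. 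I would also define a companion intersection graph $G_4^n$ over the actual Hanoi configurations, analogous to $I_4^n$ but allowing these generalized pegsets.

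The first step is Lemma~\ref{lem:twdsgg4n}, giving $\tw(\dsg(n)) = \Omega(2^n/n)$. Because $\dsg(n)$ is not vertex-transitive (subsets of different sizes are in distinct orbits of the natural $S_n \times S_4$ action), Babai--Szegedy does not apply directly. I would instead prove a vertex-isoperimetric inequality: the single-swap moves give $\dsg(n)$ a Cayley-like structure over the Boolean cube with an extra peg coordinate, and a Harper-type argument then yields vertex expansion $\Omega(1/n)$ and hence balanced separators of size $\Omega(2^n/n)$ via Lemma~\ref{lem:twsep}.

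The second step is Lemma~\ref{lem:twg4nhanoi}, transferring the bound from $\dsg(n)$ (through $G_4^n$) to $H_4^n$. This follows the haven-pullback construction of Lemma~\ref{lem:hpdipd}: starting from a haven $\phi$ of order $\Omega(2^n/n)$ in $\dsg(n)$ supplied by Lemma~\ref{lem:twhaven}, I would define $\psi(X)$ for $X \subseteq V(H_4^n)$ to be the connected component of $H_4^n \setminus X$ containing the union of the pegsets in $\phi(f(X))$, where $f$ maps a forbidden set of configurations to the generalized pegsets containing them. The verifications that $\psi$ is well-defined, connected, and monotone are essentially identical to those in the proof of Lemma~\ref{lem:hpdipd}; only a constant factor is lost in the haven order, since each configuration lies in $O(1)$ generalized pegsets.

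The main obstacle is the isoperimetric step in Lemma~\ref{lem:twdsgg4n}: the lack of vertex-transitivity forces a non-uniform expansion argument, and the layers of $\dsg(n)$ indexed by $|S|$ range in size from $O(1)$ at the extremes to $\Theta(2^n/\sqrt{n})$ in the middle. The approach I would take is to aggregate the bulk layers with $|S| \in n/2 \pm O(\sqrt{n})$, which together contribute $\Theta(2^n)$ vertices, apply Harper's vertex isoperimetric inequality within each layer (each being Kneser-like and vertex-transitive, so that Lemma~\ref{lem:vexp} does apply locally), and then glue the bounds using the inter-layer swap edges to show that any balanced cut must cross enough boundary to force a separator of size $\Omega(2^n/n)$. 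A fallback, if the gluing proves delicate, is to thicken $\dsg(n)$ by taking a product with a symmetric skeleton over the subset-size parameter, restoring enough symmetry for a modified Babai--Szegedy argument to go through uniformly.
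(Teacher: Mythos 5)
Your high-level architecture matches the paper's: generalize from regular pegsets to pegsets freezing an arbitrary subset of disks onto one peg (giving $\Theta(2^n)$ vertices), handle the resulting loss of vertex-transitivity by working layer-by-layer over the subset sizes $|S|\approx n/2\pm O(\sqrt n)$, apply the Babai--Szegedy expansion bound within each (vertex-transitive, Kneser-like) layer, glue the layers, and finally pull a haven back to $H_4^n$ as in Lemma~\ref{lem:hpdipd}. The final haven-pullback step and the "each configuration lies in $O(1)$ pegsets" observation are fine. But there are two genuine gaps. First, the edge structure of your generalized pegset graph is mis-specified. For the haven pullback to $H_4^n$ to work, an edge between two pegsets must certify a \emph{common configuration}, which forces the edges to be intersection edges: $u$ and $v$ are adjacent iff they freeze disjoint disk sets onto \emph{different} pegs. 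Your "single-swap" edges (change one frozen disk, or change the frozen peg) do not have this property --- two pegsets freezing different disk sets onto the same peg share no configuration --- so a robber strategy built on swap edges does not transfer to $H_4^n$. And with the correct disjointness edges, the graph is not a Cayley-like structure over the Boolean cube, so Harper's vertex-isoperimetric inequality is not applicable; this is precisely why the paper abandons a global isoperimetric argument.

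Second, the gluing step is where all the difficulty lives, and "glue the bounds using the inter-layer swap edges" does not supply a mechanism. What must be shown is that if $A_k$ and $A_l$ are large connected pieces (each more than half of its Kneser slice $\kn(n,k)$, $\kn(n,l)$) surviving the removal of a candidate separator, then some $k$-set in $A_k$ is \emph{disjoint} from some $l$-set in $A_l$; otherwise a cheap separator could in principle sever all the slices from one another and the per-slice expansion bounds would prove nothing about $\dsg(n)$ as a whole. The paper closes this with the Kruskal--Katona theorem (Corollary~\ref{cor:kkt}): taking complements of the sets in $A_k$ and bounding the size of their $l$-element shadow shows that the disjointness-neighborhood of $A_k$ in $\kn(n,l)$ covers nearly half of that slice and must therefore meet $A_l$ (Lemma~\ref{lem:klcc}). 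Without this shadow bound (or an equivalent cross-intersecting-family argument), your case analysis --- either the separator pays $\Omega(2^n/n)$ inside many slices or a large component survives --- does not close. Your fallback of symmetrizing over the size parameter to restore vertex-transitivity also does not obviously work, since the orbits of any natural group action preserve $|S|$, so no product construction makes the union of distinct-size layers vertex-transitive under disjointness edges.
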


We begin by generalizing the pegset intersection graph beyond regular pegsets.

\begin{definition}
Let $G_4^n$ be a graph whose vertices are the pegsets of $H_4^n$ that freeze only one peg and that freeze
at most $\lfloor\frac{n-1}{2}\rfloor$ disks onto that peg. In this graph, let vertices $u$ and $v$ be adjacent whenever
the pegsets $u$ and $v$ freeze mutually disjoint sets of disks, and freeze them onto separate pegs. 
\end{definition}
Clearly $I_4^n$ is an induced subgraph of $G_4^n$. We prove our improved bound by analyzing the relationship between $G_4^n$ and the \emph{Kneser graph}.

\begin{definition}[Lovasz \cite{Lovasz1978}]
Let $[n] = \{1, \dots, n\}$ be an indexing of the objects in an arbitrary set. The Kneser graph, denoted $\kn(n, k)$, is the graph whose vertices correspond to the $k$-element subsets of $[n]$, and whose edges are the pairs of vertices whose corresponding subsets are disjoint.
\end{definition}
We restrict our attention to Kneser graphs that are connected, namely the graphs $\kn(n, k)$ where $n \geq 2k + 1$.

The condition on disjoint subsets in the definition of Kneser graphs is analogous to the condition on disjoint subsets of pegs in the definition of $G_4^n$. (In fact, for any given $k \leq \lfloor \frac{n-1}{2}\rfloor$, the pegsets that freeze exactly $k$ disks induce as a subgraph of $G_4^n$ the tensor product of $\kn(n, k)$ with a 4-clique\textemdash see Definition~\ref{def:tensor}.) However, $G_4^n$ also includes a separate condition, of having different frozen pegs. An additional complication is that $G_4^n$ allows sets of different sizes
rather than only considering sets of a single size $k$.
To account for all set sizes appropriately, we introduce a generalization of the Kneser graph:

\begin{definition}
Let the disjoint subset graph, denoted $\dsg(n, r)$, be the graph whose vertices are identified with the subsets $s \subseteq [n]$ with $|s| \leq r$, and whose edges are the pairs of vertices whose corresponding subsets are disjoint.
\end{definition}

For convenience, we let $\dsg(n) = \dsg(n, \frac{n-1}{2})$. Clearly $|V(\dsg(n))| \approx 2^{n-1}$.
Then $V(G_4^n)$ consists of four copies of $V(\dsg(n))$, with pegsets $u$ and $v$ connected iff they are in different copies and they share an edge in $\dsg(n)$.
In Lemma~\ref{lem:twdsg} we bound the treewidth of $\dsg(n)$, after which we will use the relationship between $G_4^n$ and $\dsg(n)$ to prove Theorem~\ref{thm:maintwp4}.
\begin{lemma}
$\tw(\dsg(n)) = \Omega(\frac{2^n}{n})$.
\label{lem:twdsg}
\end{lemma}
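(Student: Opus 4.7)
The plan is to apply the same recipe used in Lemma~\ref{lem:vexptw} for $I_p^n$: exhibit an induced subgraph of $\dsg(n)$ that is vertex-transitive with a favorable diameter-to-size ratio, invoke Babai--Szegedy (Lemma~\ref{lem:vexp}), and conclude by monotonicity of treewidth under subgraph inclusion. The natural candidate is a Kneser graph $\kn(n,k)$, which sits inside $\dsg(n)$ as the subgraph induced by the vertices of cardinality exactly $k$, because adjacency in both graphs is just disjointness.

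The key choice is balancing the vertex count against the diameter by taking $k$ to be $\Theta(\sqrt n)$ below $n/2$. Neither endpoint of the valid range of $k$ works by itself: at $k = \lfloor (n-1)/2\rfloor$ the diameter of $\kn(n,k)$ is $\Theta(n)$, yielding only $\Omega(2^n/n^{3/2})$, whereas at much smaller $k$ the vertex count $\binom{n}{k}$ decays exponentially. Concretely, set $k := \lfloor n/2\rfloor - \lceil\sqrt n\rceil$; checking both parities of $n$, this satisfies $k \le \lfloor (n-1)/2\rfloor$ (so that $\kn(n,k)$ is a subgraph of $\dsg(n)$), while $n - 2k = \Theta(\sqrt n)$. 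Two ingredients then kick in: first, by the Valencia-Pab\'on--Vera diameter formula $\mathrm{diam}(\kn(n,k)) = \lceil (k-1)/(n-2k)\rceil + 1 = O(\sqrt n)$; second, since the displacement of $k$ from the central binomial is only $\Theta(\sqrt n)$, Stirling's formula (or the local central limit theorem) gives $\binom{n}{k} = \Theta(2^n/\sqrt n)$.

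Now $\kn(n,k)$ is vertex-transitive under the natural $S_n$ action on $k$-subsets, so Lemma~\ref{lem:vexp} hands us vertex expansion $\Omega(1/\sqrt n)$. Running the argument of Lemma~\ref{lem:vexptw} verbatim (with $\sqrt n$ in place of $n$), this lower-bounds the size of every balanced vertex separator of $\kn(n,k)$ by $\Omega(|V(\kn(n,k))|/\sqrt n) = \Omega(2^n/n)$, and hence $\tw(\kn(n,k)) = \Omega(2^n/n)$ via Lemma~\ref{lem:twsep}. Since treewidth is monotone under subgraph inclusion, $\tw(\dsg(n)) \ge \tw(\kn(n,k)) = \Omega(2^n/n)$.

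The main obstacle is the Kneser diameter bound, which is nontrivial but well-known; in the worst case one would reprove it in the relevant regime via a greedy ``swap'' argument that, starting from a $k$-set $A$, alternates to a disjoint $k$-set and back in order to exchange $\Theta(n-2k)$ elements per two steps and reach any target $B$. The remaining steps are routine applications of Stirling's formula, subgraph monotonicity, and the already-proved lemmas of Section~2. A minor bookkeeping point is to verify that the rounding in the choice of $k$ leaves $\kn(n,k)$ inside $\dsg(n)$ for both parities of $n$ while keeping $n - 2k = \Omega(\sqrt n)$, which the specific choice $k = \lfloor n/2\rfloor - \lceil\sqrt n\rceil$ handles.
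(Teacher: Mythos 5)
Your proof is correct, and it takes a genuinely different --- and considerably shorter --- route than the paper. The paper treats $\dsg(n)$ as a union of Kneser slices near $k = n/2$, shows each such slice has expansion $\Omega(1/n)$ (diameter $\Theta(n)$ at these $k$), and then invests substantial effort, via the Kruskal--Katona theorem (Lemmas~\ref{lem:klcc} and~\ref{lem:bigcc}), in showing that the large connected components of distinct slices remain joined after removing a candidate separator; this yields the slightly stronger statement that \emph{every} balanced separator of $\dsg(n)$ itself has size $\Omega(2^n/n)$. You instead exploit the freedom to choose a single slice with $k = \lfloor n/2\rfloor - \lceil\sqrt n\rceil$: shifting $k$ by $\Theta(\sqrt n)$ costs only a constant factor in $\binom{n}{k} = \Theta(2^n/\sqrt n)$, but by the Valencia-Pab\'on--Vera formula it drops the diameter to $O(\sqrt n)$, so Babai--Szegedy gives expansion $\Omega(1/\sqrt n)$ and hence $\tw(\kn(n,k)) = \Omega(2^n/n)$ for that one slice; subgraph monotonicity of treewidth then finishes. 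All the ingredients check out (the choice of $k$ stays within the range $k \le \lfloor(n-1)/2\rfloor$ required by Lemma~\ref{lem:kndiam}, and the passage from expansion to balanced-separator size to treewidth is exactly the argument of Lemma~\ref{lem:vexptw}), and the downstream Lemmas~\ref{lem:twdsgg4n} and~\ref{lem:twg4nhanoi} only consume the treewidth bound, so nothing later in the paper breaks. What your approach gives up is the Kruskal--Katona gluing machinery and the direct statement about separators of $\dsg(n)$ as a whole; what it buys is a proof that avoids Lemmas~\ref{lem:rootntotal}--\ref{lem:bigcc} entirely, and as a bonus it sharpens Corollary~\ref{cor:kntw} to $\tw(\kn(n,k)) = \Omega(|V(\kn(n,k))|/\sqrt n)$ in the regime $n - 2k = \Theta(\sqrt n)$.
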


We defer the formal proof of \autoref{lem:twdsg} to later but outline a proof sketch below.
The idea of the proof is to observe that $\dsg(n)$ consists of $\frac{n-1}{2}$ Kneser graph ``slices.'' We make observations analogous to those leading to Corollary~\ref{cor:ipd}: Kneser graphs are vertex-transitive (Remark~\ref{rmk:knvt}) and have diameter $O(n)$ (Lemma~\ref{lem:kndiam}), implying that for all $0 \leq k \leq \frac{n-1}{2}$, $\tw(\kn(n, k)) = \Omega(\frac{1}{n}|V(\kn(n, k))|)$ (Corollary~\ref{cor:kntw}). Since
$$|V(\dsg(n))| = \sum_{k=0}^\frac{n-1}{2} |V(\kn(n, k))|,$$
Lemma~\ref{lem:twdsg} then follows if we can, intuitively, show that the Kneser slices are hard to separate from one another. We formalize this notion and show that it is true for most of the slices. The argument relies on the subset definitions of the Kneser graphs' vertices, and makes use of the \emph{Kruskal--Katona Theorem} (Corollary~\ref{cor:kkt}).

We prove that given a balanced vertex separator $X$ for $\dsg(n)$, either:
\begin{enumerate}
\item $X$ contains a large number of the vertices in $\dsg(n)$ (at least an $\Omega(\frac{1}{n})$ factor), or
\item after removing $X$ from $\dsg(n)$, there is still a large connected component in $\dsg(n)$, leading to a contradiction.
\end{enumerate}

In the second case, we derive the contradiction as follows: we observe that after removing $X$ from $\dsg(n)$, if case (1) does not hold, then most of the vertices of $\dsg(n)$ lie in Kneser slices that have large connected components, since their intersection with $X$ contains too few vertices for a balanced separator. Call this set of Kneser slices $K_{conn}(X)$. We prove that every pair of subgraphs $G_k = \kn(n, k)$ and $G_l = \kn(n, l)$ in $K_{conn}(X)$ have large connected components $A_k$ and $A_l$ that share an edge. Therefore, these large connected components, together, form a large connected component in $\dsg(n) \setminus X$, from which we derive the desired contradiction.

Finally, we use our lower bound on the treewidth of $\dsg(n)$ to derive a lower bound on the treewidth of $G_4^n$, and in turn on the treewidth of $H_4^n$. We obtain the former by proving a more general claim about the treewidth of the tensor product of two graphs, and the latter by a proof analogous to that of Lemma~\ref{lem:hpdipd}.

We begin by showing the required lower bound on the treewidth of the Kneser graph. We use the following result of Valencia-Pabon and Vera:

\begin{lemma}[Valencia-Pabon and Vera~\cite{ValenciaPabon2005OnTD}]
If $1 \leq k \leq \lfloor\frac{n-1}{2}\rfloor$, then the diameter of $\kn(n, k)$ is $\lceil\frac{k-1}{n-2k}\rceil + 1.$
\label{lem:kndiam}
\end{lemma}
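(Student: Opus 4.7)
The plan is to prove matching upper and lower bounds on the diameter. For the upper bound, I will construct, for any two $k$-subsets $A, B \subseteq [n]$, an explicit walk $A = X_0, X_1, \ldots, X_d = B$ in $\kn(n,k)$ of length $d \leq \lceil (k-1)/(n-2k)\rceil + 1$ with $X_i \cap X_{i+1} = \emptyset$ for all $i$. The key quantitative observation is that a single transition from $X_i$ to $X_{i+1}$ draws $X_{i+1}$ from the $n - k$ elements of $[n] \setminus X_i$; of these, $n - 2k$ lie outside any fixed $k$-subset such as $B$, giving a budget of $n - 2k$ elements that can be ``parked'' or exchanged per step. The construction alternates: odd-indexed steps park a fresh chunk of $A \setminus B$ into the free region while preserving as much of the current progress toward $B$ as possible, and even-indexed steps draw newly available elements of $B$ back into the current set. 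Tracking how many elements of $B$ remain to be incorporated shows that after $\lceil (k-1)/(n-2k)\rceil$ such steps we can arrange $X_i$ to be disjoint from $B$, and one final step lands at $B$.

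For the lower bound, I will exhibit an extremal pair $A, B$ and argue that no shorter walk exists via a potential-function argument. The worst case is not $|A \cap B| = k - 1$ (which as one checks for small examples such as $\kn(7,3)$ already admits a walk of length $2$), but an intermediate overlap chosen to saturate the ceiling in the formula; in $\kn(2k+1, k)$ the diameter $k$ is realized by a pair $A, B$ whose overlap is forced to be as small as possible subject to a parity constraint. The obstruction argument bounds how much $|X_i \cap B|$ can change per step: any element of $X_{i+1} \cap B$ not already in $X_i$ must be drawn from $B \setminus X_i$, and since $|X_{i+1}| = k$ while $X_{i+1}$ must be disjoint from $X_i$, the walk can incorporate at most $n - 2k$ fresh $B$-elements per step. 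Amortizing this counting over the walk, combined with the observation that a walk of odd length must end on the ``opposite side'' from $A$ (in the sense of containing no elements of $X_{d-1}$), yields the matching lower bound.

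The main obstacle will be the parity interaction between the disjointness constraint and the counting argument. Each step flips to a set disjoint from the previous one, so a walk of length $d$ partitions into two alternating ``sides,'' and whether $d$ should be even or odd is forced by the structural relationship between $A$ and $B$. This parity is responsible for the $+1$ offset in the formula, and keeping it rigorous without losing a factor of two or producing an off-by-one in the ceiling is the subtle part of both the construction and the obstruction. A clean way to handle this is induction on $s := k - |A \cap B|$, isolating as base cases the situations where $s \leq n - 2k$ (for which a walk of length at most $2$ suffices) and then extending by prepending or appending a single ``parking step'' that reduces $s$ by exactly $n - 2k$ while preserving the overall disjointness structure.
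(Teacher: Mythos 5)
The paper does not prove this lemma at all: it is imported as a black-box citation to Valencia-Pabon and Vera, so there is no in-paper argument to compare yours against. Judged on its own, your plan has the right skeleton and matches the published proof's structure: one establishes the exact distance formula $d(A,B)=\min\bigl\{2\lceil\frac{k-s}{n-2k}\rceil,\;2\lceil\frac{s}{n-2k}\rceil+1\bigr\}$ for $|A\cap B|=s$ (an even route that exchanges elements in place and an odd route that first escapes to a set disjoint from $B$), and then maximizes over $s$. Your identification of $n-2k$ as the relevant budget, the two-sided parity structure, and the base case ``$k-|A\cap B|\le n-2k$ implies distance at most $2$'' (which is correct, since $|[n]\setminus(A\cup B)|=n-2k+|A\cap B|\ge k$ exactly then) are all genuinely the right ingredients.

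However, the sketch contains quantitative errors that would derail a direct write-up. First, the $n-2k$ budget is a \emph{two-step} budget, not a one-step budget: since $X_{i+1}\subseteq[n]\setminus X_i$, every element of $X_{i+1}\cap B$ is ``fresh,'' and $|X_{i+1}\cap B|$ can be as large as $k-|X_i\cap B|$ (indeed $X_{i+1}=B$ is reachable in one step whenever $X_i\cap B=\emptyset$), so your claim that a step incorporates at most $n-2k$ fresh $B$-elements is false as stated. The correct obstruction is that, writing $a_i=|B\setminus X_i|$, one has $a_{i+1}\ge k-a_i$ and $a_{i+1}\le n-k-a_i$, hence $a_{i+2}\ge a_i-(n-2k)$; the potential only decreases by $n-2k$ per \emph{pair} of steps. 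Second, and for the same reason, the inductive step ``prepend or append a single parking step that reduces $s$ by exactly $n-2k$'' cannot work: a single step lands on the opposite side of the alternation and sends the overlap to roughly its complement; you must prepend two steps to reduce $s$ by $n-2k$ while preserving parity. Third, your upper-bound construction as described always terminates via the odd route (``arrange $X_i$ disjoint from $B$, then one final step''), which has length $2\lceil\frac{s}{n-2k}\rceil+1$ and exceeds $\lceil\frac{k-1}{n-2k}\rceil+1$ when $s$ is large (e.g.\ $s=k-1$); for those pairs you must switch to the even route of length $2\lceil\frac{k-s}{n-2k}\rceil$, so the construction has to explicitly take the minimum of the two. (A minor further slip: in $\kn(2k+1,k)$ the extremal overlap is not minimal but sits near $s\approx\frac{2k-1}{4}$, where the two routes balance.) With the per-two-step accounting and the explicit case split on which route is shorter, your outline can be completed into the standard proof.
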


\begin{remark}
When $k \leq \frac{n-1}{2}$, the diameter in Lemma~\ref{lem:kndiam} is $O(n)$.
\label{rmk:kndiamon}
\end{remark}

The following fact about Kneser graphs is well known; it also follows from a straightforward adaptation of the proof of Lemma~\ref{lem:ipdtrans}.

\begin{remark}
All Kneser graphs are vertex-transitive.
\label{rmk:knvt}
\end{remark}

Combining Lemmas~\ref{lem:vexp} and~\ref{lem:kndiam} with Remarks~\ref{rmk:kndiamon} and~\ref{rmk:knvt}, and observing the relationship between vertex expansion, balanced separators, and treewidth (as we did in the proof of Lemma~\ref{lem:vexptw}), gives the following corollary:
\begin{corollary}
For all $k$ such that $1\leq k \leq \frac{n-1}{2}$, $\tw(\kn(n, k)) = \Omega(\frac{1}{n}|V(\kn(n, k))|)$, and for every constant $c$, the minimum size of a $c$-separator in $\kn(n, k)$ is $\Omega(\frac{1}{n}|V(\kn(n, k))|)$.
\label{cor:kntw}
\end{corollary}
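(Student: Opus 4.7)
The plan is to assemble the corollary directly from the pieces already collected in the paper, essentially mirroring the argument used for $I_p^n$ in the proof of Lemma~\ref{lem:vexptw}. First I would fix $k$ with $1 \leq k \leq \frac{n-1}{2}$ and note that $\kn(n,k)$ is vertex-transitive by Remark~\ref{rmk:knvt} and has diameter $d = \lceil\frac{k-1}{n-2k}\rceil + 1 = O(n)$ by Lemma~\ref{lem:kndiam} together with Remark~\ref{rmk:kndiamon}. Applying the Babai--Szegedy bound (Lemma~\ref{lem:vexp}) then immediately gives that the vertex expansion of $\kn(n,k)$ is $\Omega(1/d) = \Omega(1/n)$.

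The second step is to convert the expansion bound into a bound on balanced separators. Given a constant $\tfrac{1}{2} \leq c < 1$, let $X$ be a $c$-separator of $\kn(n,k)$ cutting the remaining vertices into parts $A$ and $B$ with $|A| \leq |B| \leq c|V|$. Then $|A| \geq (1-c)|V| - |X|$, and every edge from $A$ to $B$ in $\kn(n,k)$ must be incident to a vertex of $X$; in particular $\partial A \subseteq X$. If $|X|$ were $o(|V|/n)$, then for all sufficiently large $n$ we would have $|A| \geq \tfrac{1}{2}(1-c)|V|$, and by the vertex expansion bound $|X| \geq |\partial A| = \Omega(|A|/n) = \Omega(|V|/n)$, a contradiction. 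Hence every $c$-separator has size $\Omega(\tfrac{1}{n}|V(\kn(n,k))|)$.

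Finally, for the treewidth bound I would invoke the first half of Lemma~\ref{lem:twsep}: any graph with treewidth $t$ admits a balanced separator of order $t+1$ with respect to any constant $c \in [\tfrac{1}{2},1)$. Combining this with the separator lower bound just established for $\kn(n,k)$ gives $\tw(\kn(n,k)) = \Omega(\tfrac{1}{n}|V(\kn(n,k))|)$, which is the desired conclusion.

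There is no real obstacle here, since every ingredient (vertex-transitivity, the diameter formula, Babai--Szegedy, and the separator/treewidth correspondence) is available as a black box; the only mildly delicate point is making sure the conversion from vertex expansion to $c$-separator size goes through for the smaller side of the cut rather than for arbitrary subsets, which is handled by the observation that $(1-c)|V| - |X|$ is still $\Theta(|V|)$ once $|X| = o(|V|)$ is assumed for the sake of contradiction.
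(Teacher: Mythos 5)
Your proposal is correct and follows essentially the same route as the paper, which derives the corollary by combining vertex-transitivity (Remark~\ref{rmk:knvt}), the diameter bound (Lemma~\ref{lem:kndiam} and Remark~\ref{rmk:kndiamon}), and Babai--Szegedy (Lemma~\ref{lem:vexp}), and then converting vertex expansion into separator and treewidth lower bounds exactly as in the proof of Lemma~\ref{lem:vexptw}. Your write-up merely makes explicit the expansion-to-separator step that the paper leaves implicit, and does so correctly.
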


Before turning to the interfaces between the Kneser slices, we establish a threshold value such that most of the vertices of $\dsg(n)$ lie in $\kn(n, k)$ slices with values of $k$ exceeding this threshold. Restricting our attention (in Lemma~\ref{lem:klcc}) to these slices will allow us to prove the mutual connectedness of the large connected components in case (2).

\begin{restatable}{lemma}{rootntotal}
For every constant $\beta$ with $\frac{1}{2} < \beta < 1$, there exists a constant $\varepsilon$ such that $$\lim_{n\rightarrow \infty}\frac{\sum_{k=\frac{n}{2}-\varepsilon\sqrt{n}}^{\frac{n}{2}}|V(\kn(n, k))|}{|V(\dsg(n))|} \geq \beta.$$
\label{lem:rootntotal}
\end{restatable}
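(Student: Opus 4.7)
The plan is to apply the DeMoivre--Laplace central limit theorem to the symmetric binomial distribution. First I would observe that $|V(\kn(n,k))| = \binom{n}{k}$ and $|V(\dsg(n))| = \sum_{k=0}^{\lfloor(n-1)/2\rfloor}\binom{n}{k}$. Using the symmetry $\binom{n}{k}=\binom{n}{n-k}$, one sees this sum equals $2^{n-1}$ when $n$ is odd and equals $(2^n - \binom{n}{n/2})/2 = 2^{n-1}(1-o(1))$ when $n$ is even (using $\binom{n}{n/2} = O(2^n/\sqrt n)$). In either case $|V(\dsg(n))|/2^n \to 1/2$.

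Second, I would let $X_n \sim \operatorname{Bin}(n,1/2)$, which has mean $n/2$ and standard deviation $\sqrt n/2$, and write
\[
\sum_{k=\lceil n/2-\varepsilon\sqrt n\rceil}^{\lfloor n/2\rfloor}\binom{n}{k} \;=\; 2^n \cdot \Pr\bigl[\,n/2 - \varepsilon\sqrt n \leq X_n \leq n/2\,\bigr].
\]
Standardizing to $Z_n = 2(X_n - n/2)/\sqrt n$, the DeMoivre--Laplace theorem says the right-hand probability converges to $\Phi(0)-\Phi(-2\varepsilon) = \Phi(2\varepsilon) - 1/2$, where $\Phi$ is the standard normal CDF. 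Dividing by $|V(\dsg(n))|/2^n \to 1/2$ would yield
\[
\lim_{n\to\infty}\frac{\sum_{k=n/2-\varepsilon\sqrt n}^{n/2}\binom{n}{k}}{|V(\dsg(n))|} \;=\; 2\Phi(2\varepsilon) - 1.
\]

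To finish, since this limit tends to $1$ as $\varepsilon \to \infty$, for any $\beta \in (1/2,1)$ I can choose $\varepsilon$ large enough that $2\Phi(2\varepsilon) - 1 \geq \beta$, which gives the lemma. I do not expect a substantive obstacle: the statement is essentially the textbook fact that the symmetric binomial distribution is concentrated within $O(\sqrt n)$ of its mean. The only bookkeeping needed involves floor and ceiling corrections in the summation indices and the parity of $n$ in $|V(\dsg(n))|$, but each such correction alters the relevant sums only by a factor of $1+o(1)$ and is therefore absorbed into the limit.
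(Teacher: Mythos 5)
Your proof is correct, and it follows the same overall strategy as the paper's: rewrite the partial sum of binomial coefficients as a probability for a $\operatorname{Bin}(n,\tfrac{1}{2})$ random variable, note that $|V(\dsg(n))| = 2^{n-1}(1-o(1))$, and invoke concentration of the binomial about its mean at scale $\sqrt{n}$. The one genuine difference is the concentration tool. The paper uses Chebyshev's inequality, which needs only the variance of the binomial and yields the explicit closed-form choice $\varepsilon = \frac{1}{2\sqrt{1-\beta}}$, via the bound $\Pr[|X-\tfrac{n}{2}|\geq \varepsilon\sqrt{n}]\leq \frac{1}{4\varepsilon^2}$. You instead invoke the DeMoivre--Laplace central limit theorem, a heavier hammer that identifies the limit exactly as $2\Phi(2\varepsilon)-1$ rather than merely bounding it below by $1-\frac{1}{4\varepsilon^2}$; the price is that $\varepsilon$ is then specified only implicitly through the normal CDF. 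Either route suffices, since all the lemma requires is that the ratio can be pushed above any fixed $\beta<1$ by taking $\varepsilon$ large. Your treatment of the parity of $n$ (the even case, where $|V(\dsg(n))| = 2^{n-1} - \tfrac{1}{2}\binom{n}{n/2}$) is in fact slightly more careful than the paper's, which simply takes $|V(\dsg(n))| = 2^{n-1}$.
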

\begin{proof}
Let $B(n, p)$ denote the binomial distribution parameterized with probability $p$. The standard deviation of $B(n, \frac{1}{2})$ is $\frac{\sqrt{n}}{2}$. If $f$ is the probability mass function of $B(n, \frac{1}{2})$, then $f(k) = \frac{1}{2^n}{n \choose k}$.

Let $X$ be a random variable distributed according to $B(n, p)$.

By Chebyshev's inequality,
$$Pr[|X - \frac{n}{2}| \geq \varepsilon\sqrt{n}] \leq \frac{1}{4\varepsilon^2}.$$

Setting $\varepsilon = \frac{1}{2\sqrt{1-\beta}}$, so that $\beta = 1 - \frac{1}{4\varepsilon^2}$, yields the desired result, since
$$\sum_{k=\frac{n}{2}-\varepsilon\sqrt{n}}^{\frac{n}{2}}{n \choose k} = \frac{1}{2}\sum_{k=\frac{n}{2}-\varepsilon\sqrt{n}}^{\frac{n}{2}+\varepsilon\sqrt{n}}{n \choose k} = 2^{n-1}\cdot Pr[|X - \frac{n}{2}| \leq \varepsilon\sqrt{n}] \geq 2^{n-1}(1 - \frac{1}{4\varepsilon^2}),$$
and since $|V(\dsg(n))| = 2^{n-1}$.
\end{proof}

In Lemma~\ref{lem:bigcc} we will prove the existence of the large connected component from which the contradiction is derived in case (2) of the discussion following the statement of Lemma~\ref{lem:twdsg}. To do so, we will use the following lemma:
\begin{restatable}{lemma}{lemklcc}
\label{lem:klcc}
Let $\varepsilon > 0$ be fixed. Suppose $\frac{n-1}{2} - \varepsilon\sqrt{n} \leq l < k \leq \frac{n-1}{2}$, and let $A_k$ and $A_l$ be subsets, respectively, of the vertices in the $\kn(n, k)$ and $\kn(n, l)$ subgraphs of $\dsg(n)$. Suppose further that $|A_k| \geq d|V(\kn(n, k))|$ and $|A_l| \geq d|V(\kn(n, l))|$, where $d > \frac{1}{2}$ is a constant. Then, if $n$ is sufficiently large, $A_k$ and $A_l$ share an edge.
\end{restatable}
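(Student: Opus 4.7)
The plan is a short probabilistic pigeonhole argument: because $l < k \le \tfrac{n-1}{2}$ gives $k+l < n$, we can sample a uniformly random disjoint pair $(S,T) \in \binom{[n]}{k}\times \binom{[n]}{l}$ whose marginals are both uniform, and then apply a two-event union bound against the strict density hypothesis $d > 1/2$. No heavy machinery (Kruskal--Katona, cross-intersecting-family bounds, etc.) is needed.

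Concretely, I would draw a uniformly random bijection $\pi\colon [n] \to [n]$ and set $S := \pi(\{1,\dots,k\})$ and $T := \pi(\{k+1,\dots,k+l\})$. By construction $S \cap T = \emptyset$, so every realization of $(S,T)$ is an edge of $\dsg(n)$ between a vertex of $\kn(n,k)$ and a vertex of $\kn(n,l)$. A direct count shows that each $k$-subset of $[n]$ arises as $S$ under exactly $k!(n-k)!$ of the $n!$ bijections, so $S$ is uniform on $\binom{[n]}{k}$, and analogously $T$ is uniform on $\binom{[n]}{l}$; the two are correlated, but only the marginals will matter. Thus $\Pr[S\in A_k] \ge d$ and $\Pr[T\in A_l] \ge d$, and the complement form of the union bound yields
\[
\Pr[\,S\in A_k \text{ and } T\in A_l\,] \;\ge\; \Pr[S\in A_k] + \Pr[T\in A_l] - 1 \;\ge\; 2d - 1 \;>\; 0.
\]
Some realization therefore lies in $A_k \times A_l$ and gives the required edge of $\dsg(n)$.

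A couple of remarks. The argument uses only $k+l < n$ and the strict inequality $d > 1/2$; neither the tighter lower bound $k, l \ge \tfrac{n-1}{2} - \varepsilon\sqrt{n}$ nor the ``sufficiently large $n$'' qualifier is needed for this step, and both are presumably inherited from how the lemma is invoked. The strictness $d > 1/2$ is essentially sharp, though: the star family of all $k$-subsets containing a fixed element has density $k/n \approx 1/2$ and is cross-intersecting with the corresponding star in $\binom{[n]}{l}$, so any attempt to weaken the hypothesis to $d \ge 1/2$ would require genuinely extremal combinatorial input. Under the stated hypotheses, however, there is no real obstacle — the proof is a three-line pigeonhole, and the main conceptual content is simply recognizing that a single random disjoint pair with uniform marginals is available whenever $k+l<n$.
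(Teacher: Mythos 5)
Your proof is correct, and it takes a genuinely different and more elementary route than the paper. The paper's proof passes to the complement family $B_k = \{\overline{v} \mid v \in A_k\}$ and invokes the Kruskal--Katona theorem (Corollary~\ref{cor:kkt}) to lower-bound the neighborhood $\delta_l(A_k)$ by $\binom{n-1}{l} \geq (\frac{1}{2}-o(1))\binom{n}{l}$, and then uses $|A_l| > \frac{1}{2}\binom{n}{l}$ to force an intersection; this is where the hypotheses $l \geq \frac{n-1}{2} - \varepsilon\sqrt{n}$ and ``$n$ sufficiently large'' enter, since they are needed for $\binom{n-1}{l}$ to be close to $\frac{1}{2}\binom{n}{l}$. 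Your argument instead samples a single uniformly random disjoint pair $(S,T)$ with uniform marginals (which exists because $k + l < n$) and applies the complement form of the union bound to get $\Pr[S \in A_k \wedge T \in A_l] \geq 2d-1 > 0$. This is airtight: the marginal uniformity claim is a correct symmetry/counting argument, disjointness is built in, and the conclusion follows for all $n$, not just large $n$. What your approach buys is simplicity and slightly greater generality (the window around $\frac{n}{2}$ and the asymptotic qualifier become unnecessary, as you correctly note); what the paper's approach buys is quantitative information --- Kruskal--Katona shows that the \emph{entire} neighborhood $\delta_l(A_k)$ occupies nearly half of $\kn(n,l)$, not merely that one cross edge exists --- though that extra strength is not used anywhere in the paper. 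Your sharpness remark about the stars is also correct and shows $d > \frac{1}{2}$ cannot be relaxed.
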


The proof of Lemma~\ref{lem:klcc} uses the \emph{Kruskal--Katona Theorem} (Corollary~\ref{cor:kkt}), which provides a lower bound, given a collection $\mathcal{F}$ of $k$-element subsets of $[n]$, on the number of $l$-element subsets of $[n]$ that are subsets of sets in $\mathcal{F}$. The following formulation of the Kruskal-Katona theorem is due to Lov{\'a}sz (Frankl gave a short proof):
\begin{theorem}[Kruskal--Katona Theorem~\cite{Frankl1984},\cite{lovasz1993}]
\label{thm:kktk1}
Let $\mathcal{F}$ be a family of $k$-element subsets of $[n]$, and let $\mathcal{E}$ be the set of all $k-1$-element subsets of sets in $\mathcal{F}$. Then whenever $|\mathcal{F}| \geq {m \choose k}$, $|\mathcal{E}| \geq {m \choose k - 1}$.
\end{theorem}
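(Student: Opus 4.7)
The plan is to prove the theorem via the shifting (compression) method, reducing an arbitrary family $\mathcal{F}$ to a structured ``left-compressed'' one and then handling the compressed case by induction. Since the $(k-1)$-shadow is monotone under inclusion of families, by passing to a subfamily of size exactly $\binom{m}{k}$ it suffices to prove the bound when $|\mathcal{F}| = \binom{m}{k}$. For each pair $1 \leq i < j \leq n$, define the shift $S_{ij}$: for every $A \in \mathcal{F}$ with $j \in A$, $i \notin A$, replace $A$ by $(A \setminus \{j\}) \cup \{i\}$ provided the latter is not already in $\mathcal{F}$. This preserves $|\mathcal{F}|$, and the key lemma is $|\partial S_{ij}(\mathcal{F})| \leq |\partial \mathcal{F}|$, where $\partial \mathcal{F}$ denotes the $(k-1)$-shadow $\mathcal{E}$. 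I would prove this by exhibiting, for each $B \in \partial S_{ij}(\mathcal{F}) \setminus \partial \mathcal{F}$ (necessarily with $i \in B$, $j \notin B$), a distinct set $B' = (B \setminus \{i\}) \cup \{j\} \in \partial \mathcal{F} \setminus \partial S_{ij}(\mathcal{F})$, yielding an injection from new shadow elements to lost ones.

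Using a monotone potential such as $\sum_{A \in \mathcal{F}} \sum_{a \in A} 2^{-a}$ that strictly increases under any nontrivial shift, iterating the $S_{ij}$ over all pairs terminates in finitely many steps with a \emph{left-compressed} family $\mathcal{F}^{*}$ of the same cardinality and no larger shadow. For such an $\mathcal{F}^{*}$, partition it as $\mathcal{F}^{*} = \mathcal{F}_0 \sqcup \mathcal{F}_1$ with $\mathcal{F}_1 = \{A \in \mathcal{F}^{*} : n \in A\}$, and set $\mathcal{F}_1' = \{A \setminus \{n\} : A \in \mathcal{F}_1\}$. Left-compression forces $\mathcal{F}_1' \subseteq \mathcal{F}_0$ when $\mathcal{F}_0$ and $\mathcal{F}_1'$ are viewed as families of $k$- and $(k-1)$-subsets of $[n-1]$ respectively, and the shadow decomposes as the disjoint union $\partial \mathcal{F}^{*} = \partial \mathcal{F}_0 \sqcup \{C \cup \{n\} : C \in \partial \mathcal{F}_1'\}$, distinguished by whether $n$ is present.

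The induction is on $n$ (with $k$ as an inner parameter). Using Pascal's identity $\binom{m}{k} = \binom{m-1}{k} + \binom{m-1}{k-1}$, I would split the hypothesis $|\mathcal{F}^{*}| = \binom{m}{k}$ between $\mathcal{F}_0$ and $\mathcal{F}_1'$; applying the induction hypothesis inside $[n-1]$ then gives $|\partial \mathcal{F}_0| \geq \binom{m-1}{k-1}$ and $|\partial \mathcal{F}_1'| \geq \binom{m-1}{k-2}$, summing via Pascal to $\binom{m}{k-1}$. The main obstacle will be the shifting lemma: verifying the injection $B \mapsto B'$ is well-defined requires a careful case analysis of which witnesses $A \in \mathcal{F}$ produce $B$ in the new shadow but fail to produce $B'$ in the old. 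A secondary subtlety is that the actual split $(|\mathcal{F}_0|, |\mathcal{F}_1'|)$ need not match the Pascal decomposition $(\binom{m-1}{k}, \binom{m-1}{k-1})$ on the nose, so one must invoke monotonicity of the minimum shadow to reduce to that convenient splitting; alternatively, Frankl's short proof bypasses the induction via a slick direct counting, but still relies on the same shifting reduction as its unavoidable core.
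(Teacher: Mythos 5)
This statement is the Kruskal--Katona theorem in Lov\'asz's binomial form; the paper does not prove it but cites Frankl and Lov\'asz, so there is no in-paper argument to compare against. Your strategy --- reduce to a left-compressed family by shifting, show the shadow cannot grow under a shift, then induct using a one-element split and Pascal's identity --- is exactly the strategy of the cited Frankl proof, and your treatment of the shifting lemma (the injection $B \mapsto (B \setminus \{i\}) \cup \{j\}$ from newly created shadow elements to destroyed ones) and of termination via a monotone potential is sound.

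The gap is in the final induction step, and it is real. First, the containment you assert, $\mathcal{F}_1' \subseteq \mathcal{F}_0$, does not typecheck ($(k-1)$-sets versus $k$-sets); what left-compression actually yields for a split on the element $n$ (the element \emph{away from} which you compress) is $\mathcal{F}_1' \subseteq \partial\mathcal{F}_0$. That suffices for the disjoint decomposition $|\partial\mathcal{F}^*| = |\partial\mathcal{F}_0| + |\partial\mathcal{F}_1'|$, but it points the wrong way for closing the induction. Second, to apply the inductive hypothesis to both pieces you need $|\mathcal{F}_0| \geq \binom{m-1}{k}$ \emph{and} $|\mathcal{F}_1'| \geq \binom{m-1}{k-1}$ simultaneously, and nothing forces this: for $\mathcal{F}^* = \binom{[m]}{k}$ with $m < n$ you have $\mathcal{F}_1' = \emptyset$ and the entire shadow must come from $\mathcal{F}_0$ alone. ``Monotonicity of the minimum shadow'' cannot redistribute mass between $\mathcal{F}_0$ and $\mathcal{F}_1'$ of a fixed compressed family; what you would actually need is a superadditivity/convexity statement for the map $\binom{x}{k} \mapsto \binom{x}{k-1}$ in the real-parameter Lov\'asz form, which is a genuine extra lemma. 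The standard repair (Frankl's) is to split on the element \emph{toward} which you compress, say element $1$: stability under the shifts $S_{1j}$ gives the reverse containment $\partial\mathcal{F}_0 \subseteq \mathcal{F}_1'$, hence $|\partial\mathcal{F}^*| \geq |\mathcal{F}_1'| + |\partial\mathcal{F}_1'|$, and the bad case $|\mathcal{F}_1'| < \binom{m-1}{k-1}$ is then outright impossible: it forces $|\mathcal{F}_0| > \binom{m-1}{k}$, hence $|\partial\mathcal{F}_0| \geq \binom{m-1}{k-1}$ by induction, contradicting $\partial\mathcal{F}_0 \subseteq \mathcal{F}_1'$. With that substitution your outline becomes a complete proof.
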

Applying induction on $l = k - 1, \dots, 1$ to Theorem~\ref{thm:kktk1} implies the following corollary:
\begin{corollary}
\label{cor:kkt}
Let $\mathcal{F}$ be a family of $k$-element subsets of $[n]$, and let $\mathcal{E}$ be the set of all $l$-element subsets of sets in $\mathcal{F}$, where $1 \leq l < k$. Then whenever $|\mathcal{F}| \geq {m \choose k}$, $|\mathcal{E}| \geq {m \choose l}$.
\end{corollary}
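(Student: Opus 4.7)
The plan is to prove Corollary~\ref{cor:kkt} by downward induction on $l$, starting from $l = k-1$ and iterating Theorem~\ref{thm:kktk1} one level at a time. For each $j \in \{1, \ldots, k\}$, define the $j$-shadow
\[
\mathcal{E}_j := \{A \subseteq [n] : |A| = j \text{ and } A \subseteq S \text{ for some } S \in \mathcal{F}\},
\]
so that $\mathcal{E}_k = \mathcal{F}$ and $\mathcal{E}_l$ is exactly the set $\mathcal{E}$ appearing in the statement. My goal will be to show that $|\mathcal{E}_j| \geq \binom{m}{j}$ for every $j$ with $l \leq j \leq k$; specializing to $j = l$ then yields the corollary.

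The base case $j = k$ is the hypothesis $|\mathcal{F}| \geq \binom{m}{k}$. For the inductive step, I would assume $|\mathcal{E}_{j+1}| \geq \binom{m}{j+1}$ and then apply Theorem~\ref{thm:kktk1} with $\mathcal{E}_{j+1}$ in the role of $\mathcal{F}$ and $j+1$ in the role of $k$. The theorem concludes that the collection of all $j$-element subsets of sets in $\mathcal{E}_{j+1}$ has cardinality at least $\binom{m}{j}$. If I can show that this collection equals $\mathcal{E}_j$, then the induction closes and the proof is complete after at most $k - l$ iterations.

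The only substantive point to check is therefore this equality: the $j$-shadow of $\mathcal{F}$ coincides with the $j$-shadow of its $(j+1)$-shadow. One direction is immediate, since every member of $\mathcal{E}_{j+1}$ is by definition contained in some $S \in \mathcal{F}$, so any $j$-subset of a set in $\mathcal{E}_{j+1}$ is automatically a $j$-subset of a set in $\mathcal{F}$. For the reverse direction, given $A \in \mathcal{E}_j$ with $A \subseteq S$ and $|S| = k \geq j+1$, I would adjoin any element of $S \setminus A$ to form a $(j+1)$-subset $A' \subseteq S$; then $A' \in \mathcal{E}_{j+1}$ and $A \subseteq A'$, so $A$ is a $j$-subset of a member of $\mathcal{E}_{j+1}$. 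This transitivity of shadows is the heart of the argument, and there is no real obstacle beyond recording it; the corollary then follows by routine iteration of Theorem~\ref{thm:kktk1}.
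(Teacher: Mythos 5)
Your proof is correct and follows essentially the same route as the paper, which simply asserts that the corollary follows "by applying induction on $l = k-1, \dots, 1$" to Theorem~\ref{thm:kktk1}. Your explicit verification that the $j$-shadow of $\mathcal{F}$ coincides with the $j$-shadow of its $(j+1)$-shadow is exactly the detail the paper leaves implicit, and it is carried out correctly.
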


Using Corollary~\ref{cor:kkt}, we prove Lemma~\ref{lem:klcc}:
\begin{proof}{\emph{(Proof of Lemma~\ref{lem:klcc})}}
For every $v \in V(\dsg(n))$, view $v$ as the $k$-size subset with which it is identified, and let $\overline{v}$ be the set complement of $v$.

Let $B_k = \{\overline{v} \mid v \in A_k\}$. Define a function $\delta_l$ mapping vertices in $\kn(n, k)$ to their neighborhoods in $\kn(n, l)$: for all $v \in V(\kn(n, k))$, let $\delta_l(v) = \{w \in V(\kn(n, l)) \mid (v, w) \in E(\dsg(n))\}$.

Extend the domain of $\delta_l$ to sets of vertices in $\kn(n, k)$: for all $Z \subseteq V(\kn(n, k))$, let $\delta_l(Z) = \bigcup_{v \in Z}\delta_l(v)$.

Clearly, a vertex $u \in \kn(n, l)$ is in $\delta_l(A_k)$ iff there exists some $w \in B_k$ such that, viewing the vertices in their combinatorial sense, $u \subseteq w$.

I.e., $\delta_l(A_k)$ consists precisely of the vertices that are identified with subsets of vertices in $B_k$. Since
$$|B_k| = |A_k| > \frac{1}{2}|\kn(n, k)| = \frac{1}{2}{n \choose k} \geq {n - 1 \choose n - k},$$ Corollary~\ref{cor:kkt} implies that
$$|\delta_l(A_k)| \geq {n - 1 \choose l} \geq (\frac{1}{2} - o(1)){n \choose l} - 1 \geq (\frac{1}{2} - o(1))|V(\kn(n, l))| - 1.$$
In the above inequalities we use the (easily verified) fact that whenever $\frac{n-1}{2} - \varepsilon\sqrt{n} \leq i \leq \frac{n-1}{2}$,
$${n - 1 \choose i} \geq (\frac{1}{2} - o(1)){n \choose i} - 1,$$
and
$${n - 1 \choose n - i} \leq \frac{1}{2}{n \choose i}.$$

Since by assumption $|A_l| \geq d|V(\kn(n, l))|$ with $d > \frac{1}{2}$, this implies that for sufficiently large $n$, $\delta_l(A_k) \cap A_l \neq \emptyset$. That is, some vertex in $A_k$ shares an edge with some vertex in $A_l$. 
\end{proof}

We are ready to formalize case (2) (Lemma~\ref{lem:bigcc}) in the discussion following the statement of Lemma~\ref{lem:twdsg}.

\begin{lemma}
\label{lem:bigcc}
Let $X$ be a vertex separator for $\dsg(n)$. Let $\frac{1}{2} < c < 1$ and $\varepsilon > 0$ be constants.
Let
$$K_{big} = \{\kn(n, k) | \frac{n-1}{2} - \varepsilon\sqrt{n} \leq k \leq \frac{n-1}{2}\}$$ be the largest $\varepsilon\sqrt{n}$ Kneser subgraphs of $\dsg(n)$.
Let $$K_{conn}(X) = \{\kn(n, k) \in K_{big} | \frac{|X \cap V(\kn(n, k))|}{|V(\kn(n, k))|} < f(n)\},$$
where $f$ is any function such that $f(n) = O(\frac{1}{n})$.

Then if $n$ is sufficiently large, for all $\kn(n, k) \in K_{conn}(X)$, $\kn(n, k) \setminus X$ has a connected component $A_k$ of size at least $c(1 - O(\frac{1}{n}))|V(\kn(n, k))|$, and for all $l \neq k$, if $\kn(n, l) \in K_{conn}(X)$, then $A_k$ and $A_l$ share an edge.

\end{lemma}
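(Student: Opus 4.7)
The plan is to establish the two conclusions of the lemma in sequence: first, the existence of a large connected component $A_k$ in each slice $\kn(n,k) \in K_{conn}(X)$; second, the edge-sharing property between any two such components. The first will follow from \autoref{cor:kntw} by observing that $X$'s intersection with the slice is too small to form a balanced separator; the second will follow almost directly from \autoref{lem:klcc} once the density hypothesis is checked.

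For the first conclusion, fix $\kn(n,k) \in K_{conn}(X)$. The defining inequality of $K_{conn}(X)$ gives
\[
|X \cap V(\kn(n,k))| < f(n) \cdot |V(\kn(n,k))| = O(\tfrac{1}{n}) \cdot |V(\kn(n,k))|.
\]
On the other hand, \autoref{cor:kntw} guarantees that every $c$-separator of $\kn(n,k)$ has size $\Omega(\tfrac{1}{n} |V(\kn(n,k))|)$. Assuming (as is standard when this lemma is applied) that the implicit constant in $f = O(\tfrac{1}{n})$ is strictly smaller than the implicit constant in this $\Omega$ bound, for $n$ sufficiently large $X \cap V(\kn(n,k))$ cannot be a $c$-separator of $\kn(n,k)$. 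Since $c > \tfrac{1}{2}$, standard arguments about balanced separators then force some connected component $A_k$ of $\kn(n,k) \setminus X$ to contain at least $c(1 - O(\tfrac{1}{n})) \cdot |V(\kn(n,k))|$ vertices, with the slack absorbing the removed set $X$.

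For the second conclusion, let $\kn(n,k), \kn(n,l) \in K_{conn}(X)$ with $k \neq l$ and, without loss of generality, $l < k$. Both slices belong to $K_{big}$, so $\tfrac{n-1}{2} - \varepsilon\sqrt{n} \leq l < k \leq \tfrac{n-1}{2}$, matching the index hypothesis of \autoref{lem:klcc}. The components $A_k$ and $A_l$ each have size at least $c(1 - O(\tfrac{1}{n})) \cdot |V(\kn(n,\cdot))|$; since $c > \tfrac{1}{2}$ is a fixed constant, for $n$ large enough this multiplier exceeds some constant $d > \tfrac{1}{2}$. Invoking \autoref{lem:klcc} with this $d$ then produces the required edge of $\dsg(n)$ between $A_k$ and $A_l$.

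The main obstacle is the careful bookkeeping of constants: the application of \autoref{cor:kntw} requires the $O(\tfrac{1}{n})$ constant in the definition of $K_{conn}(X)$ to be smaller than the one in the $c$-separator lower bound, and the application of \autoref{lem:klcc} requires that the density $c(1 - O(\tfrac{1}{n}))$ of each $A_k$ in its slice strictly exceed $\tfrac{1}{2}$. Both of these conditions hold for all sufficiently large $n$, after which the lemma assembles immediately from \autoref{cor:kntw} and \autoref{lem:klcc}.
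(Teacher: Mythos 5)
Your proposal is correct and follows essentially the same route as the paper's proof: Corollary~\ref{cor:kntw} shows $X\cap V(\kn(n,k))$ is too small to be a $c$-separator of the slice, yielding the large component $A_k$, and Lemma~\ref{lem:klcc} (applicable because both slices lie in $K_{big}$ and $c(1-O(\frac{1}{n}))>d>\frac12$ for large $n$) supplies the connecting edge. Your explicit remark about the implicit constant in $f(n)=O(\frac{1}{n})$ needing to sit below the separator lower bound is the same hidden assumption the paper relies on (and later discharges by choosing $f$ to be exactly that threshold), so the two arguments match.
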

\begin{proof}
By Corollary~\ref{cor:kntw}, for all $\kn(n, k) \in K_{conn}(X)$, the minimum $c$-separator size for $\kn(n, k)$ is $\Omega(\frac{1}{n}|V(\kn(n, k))|)$, which by assumption is more than the vertices of $X$ that lie in $\kn(n, k)$ \textemdash at least when $n$ is sufficiently large. This implies that $A_k$ is of the stated size. For the second part of the claim, consider any $A_k, A_l$ pair. $A_k$ and $A_l$ are connected by Lemma~\ref{lem:klcc}, since $c(1 - O(\frac{1}{n})) \geq d$ for every constant $d$ such that $c \geq d > \frac{1}{2}$. The lemma follows.
\end{proof}

We are now ready to prove Lemma~\ref{lem:twdsg}. We choose numerical values instead of symbols for some of the constants that appear in the proof to make the argument more intuitive, although there are other values that work. 
\begin{proof}{\emph{(Proof of Lemma~\ref{lem:twdsg})}}
Choose any constant $\frac{1}{2} < c < \frac{4}{7}$. Let $X$ be a $c$-separator for $\dsg(n)$. 

We will show that either $X$ contains many vertices from large Kneser slices (those in $K_{sep}(X)$, which we define below), or most (more than a factor of $c$) of the vertices of $\dsg(n) \setminus X$ lie in a large connected component, so that $X$ is not a $c$-separator.

Let $K_{big}$ be the set of $\kn(n, k)$ subgraphs with $\frac{n-1}{2} - \varepsilon\sqrt{n} \leq k \leq \frac{n-1}{2}$, where $\varepsilon$ is chosen so that $\frac{|V(K_{big})|}{|V(\dsg(n))|} \geq \frac{8}{9}$. (We choose $\frac{8}{9}$ to make the argument work for $c < \frac{4}{7}$.) Let 
$$K_{sep}(X) = \{\kn(n, k) \in K_{big} | \frac{|X \cap \kn(n, k)|}{|\kn(n, k)|} \geq f(n)\},$$
where $f(n) = \Theta(\frac{1}{n})$ is the lower bound given by Corollary~\ref{cor:kntw} on the minimum $\frac{5}{7}$-separator size for $\kn(n, k) \in K_{big}$. (We choose $\frac{5}{7}$ because it produces the desired result for $c < \frac{4}{7}$.)

Let $K_{conn}(X) = K_{big} \setminus K_{sep}(X)$. There are two cases:
\begin{enumerate}
\item $\frac{|V(K_{sep}(X))|}{|V(K_{big})|} \geq \frac{1}{10}$.
\item $\frac{|V(K_{conn}(X))|}{|V(K_{big})|} > \frac{9}{10}$.
\end{enumerate}

(We choose $\frac{1}{10}$ and $\frac{9}{10}$, again to make the argument work for $c < \frac{4}{7}$.)

In case 1, since $K_{sep}(X)$ is defined so that $\frac{|X \cap V(K_{sep}(X))|}{|V(K_{sep}(X))|} \geq f(n)$,
$$\frac{|X \cap K_{sep}(X)|}{|V(\dsg(n))|} \geq \frac{|V(K_{sep}(X))|}{|V(K_{big})|}\cdot \frac{|V(K_{big})|}{|V(\dsg(n))|}\cdot f(n) \geq \frac{1}{10}\cdot \frac{8}{9}\cdot f(n) = \Omega(f(n)) = \Omega(\frac{1}{n}).$$

In this case we are done.

In case 2, Lemma~\ref{lem:bigcc} implies that there exists a connected component $A_k$ in every $\kn(n, k) \subseteq K_{conn}(X)$ of size at least $(\frac{5}{7} - O(\frac{1}{n}))|V(\kn(n, k))|$, and that every pair $A_k$ and $A_l$ are mutually connected. This implies that $\dsg(n) \setminus X$ has a connected component $A$ such that
$$\frac{|V(A)|}{|V(\dsg(n))|} \geq (\frac{5}{7} - O(\frac{1}{n}))\frac{|V(K_{conn}(X))|}{|V(\dsg(n))|} \geq (\frac{5}{7} - O(\frac{1}{n}))(\frac{9}{10})(\frac{|V(K_{big})|}{|V(\dsg(n))|}$$
$$\geq (\frac{5}{7} - O(\frac{1}{n}))(\frac{9}{10})(\frac{8}{9}) > \frac{4}{7} > c.$$
This contradicts the assumption that $X$ is a $c$-separator for $\dsg(n)$.
\end{proof}

To show that $\tw(H_4^n) = \Omega(\tw(\dsg(n))$, we first show that the treewidth of the generalized pegset intersection graph $G_4^n$ defined earlier is at least that of $\dsg(n)$, then that $\tw(H_4^n) = \Omega(\tw(G_4^n))$. Both of these are accomplished via haven mappings (Lemmas~\ref{lem:twdsgg4n} and \ref{lem:twg4nhanoi}) of a similar flavor to Lemma~\ref{lem:hpdipd}.

\begin{restatable}{lemma}{lemtwdsggn}
$\tw(G_4^n) = \Omega(\tw(\dsg(n)))$.
\label{lem:twdsgg4n}
\end{restatable}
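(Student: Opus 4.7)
The plan is to observe that $G_4^n$ is isomorphic to the tensor product $\dsg(n) \times K_4$: a vertex is a pair (disk subset, peg), and two such pairs are adjacent iff the disk sets are disjoint (an edge of $\dsg(n)$) and the pegs differ (an edge of $K_4$). Guided by Lemma~\ref{lem:hpdipd}, I would prove the bound by converting a haven in $\dsg(n)$ into a haven of comparable order in $G_4^n$, and then invoke Lemma~\ref{lem:twhaven}.

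Let $t = \tw(\dsg(n))$, and let $\phi$ be a haven of order $t+1$ in $\dsg(n)$, which exists by Lemma~\ref{lem:twhaven}. For $X \subseteq V(G_4^n)$, define the projection $X' = \{v \in V(\dsg(n)) : |\{i \in [4] : (v,i) \in X\}| \geq 2\}$, so that $|X'| \leq |X|/2$. Whenever $|X| \leq 2(t+1)$, we have $|X'| \leq t+1$, so $C := \phi(X')$ is a nonempty connected component of $\dsg(n) \setminus X'$. Define the lift $\tilde{C} = \{(v,i) \in V(G_4^n) \setminus X : v \in C\}$, and let $\psi(X)$ be the connected component of $G_4^n \setminus X$ containing $\tilde{C}$.

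To see that $\psi$ is a haven of order $2(t+1)$, I would verify three properties. First, $\tilde{C}$ is nonempty, since every $v \in C$ has at least three free copies ($v \notin X'$ means at most one copy of $v$ lies in $X$). Second, $\tilde{C}$ is connected in $G_4^n \setminus X$: writing $A_v \subseteq [4]$ for the set of free pegs of $v$ and $N_v = \{v\} \times A_v$, it suffices to show that for every edge $(v,w)$ of $\dsg(n)$ with $v,w \in C$, the subgraph induced on $N_v \cup N_w$ is connected; this follows because $|A_v|, |A_w| \geq 3$, so any two free copies $(v,i), (w,j)$ with $i \neq j$ are directly adjacent, and any two free copies $(v,i), (v,i')$ can be joined through some $(w,j)$ with $j \in A_w \setminus \{i,i'\}$, which exists since $|A_w| \geq 3$. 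Finally, for monotonicity, $X \subseteq Y$ implies $X' \subseteq Y'$, so $\phi(Y') \subseteq \phi(X')$ and hence $\tilde{C}_Y \subseteq \tilde{C}_X \subseteq \psi(X)$; since $\psi(Y)$ is a connected subset of $G_4^n \setminus Y \subseteq G_4^n \setminus X$ meeting $\psi(X)$, it must sit inside $\psi(X)$.

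Combining these, $\psi$ is a haven of order $2(t+1)$, so Lemma~\ref{lem:twhaven} gives $\tw(G_4^n) \geq 2(t+1) - 1 = \Omega(\tw(\dsg(n)))$, as desired. The main technical obstacle is ensuring that $\tilde{C}$ really is connected inside $G_4^n \setminus X$: a more permissive projection requiring only one or two free copies per vertex of $C$ would fail, since the tensor product structure can leave $N_v \cup N_w$ disconnected (already $K_2 \times K_2$ is disconnected). Demanding three free copies is precisely what lets us route around any pair of coincident pegs using the remaining pegs of $K_4$, and it is what costs the factor of $2$ (rather than $4$) in the order of $\psi$ via the inequality $|X'| \leq |X|/2$.
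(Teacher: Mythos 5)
Your proposal is correct and rests on the same basic strategy the paper uses for Lemma~\ref{lem:twdsgg4n}: identify $G_4^n$ with $\dsg(n)\times K_4$ and lift a haven of $\dsg(n)$ to a haven of the product via the projection onto the first factor. The difference lies in the projection and the connectivity mechanism. The paper proves the general Lemma~\ref{lem:twtensor} (for any connected $G$ and any connected non-bipartite $H$, $\tw(G\times H)\ge\tw(G)$): there a vertex of $G$ is declared hit as soon as \emph{any} of its copies is removed, so $|f(X)|\le|X|$ and the lifted haven has order only $\tw(G)+1$, and connectedness of the lifted component is obtained by traversing an odd-length walk in $H$ while alternating between the two endpoints of an edge of $G$ --- non-bipartiteness of $H$ is exactly what supplies such walks. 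You instead exploit the specific factor $K_4$: discarding a vertex of $\dsg(n)$ only when at least two of its four copies are removed leaves at least three free copies per surviving vertex, which is enough to route around any two coincident pegs, and it buys a factor of $2$ in the haven order ($\tw(G_4^n)\ge 2\tw(\dsg(n))+1$ rather than $\ge\tw(\dsg(n))$). That constant is irrelevant to the asymptotic statement, and the price is generality: your routing genuinely needs three surviving copies, so it would fail for $H=K_3$ or an odd cycle, where the paper's odd-walk argument still applies.

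One small gap to patch, which the paper also dispatches in a single sentence in the proof of Lemma~\ref{lem:twtensor}: when $\phi(X')$ is a singleton $\{v\}$ (possible, since isolating $v$ in $\dsg(n)$ may cost far fewer than $\tw(\dsg(n))$ deletions), the lift $\tilde C=N_v$ induces no edges, and its copies need not all lie in one component of $G_4^n\setminus X$, so ``the component containing $\tilde C$'' is not obviously well-defined. The paper's fix --- designate one fixed copy of $v$ and take its component --- works there because its projection guarantees every copy of a surviving vertex is free; under your projection a fixed designated copy may itself lie in $X$, so you must choose the representative copy with a little more care to preserve monotonicity. This is repairable but should be spelled out.
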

We prove Lemma~\ref{lem:twdsgg4n} as a special case of a more general claim, Lemma~\ref{lem:twtensor}, about the treewidth of the \emph{tensor product} of graphs:

\begin{definition}
The tensor product of graphs $G$ and $H$, denoted $G \times H$, is the graph whose vertex set is the Cartesian product $V(G) \times V(H)$, and whose edges are the pairs of $(u_1, v_1)$ and $(u_2, v_2)$ whose first and second components share edges in $E(G)$ and $E(H)$ respectively, i.e.
$$\{((u_1, v_1), (u_2, v_2)) \mid u_1, u_2 \in V(G), v_1, v_2 \in V(H), (u_1, u_2) \in E(G), (v_1, v_2) \in E(G)\}.$$
\label{def:tensor}
\end{definition}

We prove the following:
\begin{lemma}
Let $G$ and $H$ be connected graphs, and suppose that $H$ is not bipartite. Then 
$$\tw(G \times H) \geq \tw(G).$$
\label{lem:twtensor}
\end{lemma}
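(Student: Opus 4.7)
The plan is to argue via havens (Lemma~\ref{lem:twhaven}) together with the classical Weichsel theorem on tensor products. I use the following consequence of Weichsel's theorem: whenever $H$ is connected and non-bipartite and $C$ is a connected subgraph of $G$ containing at least one edge, the induced subgraph of $G \times H$ on $C \times V(H)$ is connected.

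Let $t = \tw(G) + 1$. By Lemma~\ref{lem:twhaven}, $G$ has a haven $\phi_G$ of order $t$. I would construct a haven $\phi$ of order $t$ in $G \times H$; a second application of Lemma~\ref{lem:twhaven} then gives $\tw(G \times H) \geq t - 1 = \tw(G)$. For any $X \subseteq V(G \times H)$ in the haven's domain, let $X_G = \pi_G(X)$ denote the projection to $V(G)$, so that $|X_G| \leq |X|$, and let $C_G = \phi_G(X_G)$. A standard monotonicity argument on $\phi_G$ shows that unless $|X_G|$ is at the maximum size allowed in the haven's domain, $|C_G| \geq 2$: if instead $C_G = \{u\}$, then $\phi_G(X_G \cup \{u\})$ would be a nonempty component of $G \setminus (X_G \cup \{u\})$ contained in $\{u\}$ yet avoiding $u$, a contradiction. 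In this main case, $C_G$ contains an edge, so $C_G \times V(H)$ is a connected subgraph of $G \times H$ by the consequence of Weichsel stated above, and it is disjoint from $X$ because $C_G \cap X_G = \emptyset$. Accordingly, I define $\phi(X)$ to be the connected component of $(G \times H) \setminus X$ containing $C_G \times V(H)$. The monotonicity property $\phi(X_2) \subseteq \phi(X_1)$ for $X_1 \subseteq X_2$ then follows from the monotonicity of $\phi_G$ (which gives $C_G(X_2) \subseteq C_G(X_1)$) together with the observation that $\phi(X_2)$ is a connected subset of $(G \times H) \setminus X_1$ meeting $C_G(X_1) \times V(H) \subseteq \phi(X_1)$.

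The main obstacle is the boundary case $|C_G| = 1$, say $C_G = \{u\}$, which can occur only when $|X_G|$ equals the maximum size in the haven's domain. Then $C_G \times V(H)$ is an independent set, so the Weichsel-based argument breaks down. I handle this by fixing a distinguished vertex $v^\star \in V(H)$ in advance and defining $\phi(X)$ to be the connected component of $(G \times H) \setminus X$ containing $(u, v^\star)$; this is well-defined, since $u \notin X_G$ implies $(u, v^\star) \notin X$. Monotonicity is preserved: any strict subset $X' \subsetneq X$ satisfies $|X'_G| < |X_G|$ and so falls into the main case with $u \in C_G(X')$ and $|C_G(X')| \geq 2$, whence the already-defined $\phi(X')$ contains all of $\{u\} \times V(H)$ and in particular $(u, v^\star)$, giving $\phi(X) \subseteq \phi(X')$. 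No further check is needed, because two distinct sets $X_1 \subsetneq X_2$ cannot both be in the boundary case (the smaller one automatically lies in the main case). This completes the haven construction and the proof.
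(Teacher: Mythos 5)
Your proof is correct and follows essentially the same route as the paper's: both lift a haven of $G$ to $G \times H$ via the projection $\pi_G$, using the fact that the fiber over a connected component containing an edge is connected because $H$ is connected and non-bipartite (you cite Weichsel's theorem where the paper constructs the odd-length alternating walk explicitly, which is the same fact). Your treatment of the degenerate single-vertex component is in fact somewhat more careful than the paper's, since you pin down exactly when it can occur and verify monotonicity there explicitly.
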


To prove Lemma~\ref{lem:twtensor}, we first define an association between the vertices of $G$ and those of $J = G \times H$.

\begin{definition}
Given the tensor product $J = G \times H$ of graphs $G$ and $H$, define $f:V(J) \rightarrow V(G)$ so that for all $(u, v) \in V(J)$,
$$f((u, v)) = u.$$
Define $g:V(G) \rightarrow \mathcal{P}(V(J))$ so that for all $u \in V(G)$,
$$g(u) = f^{-1}(u) = \{(u, v) \mid v \in V(H)\}.$$
\label{def:fg}
\end{definition}

We use this definition to prove Lemma~\ref{lem:twtensor}. The proof is similar in spirit to the proof of Lemma~\ref{lem:hpdipd}:
\begin{proof}{\emph{(Proof of Lemma~\ref{lem:twtensor})}}
For the lower bound $\tw(G \times H) \geq \tw(G)$, by Lemma~\ref{lem:twhaven}, $G$ has a haven $\phi$ of order $k = \tw(G) + 1$. We construct a haven $\psi$ in $J = G \times H$ of order $k' \geq k$, from which the lemma follows. To define $\psi$, we extend the domains of $f$ and $g$ to sets of vertices in the natural way. That is, for every $X \subseteq V(J)$, let $f(X)$ be the image of all vertices in $X$ under $f$. For every $Y \in V(G)$, let $g(Y)$ be the union of the images under $g$ of all vertices in $Y$.

For all $X \subseteq V(J)$, let $\psi(X)$ be the connected component in $J \setminus X$ containing $g(\phi(f(X)))$. (Let $\psi(\emptyset) = \emptyset$.) It suffices to show that:
\begin{enumerate}
\item $Y' = g(Y)$ is a nonempty connected component in $J \setminus X$ whenever $Y$ is a nonempty connected component in $G \setminus f(X)$,
\item for all $X \subseteq Z \subseteq V(J)$, $\psi(Z) \subseteq \psi(X)$, and
\item for all $X \subseteq V(J)$, $|f(X)| \leq |X|$.
\end{enumerate}

For (1), suppose $Y$ is a connected component in $G \setminus f(X)$ for some $X \subseteq V(J)$. Let $Y' = g(Y)$. If $|Y| > 1$, then consider any edge $(u, w) \in Y$. Then for every pair of vertices $v, x \in V(H)$, the vertices $(u, v)$ and $(w, x)$ are connected by a path $P'$ in $Y'$. To construct this path, consider any walk $P$ along a sequence of vertices $(v, z_1, z_2, \dots, z_l, x)$ of odd length in $H$ from $v$ to $x$. Such a walk must exist since $H$ is not bipartite, i.e. contains an odd cycle. Construct the corresponding path $P'$ in $Y'$ by alternating between copies of $u$ and copies of $w$. That is, let 
$$P' = ((u, v), (w, z_1), (u, z_2), (w, z_3), \dots, (w, z_{l-1}), (u, z_l), (w, x)).$$

Since such a path exists for every edge $(u, v) \in Y$, and $Y$ is connected, $Y'$ is also connected.

We deal with the degenerate case $|Y| = 1$ by letting $Y'$ be a single copy $(u, v)$ of the vertex $u \in Y$, and obtain $\psi$ by extending this copy to a connected component.

For (2), it follows from the definition of $f$ and the fact that $\phi$ is a haven, that $\phi(f(Z)) \subseteq \phi(f(X))$. $\psi$ merely extends $\phi(f(Z))$ and $\phi(f(X))$ to connected components in $J \setminus Z$ and $J \setminus X$ respectively. The connected component $B$ in $J \setminus X$ containing $\phi(f(Z))$ is the same as the connected component in $J \setminus X$ containing $\phi(f(X))$, since both $\phi(f(Z))$ and $\phi(f(X))$ are connected and one is a subset of the other. Furthermore, since $X \subseteq Z$, $J \setminus Z \subseteq J \setminus X$, so removing the additional vertices in $Z \setminus X$ from $B$ cannot result in a connected component with vertices missing from $B$. That is, the connected component $\psi(Z)$ in $J \setminus Z$ containing $\phi(f(Z))$ is a subset of the connected component $\psi(X)$ in $J \setminus X$ containing $\phi(f(X))$.

(3) is immediate from the definition of $f$.
\end{proof}

Lemma~\ref{lem:twdsgg4n} immediately follows from Lemma~\ref{lem:twtensor} and the fact that $G_4^n$ is isomorphic to $\dsg(n) \times K_4$.

\begin{lemma}
\label{lem:twg4nhanoi}
$\tw(H_4^n) = \Omega(\tw(G_4^n))$.
\end{lemma}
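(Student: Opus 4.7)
The plan is to mirror the haven-mapping proof of Lemma~\ref{lem:hpdipd}, transferring a haven in $G_4^n$ to a haven in $H_4^n$ of comparable order. By Lemma~\ref{lem:twhaven} I obtain a haven $\phi$ of order $k = \tw(G_4^n) + 1$ on $G_4^n$. Let $f : \mathcal{P}(V(H_4^n)) \to \mathcal{P}(V(G_4^n))$ send $X$ to the set of pegsets in $V(G_4^n)$ meeting $X$, and let $g : \mathcal{P}(V(G_4^n)) \to \mathcal{P}(V(H_4^n))$ send $X'$ to the configurations lying in some pegset of $X'$. Define $\psi(X)$ to be the connected component of $H_4^n \setminus X$ containing $g(\phi(f(X)))$, with $\psi(X) = \emptyset$ whenever $\phi(f(X)) = \emptyset$.

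The only new ingredient, compared with Lemma~\ref{lem:hpdipd}, is the bound $|f(X)| \leq 4|X|$. For this it suffices to show that each configuration $c = (p_1, \dots, p_n)$ lies in at most four pegsets of $V(G_4^n)$: if a pegset with frozen peg $j$ and frozen-disk set $F$ contains $c$, then property 1 of the pegset definition forces $F \subseteq \{i : p_i = j\}$ and property 2 forces $\{i : p_i = j\} \subseteq F$, so $F = \{i : p_i = j\}$ is determined by $c$ and the choice of $j \in [4]$. Hence $\psi$ has order at least $\lfloor k/4 \rfloor$, and Lemma~\ref{lem:twhaven} then gives $\tw(H_4^n) \geq \lfloor k/4 \rfloor - 1 = \Omega(\tw(G_4^n))$.

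Verification that $\psi$ is a haven follows the three steps of Lemma~\ref{lem:hpdipd}. The key structural input is that each pegset with frozen peg $j$ and frozen set $F$ induces in $H_4^n$ a subgraph isomorphic to $H_3^{\,n - |F|}$, since the allowed moves within the pegset are exactly the moves of the $n - |F|$ unfrozen disks among the three non-$j$ pegs; in particular, this subgraph is connected. Moreover, if two pegsets with frozen pegs $j_1,j_2$ and frozen sets $F_1,F_2$ are adjacent in $G_4^n$ (so $j_1 \neq j_2$ and $F_1 \cap F_2 = \emptyset$), then any configuration placing $F_1$ on $j_1$, $F_2$ on $j_2$, and the remaining disks freely on the other two pegs belongs to both pegsets; this lets me concatenate, along any path of $\phi(f(X))$, walks inside each pegset's induced subgraph, crossing between adjacent pegsets at shared configurations. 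Since no pegset in $\phi(f(X))$ meets $X$ (otherwise it would lie in $f(X)$), the resulting walk stays in $H_4^n \setminus X$, establishing connectivity of $g(\phi(f(X)))$. Monotonicity of $\psi$ follows from monotonicity of $f$, $g$, and $\phi$ by exactly the diagram chase used in Lemma~\ref{lem:hpdipd}.

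The main obstacle is really the pegset-counting step: a careless reading of the definition of $V(G_4^n)$ suggests the number of pegsets containing a given $c$ could be as large as $\sum_{k \leq (n-1)/2} \binom{n}{k} = \Omega(2^n)$, which would destroy the haven-order bound. Property 2 of the pegset definition is what saves the day, by forcing the frozen set to coincide with the disks already on the frozen peg in $c$ and thereby collapsing the count to at most four.
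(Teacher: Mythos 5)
Your proposal is correct and follows essentially the same route as the paper: transfer a haven from $G_4^n$ to $H_4^n$ via the $f$, $g$, $\psi$ construction of Lemma~\ref{lem:hpdipd}, with the order dropping only by the constant factor $4$ because each configuration determines its frozen set once the frozen peg is chosen. Your explicit justification of the "at most four pegsets" count and of the within-pegset/between-pegset connectivity fills in details the paper leaves implicit, but the argument is the same.
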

\begin{proof}
Construct a haven mapping analogous to the mapping in Lemma~\ref{lem:hpdipd}. In Lemma~\ref{lem:hpdipd} we defined $f$ and $g$ as, respectively, mapping sets of configurations to the regular pegsets to which they belong, and mapping sets of regular pegsets to the unions of their configurations. Extend the codomain of $f$ and the domain of $g$, beyond regular pegsets, to the set of all pegsets in $G_4^n$. The rest of the argument is similar to the proof of Lemma~\ref{lem:hpdipd}. Again we need to check the following conditions:

\begin{enumerate}
\item for all $X \subseteq V(H_p^n)$, $\psi(X)$ is well-defined\textemdash i.e. $g(\phi(f(X)))$ is connected and nonempty whenever $\phi(f(X))$ is nonempty,
\item for $Z \subseteq V(H_p^n)$, $\psi(Z) \subseteq \psi(X)$ whenever $X \subseteq Z$, and
\item $|X| = \Omega(f(X))$.
\end{enumerate}

(3) is easy since every configuration belongs to at most four pegsets. The reasoning for (1) is identical to that in the proof of Lemma~\ref{lem:hpdipd}. For (2), the reasoning is also the same.
\end{proof}

Theorem~\ref{thm:maintwp4} follows from Lemma~\ref{lem:twdsg}, Lemma~\ref{lem:twdsgg4n}, and Lemma~\ref{lem:twg4nhanoi}.
\end{document}